\newcommand{\bflam}{\boldsymbol{\lambda}}
\newcommand{\bflamp}{\boldsymbol{\lambda'}}
\numberwithin{equation}{section}
\newtheorem{theorem}{Theorem}[section]
\newtheorem{lemma}[theorem]{Lemma}
\newtheorem{corollary}[theorem]{Corollary}
\theoremstyle{definition}
\newtheorem{remark}{Remark}
\newtheorem{proposition}[theorem]{Proposition}
\newtheorem{definition}[theorem]{Definition}
\newtheorem{problem}{Problem}[section]
\newtheorem{claim}{Claim}
\newtheorem*{claim*}{Claim}
\renewcommand{\epsilon}{\varepsilon}
\title{Two-State Spin Systems with Negative Interactions\thanks{A preliminary short version of this paper, without Sections \ref{sec:hardness} or \ref{sec:five}, appeared in the proceedings of the conference ITCS 2024. For the purpose of Open Access, the authors have applied a CC BY public copyright licence to any Author Accepted Manuscript version arising from this submission. All data is provided in full in the results section of this paper.}}
\author{Yumou Fei\thanks{School of Mathematical Sciences, Peking University} \and Leslie Ann Goldberg \thanks{Department of Computer Science, University of Oxford} \and Pinyan Lu \thanks{Shanghai University of Finance and Economics; Key Laboratory of Interdisciplinary Research of Computation and Economics (SUFE), Ministry of Education; supported by National Key R\&D Program of China (2023YFA1009500)} }
\date{13 June 2025}
\begin{document}
\maketitle

\begin{abstract}
We study the approximability of computing the partition functions of two-state spin systems. The problem is parameterized by a $2\times 2$ symmetric matrix. Previous results on this problem were restricted either to the case where the matrix has non-negative entries, or to the case where the diagonal entries are equal, i.e. Ising models. In this paper, we study the generalization to arbitrary $2\times 2$ interaction matrices with real entries. We show that in some regions of the parameter space, it's \#P-hard to even determine the sign of the partition function, while in other regions there are fully polynomial approximation schemes for the partition function. Our results reveal several new computational phase transitions.
\end{abstract}

\section{Introduction}
Spin systems are widely studied in statistical physics, probability theory and theoretical computer science. They can express many natural graph invariants such as the number of independent sets or the number of $k$-colorings, as well as spin models of statistical physics such as the Ising model or the Potts model.

\subsection{The Problem}\label{subsec:problem}
The partition function of a $q$-state spin system can be parameterized by a symmetric matrix $A\in \mathbb{R}^{q\times q}$. It associates with every graph $G=(V,E)$ the real number
$$Z(G;A)=\sum_{\sigma\in [q]^{V}}\prod_{\{u,v\}\in E}A_{\sigma(u),\sigma(v)}.$$

\begin{remark}
Throughout the paper, the word ``graph'' refers to undirected multigraph permitting self-loops and parallel edges.
\end{remark}

Fixing a symmetric matrix $A$, the complexity of exactly computing $Z(G;A)$ given input $G$ was studied and settled by \cite{dyer2000complexity} (for $A$ with $0/1$ entries), \cite{bulatov2005complexity} (for $A$ with nonnegative entries), \cite{goldberg2010complexity} (for $A$ with real algebraic entries), and \cite{cai2013graph} (for $A$ with complex algebraic entries). They proved the remarkable ``dichotomy theorem'', which states that either computing $Z(G;A)$ can be done in polynomial time or it is \#P-hard, and the class of tractable matrices $A$, although lacking a simple explicit characterization, is polynomial-time decidable. 

In this paper, we study the problem of \textit{approximately} computing $Z(G;A)$. For simplicity of handling models of computation, we restrict our attention to rational numbers. We will deal exclusively with two-state spin systems ($q=2$), as they already appear challenging enough:

\begin{problem}\label{prob:main}
For which symmetric matrices $A=\begin{bmatrix}
A_{00} & A_{01}\\ A_{10} & A_{11}\end{bmatrix}\in\mathbb{Q}^{2\times 2}$ is approximately computing $Z(G;A)$ tractable?
\end{problem}

If $A_{01}=A_{10}=0$, it is easy to see that $Z_G$ can be computed exactly in polynomial time (see also \cite{bulatov2005complexity}). In the following, assume $A_{01}=A_{10}\neq 0$, and we normalize the matrix $A$ so that $A_{01}=A_{10}=1$. Then $A$ is given by two parameters $A_{00}=\beta$ and $A_{11}=\gamma$. Whenever $\beta$ and $\gamma$ are fixed, we abbreviate $Z(G;A)$ to $Z_{G}$.

\Cref{prob:main} is well studied for nonnegative matrix entries. In the nonnegative quadrant $\beta,\gamma\geq 0$, 
\cite{goldberg2003computational} gave an FPRAS for the ``ferromagnetic'' case $\beta\gamma\geq 1$. The ``antiferromagnetic'' case $\beta\gamma<1$ was later very much settled by a series of work \cite{goldberg2003computational, weitz2006counting, sly2010computational, sly2012computational, li2013correlation, sinclair2014approximation, galanis2016inapproximability}. They proved a computational phase transition that coincides with the boundary of the ``uniqueness region'' (uniqueness of Gibbs measure on infinite regular trees). Their results in fact extend much beyond \Cref{prob:main}: the computational phase transition for the anti-ferromagnetic case holds even when external fields are allowed. 

However, much less is known about \Cref{prob:main} when $\beta$ or $\gamma$ is negative. The only existing results in this direction are about the Ising model, which means the special case $\beta=\gamma$. Embedded in a broader study about Tutte polynomials, the following 
theorems from \cite{goldberg2014complexity} and \cite{goldberg2007inapproximability} classified the approximation complexity of Ising partition functions with negative $\beta$:   
\begin{proposition}[Corollary 28 of \cite{goldberg2014complexity}]\label{prop:isingsharpP}
Fix rational numbers $\beta,\gamma$ such that $\beta=\gamma\in(-1,0)$. It is \#P-hard to determine the sign of the partition function $Z_{G}$, given an input graph $G$.
\end{proposition}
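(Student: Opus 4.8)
\emph{Proof strategy.} The first move is to express $Z_G$ as a combinatorial generating polynomial evaluated at a fixed point. Reading a spin assignment $\sigma\colon V\to\{0,1\}$ as a cut with $c(\sigma)$ bichromatic edges, each monochromatic edge contributes $\beta$ and each bichromatic edge contributes $1$, so
\[
Z_G=\sum_{\sigma}\beta^{\,|E|-c(\sigma)}=\beta^{|E|}\,C_G(1/\beta),\qquad\text{where }\ C_G(w):=\sum_{\sigma}w^{c(\sigma)}=\sum_{k\ge 0}a_k(G)\,w^{k},
\]
the coefficients $a_k(G)$ being nonnegative integers with $a_0(G)\ge1$, $\deg C_G=\mathrm{MaxCut}(G)\le|E|$ and $a_k(G)\le 2^{|V|}$. (Equivalently, by the high-temperature expansion $Z_G$ is a positive multiple of the even-subgraph polynomial at $\tau=(\beta-1)/(\beta+1)$, hence a Tutte-polynomial evaluation on the hyperbola $(x-1)(y-1)=2$.) Since $\beta\in(-1,0)$ we have $w:=1/\beta<-1$ and $\operatorname{sign}(\beta^{|E|})=(-1)^{|E|}$, so determining $\operatorname{sign}(Z_G)$ is exactly the problem of determining $\operatorname{sign}\bigl(C_G(w)\bigr)$ for a fixed rational $w<-1$.

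Exact computation of $Z_G$ at this parameter is already \#P-hard — it does not lie in the tractable class of the exact-computation dichotomy recalled in \Cref{subsec:problem} — so it suffices to show that an oracle for $\operatorname{sign}(Z_{(\cdot)})$ at our fixed $\beta$ lets one recover $Z_G$ exactly, i.e.\ recover the integer polynomial $C_G$. The plan is to build, from the single input graph $G$, a family of auxiliary graphs whose signed partition functions determine the $a_k(G)$. Two graph operations move the effective Ising parameter in a controlled way: replacing each edge by $k$ parallel copies (``thickening'') changes $\beta$ to $\beta^{k}$, so the oracle applied to the $k$-thickening of $G$ returns $\operatorname{sign}\bigl(C_G(w^{k})\bigr)$ for every $k\ge1$; replacing each edge by a path of length $\ell$ (``stretching'') changes $\beta$ to a value $\beta_\ell$ determined by $\ell$; and disjoint unions (or vertex identifications) multiply partition functions. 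One then wants to combine these operations — together with richer gadgets — and use that $C_G$ is an integer polynomial of degree $\le|E|$ with coefficients $\le 2^{|V|}$ to reconstruct it by interpolation from sign data, or to reduce directly from a \#P-complete counting problem such as counting the cuts of $G$ of a prescribed size.

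The main obstacle is that the oracle returns a single bit, while the set of effective parameters actually reachable is tightly constrained by the $\mathbb{Z}_2$ spin-flip symmetry of the Ising interaction: every subgraph gadget computes a spin-flip-invariant function, so no external field can be simulated, and pure thickening or pure stretching each push the effective parameter toward the tractable corners of the Tutte plane rather than toward a hard point. In particular the naive amplification — thicken $G$ by a polynomially large $k$ so that $C_G(w^{k})$ is dominated by its top term, whence $\operatorname{sign}\bigl(C_G(w^{k})\bigr)=(-1)^{\mathrm{MaxCut}(G)}$ up to the known factor — extracts only the parity of $\mathrm{MaxCut}(G)$, far too little to recover $C_G$ or to encode a counting problem. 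Overcoming this requires non-series-parallel, spin-flip-symmetric gadgets that enlarge the reachable parameter set, arranged so that the single sign bit encodes a genuinely \#P-hard quantity — for instance via a reduction that exploits cancellation to isolate the number of cuts of a given size, or a binary search for such a count implemented through gadgetry. Designing these gadgets and controlling their effect is the technical heart of the argument.
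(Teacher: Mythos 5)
The paper does not prove this proposition itself --- it is imported as Corollary~28 of \cite{goldberg2014complexity} --- but the generalization developed in Section~\ref{sec:hardness} makes clear what such a proof must contain, and your proposal does not contain it. Your setup is sound: $Z_G=\beta^{|E|}C_G(1/\beta)$ is correct, the Tutte-plane viewpoint is the right one, and you correctly identify both central obstacles --- that the oracle returns a single bit, and that the $\mathbb{Z}_2$ spin-flip symmetry of the Ising interaction forces $[Z_{G,v}]_0=[Z_{G,v}]_1$ for every pendant gadget, so no external field can be simulated. You also correctly see that pure thickening extracts only the parity of $\mathrm{MaxCut}(G)$. But the proposal then stops exactly where the proof begins: your closing sentence concedes that designing the gadgets and the sign-to-count reduction ``is the technical heart of the argument,'' and that heart is absent.

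The target you chose is also misdirected. You propose to recover the entire integer polynomial $C_G$ from sign data, which is both harder than necessary and not obviously achievable: polynomially many sign bits at thickened/stretched effective parameters do not, without an interpolation mechanism that you do not supply, determine the coefficients $a_k(G)$. What \cite{goldberg2014complexity} actually does (mirrored in this paper's Theorem~\ref{thm:sharpPhard} for the non-Ising range) is reduce from \textsc{\#Minimum Cardinality $(s,t)$-Cut}: edges of the instance are replaced by two-terminal gadgets realizing a near-ferromagnetic Ising interaction $\begin{bmatrix}M_0 & 1\\ 1 & M_1\end{bmatrix}$ with $M_1/M_0$ controlled to exponential accuracy in the input size, and a binary search driven by the sign oracle, applied to a family of derived graphs, pins down a single integer (the number of minimum cuts), not the whole cut polynomial. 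Constructing those gadgets, proving the exponential-accuracy bound (the analogue of Theorem~\ref{thm:exponentialaccuracy} here), and running and certifying the binary search is precisely the content your writeup leaves as a gap. As written, it is an accurate description of the problem the hardness proof must solve, not a solution to it.
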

\begin{proposition}[Lemma 7 of \cite{goldberg2007inapproximability}]\label{prop:PMequivalent}
Fix rational numbers $\beta,\gamma$ such that $\beta=\gamma<-1$. Approximating the partition function $Z_{G}$ for an input graph $G$ is
equivalent to approximately counting perfect matchings
in general graphs 
in the sense that there are approximation-preserving reductions between these problems, implying that either both problems have an FPRAS or neither problem has an FPRAS. 
Whether approximately counting perfect matchings is tractable or not is a central open question in the area. 
\end{proposition}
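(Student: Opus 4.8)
The plan has two stages. The first rewrites $Z_G$, for $\beta=\gamma<-1$, as an easy scalar times a manifestly positive combinatorial quantity; the second establishes the AP-interreducibility of that quantity with counting perfect matchings in general graphs.

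\emph{Stage 1 (high-temperature expansion).} Encoding spins as $\pm1$ and writing the edge weight as $A_{\sigma(u),\sigma(v)}=\tfrac{\beta+1}{2}+\tfrac{\beta-1}{2}\,\sigma(u)\sigma(v)$, I would expand the product over edges and sum out the spins to obtain
$$
Z_G \;=\; 2^{|V|}\Bigl(\tfrac{\beta+1}{2}\Bigr)^{|E|} P(G;x),
\qquad
P(G;x):=\!\!\sum_{\substack{S\subseteq E:\ \text{all}\\ \text{degrees even}}}\!\! x^{|S|},
\qquad x:=\frac{\beta-1}{\beta+1}.
$$
When $\beta=\gamma<-1$ one checks that $x$ is a rational number in $(1,\infty)$ (and conversely every rational $x>1$ arises this way), that $P(G;x)>0$, and hence that $\operatorname{sign}(Z_G)=(-1)^{|E|}$ while $|Z_G|$ is a fixed, polynomial-time-computable positive multiple of $P(G;x)$. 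Thus approximating $Z_G$ is AP-interreducible with approximating the even-subgraph polynomial $G\mapsto P(G;x)$ at the corresponding fixed $x>1$, and it remains to interreduce the latter with $\#\mathrm{PM}$.

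\emph{Stage 2, forward direction ($P(\cdot;x)\le_{\mathrm{AP}}\#\mathrm{PM}$).} From $G$ I would build a graph $G'$ by a Fisher-gadget substitution: replace each degree-$d$ vertex $v$ by a gadget $\Gamma_v$ with $d$ ports such that, for every subset $I$ of its ports, $\Gamma_v$ with the ports of $I$ deleted has exactly one perfect matching if $|I|$ is even and none otherwise (a suitable $3$-port gadget has this property — e.g.\ three pendant port-vertices attached to an internal triangle — and degree-$d$ gadgets are obtained by chaining such $3$-port gadgets, so the ``exactly one'' count is preserved); and replace each edge $e=\{u,v\}$ of $G$ by a small two-port weight gadget joining a port of $\Gamma_u$ to a port of $\Gamma_v$ whose two internal states — both ports matched externally (``$e$ selected'') versus both matched internally (``$e$ unselected'') — have internal perfect-matching counts in the exact ratio $x:1$; since $x$ is rational this is realizable by series/parallel combinations of paths (and if one insists on unweighted $\#\mathrm{PM}$, the standard weighted-to-unweighted reduction applies). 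A perfect matching of $G'$ then amounts to choosing an even subgraph $S$ of $G$ together with the forced internal matchings, so $\#\mathrm{PM}(G')$ is a known scalar times $P(G;x)$.

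\emph{Stage 2, converse direction ($\#\mathrm{PM}\le_{\mathrm{AP}}P(\cdot;x)$).} Given $H$, first test in polynomial time whether $H$ has a perfect matching, outputting $0$ if not. Otherwise subdivide every edge of $H$ into a path of $\ell$ edges (parameter $\ell$ to be fixed) and replace each degree-$d$ vertex by the complete-graph gadget $K_{d+1}$ with $d$ of its vertices designated as ports; then, writing $M$ for the complement within the original edge-paths at each vertex of a valid even subgraph, even subgraphs of the new graph biject with $V$-joins (odd-degree subgraphs) $M$ of $H$, weighted by $x^{\ell(|E(H)|-|M|)}$ times a bounded gadget factor. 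Since $x>1$, the minimum-size $V$-joins — the perfect matchings — dominate; competing $V$-joins of size $|V|/2+k$ with $k\ge1$ are suppressed by $x^{-\ell k}$ while their multiplicity and gadget factors are at most exponential in $|H|$, so choosing $\ell$ polynomially large (using $\#\mathrm{PM}(H)\ge1$) makes $P$ of the new graph a known scalar times $\#\mathrm{PM}(H)$ up to relative error below any prescribed threshold, recoverable from one oracle call. Composing with Stage 1 and the forward direction yields interreducibility in both directions. The routine part is Stage 1; the real work is the gadget engineering in Stage 2 — verifying (by a finite but careful case analysis) that the vertex gadget enforces the even-degree, respectively complemented-$V$-join, constraint with a perfect-matching count that does not distort the polynomial, and realizing the rational edge weight $x$ purely by counting — and the subtlest point is recognizing the even-subgraph / $V$-join-complement dictionary in the converse direction and checking that the amplifying subdivision costs only a polynomial blow-up.
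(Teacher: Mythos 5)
The paper does not prove this proposition; it cites it as Lemma 7 of \cite{goldberg2007inapproximability}, so there is no in-paper proof to compare against, and your proposal should be read as an attempted reconstruction of that external result. Your Stage 1 is correct and is essentially the subgraphs-world transformation the paper itself uses in Section 4.3: for $\beta=\gamma<-1$ one has $Z_G=2^{|V|}\bigl(\tfrac{\beta+1}{2}\bigr)^{|E|}P(G;x)$ with $x=\tfrac{\beta-1}{\beta+1}\in(1,\infty)$, so $\mathrm{sgn}(Z_G)=(-1)^{|E|}$ and $|Z_G|$ is an efficiently computable positive scalar times the even-subgraphs polynomial.

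Your converse reduction in Stage 2 has a genuine gap. The $K_{d+1}$ vertex gadget enforces that an \emph{even} number of the incident edge-paths are selected at each vertex: the non-port vertex of $K_{d+1}$ must have even degree in the even subgraph, and the sum of degrees inside the gadget is even, so the parity of the number of ``out'' ports is forced even. Consequently even subgraphs of your constructed graph correspond to \emph{even} subgraphs $S$ of $H$, and $M=E(H)\setminus S$ is a $T$-join for $T=\{v:\deg_H(v)\ \text{odd}\}$ --- a $V$-join only when $H$ has all degrees odd, which you do not arrange. Without that, the claimed bijection with $V$-joins, and hence the identification of the dominant term with perfect matchings, simply fails. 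The gap is repairable (if $H$ has a perfect matching then its even-degree vertex set has even size, so adding a vertex $g_1$ adjacent to every even-degree vertex plus a pendant $g_2$ on $g_1$ makes all degrees odd without changing $\#\mathrm{PM}$), but this step is absent. Note also that the $K_{d+1}$ gadget contributes nothing to the parity constraint beyond what plain subdivision already gives; its only effect is a degree-dependent multiplicative weight factor, which can be exponentially large in $d^2$ and whose control (that it is equal across all perfect matchings and efficiently computable) is left unverified. Finally, the two-port edge gadget realizing a perfect-matching ratio $x:1$ with no mixed-state matchings is asserted without construction. These are fixable, but as written Stage 2's converse direction does not go through.
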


Note that at the point $(\beta,\gamma) = (-1,-1)$, $Z_G$ can be computed exactly in polynomial time ($Z_{G}$ is $2^{|V(G)|}$ if all vertex degrees are even and 0 otherwise).

\subsection{Our Results}\label{subsec:results}

In this paper, we explore \Cref{prob:main} in the case $\min\{\beta,\gamma\}<0$. In \Cref{sec:hardness}, we will prove the following generalization of \Cref{prop:isingsharpP}:
\begin{theorem}\label{thm:sharpPhard}
Fix rational numbers $\beta,\gamma$ such that $\min\{\beta,\gamma\}<0$ and $-2<\beta+\gamma <1$, but $(\beta,\gamma)\not\in\{(1,-1),(-1,1)\}$. It is \#P-hard to determine the sign of the partition function $Z_{G}$, given an input graph $G$.
\end{theorem}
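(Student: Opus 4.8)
The plan is to reduce from the known hard case (Proposition~\ref{prop:isingsharpP}): determining the sign of the Ising partition function $Z_G$ when $\beta=\gamma\in(-1,0)$. The strategy is to show that, for every target pair $(\beta,\gamma)$ in the region described by the theorem, one can take an arbitrary input graph $G$ and efficiently build a new graph $G'$ (depending on $\beta,\gamma$ and on a fixed Ising parameter $b\in(-1,0)$) such that $Z(G';(\beta,\gamma))$ equals $Z(G;(b,b))$ times a positive, easily-computed scalar; determining the sign of the former then determines the sign of the latter. The workhorses will be the standard gadget constructions for two-state spin systems: \emph{series composition} (replacing an edge by a path), \emph{parallel composition} (replacing an edge by several parallel edges), and \emph{vertex stretching / degree gadgets} that attach pendant structures to a vertex, which have the effect of applying an ``external field''. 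Each gadget transforms the pair $(\beta,\gamma)$ by an explicit rational map; composing them lets us realize a wide family of effective interactions starting from the fixed physical one $(\beta,\gamma)$.

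Concretely, I would first record the transformation rules. A parallel bundle of $k$ edges turns $(\beta,\gamma)$ into $(\beta^k,\gamma^k)$. A length-$2$ series path with the middle vertex summed out turns $(\beta,\gamma)$ into effective edge weight proportional to the matrix $\begin{bmatrix}\beta^2+1 & \beta+\gamma\\ \beta+\gamma & \gamma^2+1\end{bmatrix}$; more generally, composing a sequence of such operations corresponds to multiplying the corresponding $2\times2$ matrices. Pendant vertices (a degree-$1$ vertex attached via an edge, summed out) multiply one spin value of its neighbour by $\beta+1$ and the other by $\gamma+1$, i.e.\ impose a field $(\beta+1,\gamma+1)$; iterating and combining with series/parallel gadgets lets us tune the field and the two diagonal entries quasi-independently. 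The combinatorial heart is to prove that, under the hypotheses $\min\{\beta,\gamma\}<0$, $-2<\beta+\gamma<1$, $(\beta,\gamma)\notin\{(1,-1),(-1,1)\}$, the semigroup generated by these rational maps contains (after clearing a positive common factor) a point of the form $(b,b,\text{uniform field})$ with $b\in(-1,0)$ --- equivalently, an Ising interaction in the hard antiferromagnetic window. The condition $\beta+\gamma<1$ should be what forces the relevant iterated maps to stay ``antiferromagnetic'' (product of effective off-diagonal over geometric mean of diagonals stays below $1$), $\beta+\gamma>-2$ should rule out the degenerate Fibonacci-type collapse, and the excluded two points are exactly where the gadget algebra becomes singular (note $(1,-1)$ and $(-1,1)$ have $\beta+\gamma=0$ but the matrix is essentially a permutation/$\pm1$ matrix, giving $Z_G=\pm$ something of trivial sign).

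I expect the main obstacle to be the case analysis in this last step: the region in the $(\beta,\gamma)$-plane is genuinely two-dimensional and not symmetric, so different sub-regions (e.g.\ one of $\beta,\gamma$ positive versus both negative; $|\beta\gamma|$ large versus small; $\beta+\gamma$ near the endpoints $-2$ and $1$) will likely need different gadget recipes, and one must verify in each that the needed fixed points or images of the rational maps land strictly inside the open target set (so that a finite, polynomially-bounded gadget suffices and rationality/efficiency are preserved). A secondary technical point is bookkeeping the accumulated positive prefactors and confirming they are genuinely positive (so no sign information is lost) and polynomial-time computable in the size of $G$; this is routine but must be done carefully since the whole theorem is about signs. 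If a single clean gadget family cannot cover the whole region, a fallback is to first use gadgets to move $(\beta,\gamma)$ onto the Ising diagonal $\beta=\gamma$ and then invoke Proposition~\ref{prop:isingsharpP} directly, handling separately the off-diagonal sub-regions that cannot reach the diagonal by showing they instead reduce to a point already known to be hard.
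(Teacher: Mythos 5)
Your proposal takes a genuinely different route from the paper, but I believe it contains a gap that is likely fatal.

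The central step of your plan is to compose series/parallel/pendant gadgets so as to land \emph{exactly} on an Ising pair $(b,b)$ with $b\in(-1,0)$, so that $Z(G';(\beta,\gamma))$ equals a known positive scalar times $Z(G;(b,b))$, and then invoke Proposition~\ref{prop:isingsharpP}. The difficulty is that each gadget operation is a fixed rational map of $(\beta,\gamma)$ (after normalizing off-diagonals to $1$), so the set of effective interaction pairs reachable by finite gadgets from a fixed non-Ising $(\beta,\gamma)$ is a countable set of rational points. The Ising diagonal $\{\beta'=\gamma'\}$ is a codimension-one subvariety, and generically this countable set has no reason to intersect it: there is no algebraic mechanism forcing the two diagonal entries of a series/parallel/pendant composite to become equal when you start from $\beta\neq\gamma$. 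Your fallback --- first move onto the diagonal, else ``handle separately'' --- does not close this gap, because essentially every $(\beta,\gamma)$ with $\beta\neq\gamma$ is in the ``cannot reach the diagonal'' case, so the fallback \emph{is} the whole theorem.

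You might instead aim for an \emph{approximate} Ising simulation, but that does not help for sign-determination: the statement being reduced from (Proposition~\ref{prop:isingsharpP}) has no promise gap, so an arbitrarily small multiplicative perturbation of the effective edge weight can flip the sign of $Z_G$ on some inputs, and the reduction loses soundness. The paper circumvents both obstacles at once. It does not reduce from the Ising sign problem at all; it reduces from \textsc{\#Minimum Cardinality $(s,t)$-Cut}. It only needs to realize, with polynomially many bits of precision, an \emph{approximately} symmetric \emph{ferromagnetic} matrix $N\begin{bmatrix}M_0&1\\1&M_1\end{bmatrix}$ with $M_0,M_1$ huge and $M_1/M_0$ exponentially close to $1$ (Lemma~\ref{lem:realizeIsing}, built on the exponential-accuracy realization of arbitrary ratios in Theorem~\ref{thm:exponentialaccuracy}). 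Because the contributions of $(s,t)$-configurations to $[Z_{G',s,t}]_{1,0}$ scale as $M^{m-k}$ where $k$ is the cut size, the minimum cuts dominate by an exponential margin, which absorbs the approximation error; a binary search driven by the sign oracle (Algorithm~\ref{alg:binarysearch}) then pins down both $k$ and the number $C$ of minimum cuts. So the paper's argument converts the lack of exact gadget control into a non-issue by choosing a target counting problem whose answer is readable off the \emph{order of magnitude} of the partition function, whereas your proposal needs exact (or sign-stable) control that the gadget algebra does not provide.
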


Of course Theorem~\ref{thm:sharpPhard} has ramifications for the complexity of approximating~$Z_G$. In particular, an FPRAS for approximating~$Z_G$ gives a polynomial-time randomised algorithm for computing the sign of~$Z_G$, which is not possible assuming that \#P-hard problems cannot be solved in randomised polynomial time.

Note that when $(\beta,\gamma) \in \{(1,-1),(-1,1)\}$,
$Z_G$ can be computed exactly in polynomial time~\cite[Theorem 1.2]{goldberg2010complexity}.

It is then of great interest to find whether the two lines $\beta+\gamma=-2$ and $\beta+\gamma=1$ are actual thresholds of approximation complexity. The following two theorems, both of which will be proved in \Cref{sec:approx}, show that the former line is indeed an actual threshold:
\begin{theorem}\label{thm:FPTAS}
Fix rational numbers $\beta,\gamma$ such that $\beta\neq \gamma$ and $|\beta+\gamma|>2$. For any positive integer $\Delta$, there is an FPTAS for $Z_{G}$, where $G$ is an input graph of maximum degree no more than $\Delta$ (without the bounded degree requirement, there is a quasi-polynomial time approximation scheme).
\end{theorem}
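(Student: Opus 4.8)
The plan is to use Barvinok's polynomial-interpolation method. For an input graph $G=(V,E)$, let $Z_G(\lambda)$ denote the partition function of the two-state system whose interaction matrix is obtained from the target matrix by replacing the diagonal entries $\beta,\gamma$ with $1+\lambda(\beta-1)$ and $1+\lambda(\gamma-1)$ respectively (keeping the off-diagonal entries equal to $1$). Then $Z_G(\lambda)$ is a polynomial in $\lambda$ of degree at most $|E|$, with $Z_G(0)=2^{|V|}$ and $Z_G(1)=Z_G$. The standard machinery then says: if there is an open set $\Omega\subseteq\mathbb C$ containing $[0,1]$ on which $Z_G(\lambda)\ne 0$ for all relevant $G$, then $\log Z_G(\lambda)$ is analytic near $[0,1]$, its low-order Taylor coefficients at $0$ are determined by the leading coefficients of $Z_G(\lambda)$ --- equivalently, by counts of bounded-size subgraphs of $G$ --- and truncating its Taylor expansion after $O(\log(|V|/\epsilon))$ terms produces an additive $\epsilon$-approximation of $\log Z_G(1)$, hence a multiplicative $(1\pm\epsilon)$-approximation of $Z_G$ together with its sign (the imaginary part of the estimate for $\log Z_G(1)$, tracked continuously from $\lambda=0$ where $Z_G(0)>0$, distinguishes $Z_G>0$ from $Z_G<0$). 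Computing the required subgraph counts takes $|V|^{O(\log(|V|/\epsilon))}$ time for general $G$ --- this gives the quasi-polynomial scheme --- and $\mathrm{poly}(|V|/\epsilon)$ time when $G$ has maximum degree at most $\Delta$, by the Patel--Regts algorithm --- this gives the FPTAS.

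Thus the theorem reduces to a zero-freeness statement, which is where $\beta\ne\gamma$ and $|\beta+\gamma|>2$ enter: there is an open neighbourhood $\mathcal N$ (possibly depending on $\Delta$) of the real segment from $(1,1)$ to $(\beta,\gamma)$ in parameter space such that $Z(G;\hat\beta,\hat\gamma)\ne 0$ for every $(\hat\beta,\hat\gamma)\in\mathcal N$. For the quasi-polynomial scheme it is enough that the ``width'' of $\mathcal N$ shrink at most polylogarithmically in $\Delta$, since a general $n$-vertex graph has $\Delta\le n-1$. The segment meets the diagonal $\hat\beta=\hat\gamma$ only at $(1,1)$, where $Z_G=2^{|V|}\ne 0$; this is essential, because along the diagonal with $\hat\beta=\hat\gamma<-1$ one has $Z(G;\hat\beta,\hat\beta)$ proportional to a perfect-matchings-type polynomial (compare \Cref{prop:PMequivalent}), which vanishes for many graphs $G$. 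So $\mathcal N$ must be a thin tube around the segment that stays clear of the diagonal.

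For the zero-freeness I would run the usual tree-recursion argument. For a rooted tree, writing $Z_v^0,Z_v^1$ for the subtree partition functions conditioned on the spin at $v$, the ratios $R_v=Z_v^0/Z_v^1$ satisfy $R_v=\prod_u M(R_u)$ over the children $u$ of $v$, where $M(z)=\tfrac{\hat\beta z+1}{z+\hat\gamma}$; a leaf has $R=1$, and $Z_T=Z_r^1(1+R_r)$. Via Weitz's self-avoiding-walk-tree reduction, $Z_G\ne 0$ follows once these ratios are controlled on the bounded-degree SAW tree of $G$ with its boundary pinnings. Concretely, one seeks a region $\mathcal K=\mathcal K(\hat\beta,\hat\gamma,\Delta)\subseteq\mathbb C$ with $\{0,1\}\subseteq\mathcal K$, $-1\notin\overline{\mathcal K}$, $-\hat\gamma\notin\overline{\mathcal K}$, and such that $z_1,\dots,z_d\in\mathcal K$ with $d\le\Delta$ forces $\prod_{i=1}^d M(z_i)\in\mathcal K$; an induction up the tree then puts every $R_v$ in $\mathcal K$ (so all denominators are nonzero and $1+R_r\ne 0$), yielding $Z_G\ne 0$. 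The hypothesis $|\beta+\gamma|>2$ should surface both in the base cases --- a single edge has $Z=\beta+\gamma+2$ and a star $K_{1,d}$ has $Z=(\beta+1)^d+(\gamma+1)^d$, each bounded away from $0$ precisely when $\beta+\gamma\ne-2$ and $(\beta,\gamma)\ne(-1,-1)$ --- and, more substantively, in the geometry that lets such a $\mathcal K$ exist at all; in the regime $\beta+\gamma<-2$ one may first pull out the sign factor $(-1)^{|E|}$ by negating the matrix, reducing to a matrix with positive diagonal and off-diagonal $-1$.

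The crux is constructing and verifying $\mathcal K$: one must simultaneously keep the pole $z=-\hat\gamma$ of $M$ (and the forbidden value $-1$) outside $\mathcal K$ --- delicate when $\gamma<0$, since then $-\hat\gamma>0$ lies on the ``same side'' as the natural positive ratios --- make $\mathcal K$ large enough to be closed under $\Delta$-fold products after applying $M$ (which pushes $\mathcal K$ toward a thin cone and couples its width to $\Delta$), and still keep the associated parameter tube $\mathcal N$ genuinely open around the whole segment rather than only near its endpoints. The remaining steps --- bounding the width of $\mathcal N$ in terms of $\Delta$, and analysing the complexity of the subgraph-count computation feeding the interpolation --- are comparatively routine.
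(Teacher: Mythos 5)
Your approach is genuinely different from the paper's: you propose to interpolate the interaction matrix entries from $(1,1)$ to $(\beta,\gamma)$ and to establish zero-freeness via a Weitz-style self-avoiding-walk-tree recursion, whereas the paper interpolates a uniform external field $x$ from $0$ to $1$ (i.e.\ works with the polynomial $Z_G(x)$ of \Cref{subsec:notations}) and proves zero-freeness of a disk $\{|x|<r\}$ with $r>1$ via Asano--Ruelle contraction, reducing everything to the single inequality $\gamma z_1 z_2 + z_1 + z_2 + \beta\neq 0$ for $|z_1|,|z_2|<r$.

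There is, however, a genuine and fatal gap in your choice of interpolation path. Take $\beta+\gamma<-2$. As $\lambda$ runs over $[0,1]$, $\hat\beta(\lambda)+\hat\gamma(\lambda)=2+\lambda(\beta+\gamma-2)$ decreases continuously from $2$ to $\beta+\gamma<-2$, so the segment sweeps across the entire strip $-2<\hat\beta+\hat\gamma<1$. Near the crossing of $\hat\beta+\hat\gamma=-2$, since $\hat\beta(\lambda)-\hat\gamma(\lambda)=\lambda(\beta-\gamma)\neq 0$ for $\lambda>0$, one of the two diagonal entries is strictly below $-1$; so for $\lambda$ slightly below that crossing the point $(\hat\beta,\hat\gamma)$ lies in the open region $\{-2<\hat\beta+\hat\gamma<1,\ \min\{\hat\beta,\hat\gamma\}<0\}\setminus\{(\pm1,\mp1)\}$. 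By \Cref{thm:sharpPhard} together with the Remark after \Cref{thm:positivity}, at any such point there is a graph $G^*$ with $Z(G^*;\hat\beta,\hat\gamma)<0$; since $Z_{G^*}(0)=2^{|V(G^*)|}>0$, the intermediate value theorem forces $Z_{G^*}(\lambda_0)=0$ for some $\lambda_0\in(0,1)$. Hence no open $\Omega\supset[0,1]$ can be zero-free for all graphs and Barvinok interpolation does not apply to your $Z_G(\lambda)$. You flag only the diagonal $\hat\beta=\hat\gamma$ as a danger, but the hard region is two-dimensional and your segment cuts through its interior, not merely skirts the diagonal. The sign-negation remark does not repair this: negating the matrix leaves the normalized family (the off-diagonal becomes $-1$), and since $\beta+\gamma<-2$ does not force both diagonal entries negative (e.g.\ $\beta=1,\gamma=-4$), negation does not even produce a nonnegative matrix. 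Finally, even in the range $\beta+\gamma>2$ where the segment does stay in $\{\hat\beta+\hat\gamma\geq 2\}$, the real positivity of \Cref{thm:positivity} does not supply a complex zero-free tube, and the construction of the invariant region $\mathcal K$ --- which you yourself identify as ``the crux'' --- is not carried out.
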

\begin{theorem}\label{thm:FPRAS}
Fix rational numbers $\beta,\gamma$ such that $\beta\neq \gamma$ and $|\beta+\gamma|\geq 2$. There is an FPRAS for $Z_{G}$, where $G$ is an input graph.
\end{theorem}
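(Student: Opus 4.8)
The plan is to reduce, region by region, to partition functions for which an FPRAS on graphs of \emph{unbounded} degree is already available --- chiefly the ferromagnetic Ising model with a consistent external field (Jerrum--Sinclair) and the monomer--dimer model --- and, where a reduction lands just inside the tree-uniqueness regime, to exploit rapid mixing of Glauber dynamics. The starting point is a pointwise identity valid whenever $\beta,\gamma>0$: writing $A_{ij}=h_ih_j\,[\text{Ising}(y)]_{ij}$ with $y=\sqrt{\beta\gamma}$, $h_0h_1=1$, $h_0^2y=\beta$, and absorbing each $h_i$ into the incident vertices yields, for every graph $G$,
\[
Z_G=(\beta/\gamma)^{|E(G)|/2}\;Z_{\mathrm{Ising}}\!\Big(G;\ y=\sqrt{\beta\gamma},\ \mu_v=(\gamma/\beta)^{d_v/2}\Big),
\]
where $Z_{\mathrm{Ising}}(G;y,\mu)=\sum_{\sigma}\prod_{\{u,v\}\in E}y^{[\sigma(u)=\sigma(v)]}\prod_v\mu_v^{[\sigma(v)=1]}$. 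The field is \emph{consistent} (all $\mu_v\le1$, or all $\mu_v\ge1$) because $\gamma/\beta$ is a fixed positive constant. When $\beta\gamma\ge1$ this is a ferromagnetic Ising model with a consistent field, so Jerrum--Sinclair gives the FPRAS; since $\beta\gamma\ge1$ forces $\beta+\gamma\ge2$, this already disposes of a large chunk of the region.

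The technical core is the remaining ``positive'' slab $\beta,\gamma>0$, $\beta\gamma<1$, $\beta+\gamma\ge2$ (so $\beta>1>\gamma\ge0$), where the identity produces an \emph{antiferromagnetic} Ising model, $y=\sqrt{\beta\gamma}<1$, with a consistent field decaying exponentially in the degree: $\mu_v=\theta^{\,d_v}$, $\theta=\sqrt{\gamma/\beta}$, $\theta/y=1/\beta$ (for $\gamma=0$ this is literally the hard-core model with fugacities $\lambda_v=\beta^{-d_v}$). I would combine two facts. (i) All-degree uniqueness with a constant margin: the fixed-point derivative of the field-weighted tree recursion $R\mapsto\mu^{-1}\big((yR+1)/(R+y)\big)^{d}$ has magnitude of order $(1-\beta\gamma)\cdot d\,(\theta/y)^{\,d+1}=(1-\beta\gamma)\,d\,\beta^{-(d+1)}$, and the constraint $\beta+\gamma\ge2$ is exactly what makes the supremum of this quantity over all $d\ge1$ stay strictly below $1$ --- the line $\beta+\gamma=2$ being the borderline case, just as $\beta\gamma=1$ is on the ferromagnetic side. (ii) A pinning bound: for \emph{any} configuration of the other vertices, $\Pr[\sigma(v)=1\mid\text{rest}]\le(\beta^{\,d_v}+1)^{-1}\le\beta^{-d_v}$. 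Thus fixing every vertex of degree larger than $D:=O(\log(n/\epsilon))$ to spin $0$ perturbs $Z_G$ by a factor $1\pm\epsilon$ (union bound over at most $n$ vertices); on the residual graph --- maximum degree now $\le D$, still inside uniqueness with a constant margin --- Glauber dynamics mixes in time $\mathrm{poly}(D)\cdot\widetilde O(n)=\widetilde O(n)$ by spectral-independence/optimal-mixing results, and the standard telescoping-product estimator turns rapid mixing into an FPRAS.

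For the complementary parameters --- $\beta\gamma\le0$ (one of $\beta,\gamma$ negative) and the whole half-plane $\beta+\gamma\le-2$, where $Z_G$ is not even sign-definite --- the plan is to ``positivise''. The identity $Z(G;-A)=(-1)^{|E(G)|}Z(G;A)$ together with a diagonal gauge by $\mathrm{diag}(1,-1)$ replaces $-A=\begin{bmatrix}-\beta&-1\\-1&-\gamma\end{bmatrix}$ by $\begin{bmatrix}-\beta&1\\1&-\gamma\end{bmatrix}$ at the price of a per-vertex sign $(-1)^{d_v\sigma(v)}$, which is a genuine external field only at odd-degree vertices; I would first make all degrees even (so that this sign disappears) and push through the reduction from the half-plane $\beta+\gamma\le-2$ to the half-plane $\beta+\gamma\ge2$ on the parameters $(-\beta,-\gamma)$. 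Working instead through the Fourier/subgraph expansion of $Z_G$, I expect the both-negative case to reduce --- with the scalar and sign coming out explicitly --- to a monomer--dimer partition function on a derived graph, which Jerrum--Sinclair can approximate; here the hypothesis $\beta\ne\gamma$ is precisely what prevents a degeneration to the (open) perfect-matching problem of Proposition~\ref{prop:PMequivalent}, and the bookkeeping also fixes $\operatorname{sign}(Z_G)$, which an FPRAS must return. The mixed-sign remainder with $\beta+\gamma\ge2$ would be handled by the same subdivision/gauge gadgets (using $Z(G;A^2)=Z(\mathrm{subdiv}(G);A)$ with $A^2$ ferromagnetic) feeding back into the first two cases.

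The step I expect to be the main obstacle is obtaining a genuine \emph{FPRAS} --- not a mere quasi-polynomial scheme --- on unbounded-degree graphs in the antiferromagnetic-with-decaying-field slab: all-degree uniqueness by itself only yields a correlation-decay QPTAS, so the argument must lean on the specific exponential decay of $\mu_v$ both to justify pinning high-degree vertices under arbitrary boundary conditions and to produce a Glauber-mixing bound whose dependence on the (residual) degree is only polynomial. A secondary, and delicate, obstacle is carrying out the sign/gauge positivisation for \emph{all} graphs --- in particular non-bipartite ones --- in the region $\beta+\gamma\le-2$, where $Z_G$ genuinely changes sign with $G$; the identification with a monomer--dimer partition function (and hence the correct sign) is the part I would expect to require the most care.
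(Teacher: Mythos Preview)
Your plan diverges sharply from the paper's proof, and it has real gaps. The paper does \emph{one} thing in all four regions: apply the Fourier (subgraphs-world) transform to the binary constraint $\psi$, obtaining
\[
\widehat{\psi}=\tfrac14\begin{bmatrix}\beta+\gamma+2&\beta-\gamma\\\beta-\gamma&\beta+\gamma-2\end{bmatrix},
\]
so that $Z_G=2^{|V|}$ times a Holant value on the edge-subdivided graph with $\widehat{\psi}$ at the degree-$2$ edge-vertices and $\mathbf{Even}_d$ at the original vertices. The whole content of the hypothesis $|\beta+\gamma|\ge 2$ is that $\widehat{\psi}$ (or $-\widehat{\psi}$, when $\beta+\gamma\le -2$) is entrywise nonnegative; the hypothesis $\beta\neq\gamma$ keeps the off-diagonal nonzero, which is exactly what ``strictly terraced'' needs on the boundary $|\beta+\gamma|=2$. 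Every nonnegative binary function is windable, $\mathbf{Even}_d$ is windable and strictly terraced, and McQuillan's framework gives the FPRAS. No case split on signs, no degree surgery, no spectral independence.

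You brush past this when you mention ``the Fourier/subgraph expansion,'' but you misidentify the target: the vertex constraints that fall out are \emph{parity} ($\mathbf{Even}$) constraints, not at-most-one constraints, so the Holant problem is a weighted even-subgraphs model, not monomer--dimer. The correct black box is McQuillan's windability FPRAS, not the Jerrum--Sinclair matching chain; your hope that ``$\beta\ne\gamma$ prevents a degeneration to perfect matchings'' is the right instinct applied to the wrong model.

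The concrete failures are in the sign-handling. First, the subdivision identity runs the wrong way: $Z(\mathrm{subdiv}(G);A)=Z(G;A^2)$ lets you evaluate at $A^2$ given an oracle for $A$, not conversely, and when $\gamma<0<\beta$ the matrix $A$ has a negative eigenvalue, so there is no real symmetric $B$ with $B^2=A$ to invert the trick. Second, for $\beta+\gamma\le -2$ your $\mathrm{diag}(1,-1)$ gauge leaves the vertex sign $(-1)^{d_v}$, and ``first make all degrees even'' is not a step: thickening, stretching, or attaching pendants all change either the interaction or the partition function, and no gadget with activity proportional to $[1,1]$ and odd incidence exists unless $\beta=\gamma$. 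So the half-plane $\beta+\gamma\le -2$ is not handled by your outline. (Your antiferromagnetic-slab argument via pinning high-degree vertices and then Glauber is more plausible, but note that pinning neighbours to $0$ \emph{raises} the effective field on the remaining vertices, so uniqueness-with-margin for the residual instance still has to be checked; the paper sidesteps all of this.)
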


Note that \Cref{thm:FPRAS} contains the boundary case $|\beta+\gamma|=2$, which \Cref{thm:FPTAS} doesn't. What's more, since \Cref{thm:FPRAS} doesn't require the input graph to be bounded degree, it is not subsumed by \Cref{thm:FPTAS} even for the range $|\beta+\gamma|>2$. 

The algorithm of \Cref{thm:FPTAS} is based on the zero-freeness framework of \cite{barvinok2016combinatorics} and Asano's contraction method \cite{asano1970theorems}, while the algorithm of \Cref{thm:FPRAS} relies on the ``windability'' framework of \cite{mcquillan2013approximating} and a holographic transformation. The zero-freeness framework, achieving notable successes in problems with nonnegative parameters (e.g. \cite{peters2019conjecture}), applies naturally in the presence of mixed signs as well. In contrast, the ``windability'' framework, or more generally Markov-chain-based methods only make sense for problems with positive parameters. It is thus somewhat surprising that, via a holographic transformation, we are able to transform the problem into one with positive parameters and furthermore prove the rapid mixing of a Markov chain, for the \textit{maximum possible} parameter range based on a lower bound on $|\beta+\gamma|$. 

Now, the obvious challenge is to determine the approximation complexity in the remaining region, that is, for parameters $\beta,\gamma$ such that $\min\{\beta,\gamma\}<0$ and $1\leq\beta+\gamma<2$. Unfortunately, we are unable to fully achieve this goal. Instead, we give some results that might provide some insights into this challenge (see \Cref{sec:concluding} for more discussion). 

\begin{theorem}\label{thm:positivity}
Let $\beta,\gamma$ be real numbers such that $\beta+\gamma\geq 1$. Then for any graph $G$, the partition function $Z_{G}$ is positive.
\end{theorem}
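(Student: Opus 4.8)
The plan is to dispose of the easy regimes first and then reduce the remaining case to a zero‑freeness statement. The partition function is unchanged by swapping the two spins, which swaps $\beta$ and $\gamma$, so I may assume $\gamma\le\beta$; then $\beta\ge(\beta+\gamma)/2\ge 1/2>0$. If $\gamma\ge 0$, every entry of $A$ is nonnegative, so every configuration has nonnegative weight and the all‑$0$ configuration has weight $\beta^{|E(G)|}>0$, giving $Z_G>0$. If $\gamma<0$, then $\beta+\gamma\ge 1$ forces $\beta\ge 1-\gamma>1$, equivalently $\beta-|\gamma|\ge 1$; in particular $\beta>|\gamma|$. Since $Z$ is multiplicative over connected components, I may also assume $G$ is connected. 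It remains to treat this ``hard case'': $G$ connected, $\gamma<0\le\beta$, and $\beta-|\gamma|\ge 1$.

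For the hard case, regard $Z_G=Z(G;A)$ as a polynomial in the two real variables $\beta,\gamma$. It is continuous, and at $(\beta,\gamma)=(1,1)$ it equals $\sum_{\sigma\in\{0,1\}^{V(G)}}1=2^{|V(G)|}>0$. The set $\{(\beta,\gamma)\in\mathbb{R}^2:\beta+\gamma\ge 1\}$ is a connected subset of $\mathbb{R}^2$ that contains $(1,1)$, so if $Z_G$ has no zero on it then $Z_G$ is positive throughout it, which is exactly the theorem. Thus it suffices to prove
\begin{equation*}
Z(G;A)\neq 0\qquad\text{whenever }\beta+\gamma\ge 1.
\end{equation*}
To attack this I would work with the Fortuin--Kasteleyn--type expansion coming from $A=J+\operatorname{diag}(\beta-1,\gamma-1)$, where $J$ is the all‑ones matrix,
\begin{equation*}
Z(G;A)=\sum_{\substack{S_0,S_1\subseteq E(G)\\ V(S_0)\cap V(S_1)=\emptyset}}(\beta-1)^{|S_0|}(\gamma-1)^{|S_1|}\,2^{\,|V(G)|-|V(S_0)|-|V(S_1)|},
\end{equation*}
with $V(S)$ the set of endpoints of the edges in $S$, and then build $G$ one edge at a time, propagating a zero‑free region for this polynomial through the graph operations by an Asano‑contraction argument in the spirit of the zero‑freeness framework used for \Cref{thm:FPTAS}; the base object is a single edge, whose generating polynomial $\beta+x+y+\gamma xy$ has an explicit zero locus. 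A purely combinatorial alternative is to expand only the $\gamma$‑corrections: writing $A=\bigl[\begin{smallmatrix}\beta&1\\1&1\end{smallmatrix}\bigr]+\operatorname{diag}(0,\gamma-1)$ gives $Z_G=\sum_{S\subseteq E(G)}(\gamma-1)^{|S|}P_G(S)$, where $P_G(S)=\sum_{\sigma\equiv 1\text{ on }V(S)}\beta^{\#\{e\in E\setminus S:\ e\text{ monochromatic-}0\}}\ge 2^{|V(G)|-|V(S)|}>0$ because $\beta>1$; when $\gamma\ge 1$ all coefficients are nonnegative and positivity is immediate, and in the hard case one would seek a sign‑reversing partial involution on edge subsets showing that the even‑$|S|$ contributions dominate.

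I expect the main obstacle to be precisely the displayed non‑vanishing in the hard case; the reductions above and the passage from non‑vanishing to positivity are routine. The difficulty is that the argument has to be essentially global: because exact evaluation of $Z(G;A)$ for general $G$ is \#P‑hard by the dichotomy theorem, I do not see how to carry only a bounded amount of information along a structural decomposition of $G$. In the Asano route the crux is to identify the correct region containing $\{\beta+\gamma\ge 1\}$ (suitably complexified in the variables attached to vertices) that is stable under the edge‑by‑edge construction; in the combinatorial route the crux is to construct the involution, since the naive ``add or remove one edge'' map fails --- $P_G(S)$ need not change by the factor $|\gamma-1|$ needed to offset a negative term against a larger positive one. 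Locating this region, or this involution, is where I expect all the real work to be.
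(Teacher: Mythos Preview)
Your reduction to the ``hard case'' $\gamma<0$, $\beta\ge 1-\gamma>1$ and the passage from non-vanishing on the connected region $\{\beta+\gamma\ge 1\}$ to positivity are both fine. The gap is exactly where you yourself flag it: you do not prove the non-vanishing, and neither of your two proposed routes is carried out. More importantly, the Asano-contraction route is not merely unfinished but genuinely obstructed in this range. The single-edge polynomial $\beta+z_1+z_2+\gamma z_1 z_2$ has a zero-free disk of radius $r>1$ only when $\beta+\gamma>2$ (this is precisely \Cref{lem:unitcircle}); once $\beta+\gamma<2$ the largest zero-free disk about the origin for the full partition function has radius $(\beta-1)/(1-\gamma)<1$ (see \Cref{thm:diskaround0} and \Cref{lem:optimalradius}), so it cannot cover the point $\lambda=1$ you need. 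The paper even remarks explicitly (opening of \Cref{subsec:circular}) that ``in this range, the contraction method doesn't seem to apply easily.'' Your combinatorial route via a sign-reversing involution is left entirely open, and you give no candidate involution.

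The paper's proof is completely different in structure and does not go through zero-freeness or an expansion. It \emph{strengthens} the inductive hypothesis in two ways: it allows arbitrary external fields $\bflam\in[\gamma/\beta,1]^{V(G)}$, and it simultaneously tracks the stronger invariant that every one-vertex marginal ratio $R_{G,v}(\bflam)$ lies in $[\gamma/\beta,1]$ (\Cref{lem:positivity}). The induction is on $|V(G)|+|E(G)|$: delete an edge $\{u,v\}$, write $R_{G,v}=\dfrac{C+\gamma D}{\beta A+B}$ in terms of the four entries $A,B,C,D$ of $[Z_{G_1,u,v}]$, and bound this ratio using the hypothesis applied to $G_1$ and to the contraction $G/\{u,v\}$ at several carefully chosen external fields (six applications in all). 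The heart is the pair of algebraic identities \eqref{eq:right} and \eqref{eq:left}, where the condition $\beta+\gamma\ge 1$ is used exactly to make every coefficient nonnegative. The freedom to vary $\bflam$ inside $[\gamma/\beta,1]$ is what makes the induction close --- it lets one pin vertices and absorb self-loops --- and is the key idea your proposal is missing.
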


\begin{remark}
For $\beta+\gamma\leq -2$, it is easy to find a graph $G$ such that $Z_{G}<0$ (e.g. a single self-loop or a triangle). When $-2<\beta+\gamma<1$ and $\min\{\beta,\gamma\}<0$ and $\beta,\gamma\not\in\{(-1,1),(1,-1)\}$, \Cref{thm:sharpPhard} implies that $Z_{G}$ is negative for some graph $G$. When $(\beta,\gamma)\in\{(1,-1),(-1,1)\}$, $Z_{G}$ is negative for $G=K_{4}$ (the 4-clique). Combined with these observations, \Cref{thm:positivity} completely determines the range of parameters $\beta$ and $\gamma$ for which the partition function $Z_{G}$ is always nonnegative: the union of the half plane $\beta+\gamma\geq 1$ and the first quadrant $\beta,\gamma\geq 0$.  
\end{remark}

\Cref{thm:positivity} suggests that approximating the partition function is unlikely to be \#P-hard when $\beta+\gamma\geq 1$, and hence the line $\beta+\gamma=1$ is likely some threshold of approximation complexity.

The proof of \Cref{thm:positivity} is by induction on the size of the graph and will be given in \Cref{subsec:positivity}. In fact, such recursion methods have also been widely used to show zero-freeness of some partition functions on the complex plane (e.g. \cite{liu2022correlation}), which in turn leads to deterministic approximation algorithms by the framework of \cite{barvinok2016combinatorics}. For our partition function, we show in \Cref{subsec:circular} that such recursions can be used to determine the largest zero-free disk around 0 for the range $\{(\beta,\gamma):\gamma<0\text{ and }1\leq \beta+\gamma\leq 2\}$:
\begin{theorem}\label{thm:diskaround0}
Let $\beta,\gamma$ be real numbers such that $\gamma<0$ and $1\leq \beta+\gamma\leq 2$. Then for any graph $G$, the polynomial $Z_{G}(x)$ as defined in \Cref{subsec:notations} is zero-free on the disk $\left\{z\in\mathbb{C}:|z|<\frac{\beta-1}{1-\gamma}\right\}$. Furthermore, $\frac{\beta-1}{1-\gamma}$ is the maximum possible radius such that the zero-freeness holds for all graphs $G$.
\end{theorem}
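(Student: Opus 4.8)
The plan is to analyse the single‑vertex \emph{marginal recursion}. Write $r:=\frac{\beta-1}{1-\gamma}$; from $\gamma<0$ and $1\le\beta+\gamma\le2$ one gets $\beta>1$ and $0<r\le1$, with $r=1$ exactly when $\beta+\gamma=2$. For a graph $G$ with a distinguished vertex $v$, split $Z_G(x)=Z_G^0(x)+x\,Z_G^1(x)$ according to the spin of $v$ (keeping the field at $v$ outside the sum), and set $R_{G,v}(x):=x\,Z_G^1(x)/Z_G^0(x)$, so that $Z_G(x)=Z_G^0(x)\bigl(1+R_{G,v}(x)\bigr)$. On a tree $R$ obeys
\[
R_{G,v}(x)\;=\;x\prod_i g\bigl(R_{G_i,u_i}(x)\bigr),\qquad g(R):=\frac{1+\gamma R}{\beta+R},
\]
where $u_1,\dots,u_d$ are the neighbours of $v$ and $G_i$ is the branch at $u_i$; a free leaf contributes the value $x$ (the empty product), and after reduction to Weitz's self‑avoiding‑walk tree a cycle‑closing leaf contributes $g(0)=1/\beta$ or $g(\infty)=\gamma$. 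It therefore suffices to show that for $|x|<r$ the recursion never divides by $0$ (equivalently $Z_G^0(x)\ne0$ at every stage) and never returns $-1$, since then $Z_G(x)=Z_G^0(x)(1+R_{G,v}(x))\ne0$.

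Two elementary computations, which I would do first, explain the value $r$. (i) $g(-r)=1$; more precisely $-r$ is a fixed point of the Möbius map $R\mapsto -r\,g(R)$, whose fixed‑point equation $R^2+(\beta+r\gamma)R+r=0$ has root product $r$, so the \emph{other} fixed point is exactly $-1$. (ii) For every $R$ with $|R|\le r$,
\[
|1+\gamma R|\;\le\;1+|\gamma|\,r\;=\;\beta-r\;\le\;\beta-|R|\;\le\;|\beta+R|,
\]
the middle equality being the definition of $r$, so $|g(R)|\le1$ on $\{|R|\le r\}$ (strictly inside). Hence the open disk $\{|R|<r\}$ is forward‑invariant under the \emph{tree} recursion: it contains the initial value $x$, and if all child ratios lie in it then $|R_{G,v}(x)|=|x|\prod_i|g(\cdot)|\le|x|<r$. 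Since $|{-1}|\ge r$ and $|{-\beta}|=\beta>r$, staying in $\{|R|<r\}$ already gives $Z_G^0(x)\ne0$ and $R_{G,v}(x)\ne-1$. This settles the first half of the theorem for forests.

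Extending this to arbitrary graphs is, I expect, the main obstacle. Weitz's construction produces pinned‑to‑$1$ leaves, for which $g(\infty)=\gamma$ may have modulus exceeding $1$ and for which the ``$Z^0$'' of the leaf vanishes, so both the product bound and the ratio formalism break down; what rescues the argument is the pairing of the two traversals of each cycle, whose pinned contributions partially cancel. I would therefore run the induction directly on the pair of polynomials $(Z_G^0,Z_G^1)$ (equivalently, replace each cycle by a re‑weighted pendant structure of controlled shape), and show that the estimate $|x\,Z_G^1(x)|\le r\,|Z_G^0(x)|$ together with $Z_G^0(x)\ne0$ is preserved by these cycle‑reductions; this is the step where I expect $\gamma<0$ and $\beta+\gamma\le2$ to be used in full.

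For the optimality of $r$ I must produce, for every $r'>r$, a graph with a zero of modulus $<r'$. If $\beta+\gamma=2$ (so $r=1$) a single vertex already has the zero $x=-1$ of modulus $r$, so assume $\beta+\gamma<2$. The mechanism is again fact (i): $-1$ is a fixed point of $R\mapsto x\,g(R)$ precisely at $x=-r$, where it is repelling while its companion fixed point $-r$ is attracting; long paths iterate this map and only reach the attracting branch, so they do not suffice. Instead I would build a depth‑$n$ family of gadgets (trees with parallel edges, or a recursively defined construction handling the regimes $|\gamma|\le1$ and $|\gamma|>1$ separately) arranged so that the recursion is driven onto the repelling branch through $-1$, producing zeros $x_n$ of $Z_{G_n}$ with $|x_n|\to r$, and invoke Hurwitz's theorem to pass to the limit. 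Constructing this extremal family, together with the cycle‑handling of the previous paragraph, is where the real work lies; the remaining steps are routine estimates.
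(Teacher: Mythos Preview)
Your core estimate $|g(R)|\le 1$ on $\{|R|\le r\}$ is exactly the right one, and your forest argument is essentially the paper's. But there are two genuine gaps.

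\textbf{General graphs.} The Weitz/SAW route is a dead end here for the reason you identified: pinned-to-$1$ leaves force $g(\infty)=\gamma$, which can have modulus $>1$. Your fallback---``run the induction on $(Z_G^0,Z_G^1)$ via cycle-reductions''---is not a concrete plan. The paper bypasses Weitz entirely. It inducts on $|V|+|E|$, and at the inductive step picks an edge $\{u,v\}$, deletes it to form $G_1$, and writes the $2\times 2$ table $A,B,C,D=[Z_{G_1,u,v}]_{00},[Z_{G_1,u,v}]_{10},[Z_{G_1,u,v}]_{01},[Z_{G_1,u,v}]_{11}$, so that $R_{G,v}=\frac{C+\gamma D}{\beta A+B}$. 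Pinning $u$ (resp.\ $v$) to $0$ in $G_1$ gives $|C/A|\le r$ and $|B/A|\le r$. The missing idea is how to control $D/A$: the paper \emph{contracts} the edge $\{u,v\}$ to a single vertex $w$, sets the field at $w$ to $\lambda_u$, and applies the induction hypothesis there; this shows $D/A\in \lambda_v\cdot K\subset K\cdot K$, hence $|D/A|\le r^{2}$ rather than merely $\le r$. With that extra factor of $r$ one gets
\[
r\,|\beta A+B|\;\ge\;r(\beta-r)|A|\;=\;r(1-\gamma r)|A|\;\ge\;|C|+(-\gamma)|D|\;\ge\;|C+\gamma D|,
\]
which closes the induction. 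No cycle bookkeeping, no pinning-to-$1$, and the conditions $\gamma<0$, $\beta+\gamma\le 2$ are already fully used in your modulus inequality $1+|\gamma|r=\beta-r$.

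\textbf{Optimality.} You are working much too hard. Take the star $G_n$ on $n+1$ vertices (one center, $n$ leaves); then $Z_{G_n}(x)=(\beta+x)^n+x(1+\gamma x)^n$. For any $r<r'<\beta$ one has $\frac{1-\gamma r'}{\beta-r'}>1$, so $Z_{G_n}(-r')=(\beta-r')^n\bigl(1-r'\bigl(\tfrac{1-\gamma r'}{\beta-r'}\bigr)^n\bigr)<0$ for $n$ large, while $Z_{G_n}(0)=\beta^n>0$; the intermediate value theorem gives a real root in $(-r',0)$. No Hurwitz, no gadget families, no case split on $|\gamma|$.
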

Using the same type of recursion in a more sophisticated way, we are able to show that the partition function $Z_{G}$ is efficiently computable if $\beta+\gamma$ is sufficiently close to 2, by slightly extending the zero-free region of \Cref{thm:diskaround0}. This suggests the line $\beta+\gamma=2$ is \textit{not} really a computational threshold: 

\begin{theorem}\label{thm:notthreshold}
Let $g:(1,+\infty)\rightarrow (0,1)$ be the following function:
\begin{equation}
g(\beta)=\max\left\{\frac{\beta-2}{\beta^{2}-1},\frac{(\beta-1)^{2}}{\beta^{3}+\beta^{2}-\beta}\right\}.
\end{equation}
Fix rational numbers $\beta,\gamma$ such that $\min\{\beta,\gamma\}<0$ and $\beta+\gamma>2-g(\max\{\beta,\gamma\})$. For any positive integer $\Delta$, there is an FPTAS for $Z_{G}$, where $G$ is an input graph of maximum degree no more than $\Delta$ (without the bounded degree requirement, there is a quasi-polynomial time approximation scheme). 
\end{theorem}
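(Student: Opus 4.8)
The plan is to follow the zero-freeness framework of \cite{barvinok2016combinatorics}: if we can show that $Z_G(x)$ is zero-free on a complex neighborhood of the segment $[0,1]$ that contains the point $x=1$ (which recovers $Z_G = Z_G(1)$), then the Taylor-polynomial interpolation method yields an FPTAS in the bounded-degree case and a quasi-polynomial time approximation scheme in general. By Theorem \ref{thm:diskaround0} we already know zero-freeness on the disk of radius $\frac{\beta-1}{1-\gamma}$ around $0$ (taking $\beta = \max\{\beta,\gamma\}$, $\gamma = \min\{\beta,\gamma\}<0$). This disk reaches the point $1$ precisely when $\frac{\beta-1}{1-\gamma}\ge 1$, i.e. when $\beta-1\ge 1-\gamma$, i.e. $\beta+\gamma\ge 2$. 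So for $\beta+\gamma$ slightly below $2$ the disk falls just short, and the task is to enlarge the zero-free region \emph{slightly} — enough to engulf $x=1$ — when $\beta+\gamma > 2 - g(\beta)$.

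The main work is therefore a refined recursion argument, in the same spirit as the proof of Theorem \ref{thm:diskaround0} and of Theorem \ref{thm:positivity}, but carried out on a larger region of the complex plane. Concretely, I would set up the standard decomposition: pick a vertex or edge, express $Z_G(x)$ (or an appropriate ratio, e.g. the ratio of the partition function conditioned on a pinned vertex versus unconditioned) via a recursion in terms of the same quantities on smaller graphs, and track a bounded invariant — typically one shows that a suitable rational function of these ratios maps a chosen region $\mathcal{R}\subset\mathbb{C}$ (containing $0$ and $1$) into itself, with the base case (a single vertex, or a single edge) lying safely inside $\mathcal{R}$. The candidate region $\mathcal{R}$ should be the union of the disk of radius $\frac{\beta-1}{1-\gamma}$ with a small lens or cap around the real segment near $x=1$; the two expressions in the $\max$ defining $g(\beta)$, namely $\frac{\beta-2}{\beta^2-1}$ and $\frac{(\beta-1)^2}{\beta^3+\beta^2-\beta}$, presumably arise as the thresholds governing two distinct constraints in this recursion (for instance, one from handling a degree-one boundary vertex and one from contracting/deleting an edge, or one from the self-loop / parallel-edge reductions). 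I would isolate those two worst-case local configurations, derive the contraction condition for each, and read off that both hold exactly when $\beta+\gamma > 2 - g(\beta)$.

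The hard part will be choosing the region $\mathcal{R}$ correctly: it must be large enough to contain $x=1$ under the hypothesis $\beta+\gamma > 2-g(\beta)$, yet small enough to be preserved by \emph{every} step of the recursion, including the problematic steps where an interaction of magnitude $|\gamma|>1$ or a high-degree vertex amplifies errors. This is a delicate geometric optimization — in the positive-parameter literature one often takes $\mathcal{R}$ to be a disk, a strip, or a more exotic shape, and the precise boundary is what determines the exact constant $g(\beta)$. Once the invariant region is pinned down, verifying the inclusion $f(\mathcal{R})\subseteq\mathcal{R}$ for the relevant Möbius-type maps $f$ is a (tedious but routine) calculation, and the base cases are immediate. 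Assembling these ingredients with the Barvinok--Patel--Regts machinery then gives the stated approximation schemes, with the bounded-degree hypothesis needed only to make the interpolation polynomial-time rather than quasi-polynomial-time, exactly as in Theorem \ref{thm:FPTAS}.
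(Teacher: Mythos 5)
You have correctly identified the framework (Barvinok--Patel--Regts zero-freeness, plus a recursion argument to enlarge the zero-free region past the centered disk of \Cref{thm:diskaround0}), but the proposal stops exactly where the actual proof begins, and the few concrete guesses you do offer about what the region should look like and where the two branches of $g$ come from do not match what actually works.

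First, the geometry. You propose taking $\mathcal{R}$ to be ``the union of the disk of radius $\frac{\beta-1}{1-\gamma}$ with a small lens or cap around the real segment near $x=1$.'' A union of a disk and a lens is not well-adapted to the M\"obius-type maps that appear, and the paper does something cleaner: there are \emph{two} regions, a thin $\varepsilon$-neighborhood $U$ of the real segment $[\gamma/\beta,1]$ (where the external fields $\lambda_v$ are allowed to live, and which is the actual zero-free region fed to \Cref{lem:zerofreeness}), and a single \emph{uncentered} closed disk $K$ with real diameter $[a,b]$, $a\in(-1,0)$, $b>0$, in which the recursion maintains the ratios $R_{G,v}$. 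Moving the center off the origin is what buys the slack over the $\frac{\beta-1}{1-\gamma}$ disk; no union is needed.

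Second, a structural innovation you do not mention and which is needed to make this work: the induction is on $|E(G)|$ rather than on $|V(G)|+|E(G)|$, and the invariant ``$R_{G,v}\in K$'' is asserted \emph{only for non-isolated vertices} $v$. Isolated vertices have $R_{G,v}=\lambda_v\in U$, which need not lie in $K$. This forces a four-way case split on $(\deg v,\deg u)\in\{1,\ge 2\}^2$ for the chosen edge $\{u,v\}$, and each case imposes its own inclusion requirement on $(a,b)$ (the five constraints \eqref{eq:req1}--\eqref{eq:req5} in the paper). Your sketch does not anticipate this, and it is not routine: if you treated isolated vertices like other vertices, the base case $R_{G,v}=\lambda_v$ forces $[\gamma/\beta,1]\subset K$ immediately and the disk becomes too large to be contracted.

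Third, the provenance of the $\max$ in $g(\beta)$ is not, as you guess, ``one term per worst-case local configuration.'' The $\max$ arises because the optimal choice of the interval $[a,b]$ depends on whether $\gamma\le -1$ or $\gamma>-1$: in the first regime one takes $a=\gamma/\beta$, in the second $a=-1/\beta$, and the binding constraint \eqref{eq:req5} then reduces to $\beta+\gamma>2-g_1(\beta)$ or $\beta+\gamma>2-g_2(\beta)$ respectively. An algebraic identity relating $g_1$, $g_2$ and an auxiliary $g_3>1$ then shows that in each regime the relevant $g_i$ is exactly $\max\{g_1,g_2\}=g$. So the dichotomy is in the parameter $\gamma$, not in the recursion steps.

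In short: the overall plan is right, but the proposal leaves the entire technical content (the choice of $K$ and $U$, the modified induction, the four-way casework, and the feasibility of the constraints) unresolved, and the specific guesses it offers in their place would lead you down the wrong path.
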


\Cref{thm:notthreshold} breaks the algorithmic barrier $\beta+\gamma=2$ presented by \Cref{thm:FPTAS} and shows that the line $\beta+\gamma=2$ behaves in a completely different way from the line $\beta+\gamma=-2$. The proof of \Cref{thm:notthreshold} will be given in \Cref{subsec:uncentered}.

\begin{figure}[H]
    \centering
    \includegraphics[scale=2]{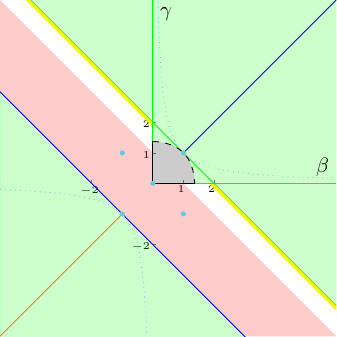}
    \captionsetup{singlelinecheck=off}
    \caption{An illustration of the complexity classification (schematic; not to scale).
    The sky-blue dots $\{(\beta,\gamma):\beta\gamma=1\}\cup\{(-1,1),(0,0),(1,-1)\}$ are where $Z_{G}$ can be computed exactly in polynomial time \cite{bulatov2005complexity,goldberg2010complexity}.
    Sitting in the bottom-left corner of the first quadrant, the black region is where approximating the partition function is known to be NP-hard \cite{sly2012computational}. The dashed line stands for the uniqueness boundary for anti-ferromagnetic 2-spin systems.
    When $(\beta,\gamma)$ falls in the green regions, there is an FPTAS for $Z_{G}$ on bounded degree graphs (due to \Cref{thm:FPTAS} and \cite{li2013correlation}), and an FPRAS for $Z_{G}$ on all graphs (due to \Cref{thm:FPRAS} and \cite{chen2022rapid}).
    The thin yellow strips to the left of the $\beta+\gamma=2$ line are where an FPTAS for bounded degree graphs is given by \Cref{thm:notthreshold}, suggesting that $\beta+\gamma=2$ is not a threshold. 
    When $(\beta,\gamma)$ falls on the blue lines, there is an FPRAS for $Z_{G}$ (the line $\beta+\gamma=-2$ follows from \Cref{thm:FPRAS}, while the ray $\beta=\gamma>1$ is due to \cite{jerrum1993polynomial}). 
    In the red region, apart from the points $(-1,1)$ and $(1,-1)$, approximating $Z_{G}$ is \#P-hard (\Cref{thm:sharpPhard}). 
    On the orange line, approximating the partition function is equivalent to approximately counting perfect matchings \cite{goldberg2007inapproximability}.} 
    \label{fig:results}
\end{figure}

\subsection{More Related Work}

Most of the literature studying 2-state spin systems is restricted to the case where the edge interactions $\beta$ and $\gamma$ and the vertex weights $\lambda$ (i.e. external fields, see \Cref{subsec:notations}) are all nonnegative. But there are also some related lines of work where negative or even complex parameters have received more attention. 

For instance, in the case of the Ising model, besides the results mentioned in \Cref{prop:isingsharpP} and \Cref{prop:PMequivalent},  \cite{goldberg2017complexity} studies the approximation complexity of $Z(G;A)$, where $A=\begin{bmatrix}\beta & 1\\
1 & \beta\end{bmatrix}$ and $\beta$ is any algebraic \textit{complex} number, partly motivated by the connection with quantum complexity classes.

Another line of research concerns the hard-core model (this corresponds to interactions $\beta=1$ and $\gamma=0$ with external fields). Regarding this model there has been much work on the complexity of approximating $Z_{G}(\lambda)$ for bounded-degree graphs $G$ varying parameter $\lambda\in\mathbb{C}$~\cite{harvey2018computing, galanis2017inapproximability, bezakova2019inapproximability}. Here the study of the complexity of approximation is intimately related to the study of optimal zero-free regions of the polynomial $Z_{G}(x)$~\cite{bencs2021limit, de2021zeros, bencs2023complex}.
The techniques used in Section~\ref{sec:hardness} are directly analogous to those in~\cite{bezakova2019inapproximability} --- see Remark~\ref{remfive}.

\section{Preliminaries}

As in \Cref{subsec:problem}, we consider a fixed symmetric matrix $A=\begin{bmatrix}
A_{00} & A_{01}\\ A_{10} & A_{11}\end{bmatrix}\in\mathbb{Q}^{2\times 2}$. 

\subsection{Notations}\label{subsec:notations}
For $G=(V,E)$ and $\bflam\in \mathbb{R}^{V}$, let 
$$Z_{G}(\bflam)=\sum_{\sigma\in\{0,1\}^{V}}\left(\prod_{\{u,v\}\in E}A_{\sigma(u),\sigma(v)}\prod_{v\in V}\lambda_{v}^{\sigma(v)}\right).$$
Here $\bflam$ is the vector of \textit{external fields}.
As a special case, we have $Z_{G}=Z_{G}(\mathbf{1})$. By setting $\lambda_{v}=x$ for all $v\in V$, we get a univariate polynomial $Z_{G}(x)$.

For $v\in V$, let $[Z_{G,v}(\bflam)]$ be a $2\times 1$ vector (we call it the \emph{activity vector} of $v$ in $G$) whose $i$-th coordinate is 
$$\left[Z_{G,v}(\bflam)\right]_{i}=\sum_{\sigma\in\{0,1\}^{V}}\mathbbm{1}\{\sigma(v)=i\}\left(\prod_{\{u,v\}\in E}A_{\sigma(u),\sigma(v)}\prod_{v\in V}\lambda_{v}^{\sigma(v)}\right).$$
When $\left[Z_{G,v}(\bflam)\right]_{0}\neq 0$, we define the ratio $R_{G,v}(\bflam)=\left[Z_{G,v}(\bflam)\right]_{1}/\left[Z_{G,v}(\bflam)\right]_{0}$.

For $u,v\in V$, let $\left[Z_{G,u,v}(\bflam)\right]$ be a $2\times 2$ matrix whose $(i,j)$ entry is
$$\left[Z_{G,u,v}(\bflam)\right]_{i,j}=\sum_{\sigma\in\{0,1\}^{V}}\mathbbm{1}\{\sigma(u)=i\}\mathbbm{1}\{\sigma(v)=j\}\left(\prod_{\{u,v\}\in E}A_{\sigma(u),\sigma(v)}\prod_{v\in V}\lambda_{v}^{\sigma(v)}\right).$$

\subsection{\#CSP and Holant Problems}

The problem of computing the partition function of a spin system can be seen as an instance of $\mathsf{\#CSP}$ problem with a single symmetric binary constraint function. In fact, we may identify the symmetric matrix $A$ with the binary function $\psi$ defined by $\psi(i,j)=A_{ij}$. Then we can denote by $\mathsf{\#CSP}(\{\psi\})$ the problem of computing $Z(G;A)$ given $G$.

In \Cref{subsec:prelimFPRAS,subsec:proofFPRAS}, we will utilize the connection between \#CSP problems and Holant problems. A Holant instance is a graph $G=(V,E)$ with a variable on each edge and a constraint on each vertex. The constraint on a vertex $v$ is a function $F_{v}:\{0,1\}^{J_{v}}\rightarrow\mathbb{C}$, where $J_{v}$ is the set of edges incident to $v$.

\begin{remark}
Self-loops might bring in some ambiguity here. But in this paper, we don't consider self-loops in the context of Holant problems, as we're not going to need them.
\end{remark}

Let $\mathcal{F}$ be a class of constraint functions. A Holant problem $\mathsf{Holant}(\mathcal{F})$ asks for computing the partition function
$$\sum_{\sigma\in \{0,1\}^{E}}\prod_{v\in V}F_{v}(\sigma_{|J_{v}})$$
on input $(G,(F_{v})_{v\in V})$, where each $F_{v}\in\mathcal{F}$.

A particular family of constraint functions we will use in \Cref{subsec:prelimFPRAS,subsec:proofFPRAS} is the parity functions. For all positive integer $d$ define $\mathbf{Even}_{d}:\{0,1\}^{d}\rightarrow \{0,1\}$ by setting $\mathbf{Even}_{d}(x_{1},\dots,x_{d})=1$ if and only if $x_{1}+\dots+x_{d}$ is even.

\section{\#P-Hardness}\label{sec:hardness}
Let's define the range of parameters

$$\Gamma=\{(\beta,\gamma)\in\mathbb{R}^{2}:(\beta>\gamma)\wedge(-2<\beta+\gamma<1)\wedge(\gamma<0)\}\setminus\{(1,-1)\},$$ which will appear many times in this section.
Note that for $(\beta,\gamma) \in \Gamma$,
$\beta\gamma<(-2-\gamma)\gamma\leq 1$.

\subsection{Realizing Arbitrary Ratios}

The starting point for proving the hardness result \Cref{thm:sharpPhard} is to show that the ratio $R_{G,v}$ can take value in a dense subset of $\mathbb{R}$.

\begin{definition}
Given parameters $\beta,\gamma\in\mathbb{R}$, we say that a real number $r$ is realizable if there is a finite graph $G$ and a vertex $v\in V(G)$ such that $\left[Z_{G,v}\right]_{0}\neq 0$ and $R_{G,v}=r$. When the vertex $v$ is clear from context, we often just say $G$ realizes the number $r$.
\end{definition}
\begin{lemma}\label{lem:productrealize}
If $r_{1},r_{2}\in\mathbb{R}$ are realizable under parameters $\beta$ and $\gamma$, then $r_{1}r_{2}$ is also realizable.
\end{lemma}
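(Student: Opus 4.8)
The plan is to use the standard vertex-identification gadget. Suppose $G_1$ with distinguished vertex $v_1$ realizes $r_1$ and $G_2$ with distinguished vertex $v_2$ realizes $r_2$. First I would pass to vertex-disjoint copies of $G_1$ and $G_2$, write the activity vectors as $\left[Z_{G_1,v_1}\right]=(a_1,b_1)$ and $\left[Z_{G_2,v_2}\right]=(a_2,b_2)$ --- so that $a_1,a_2\neq 0$ and $b_j/a_j=r_j$ by the definition of realizability --- and then build a graph $G$ by taking the disjoint union of $G_1$ and $G_2$ and identifying $v_1$ with $v_2$ into a single vertex $v$. All edges incident to $v_1$ or $v_2$ become incident to $v$; this stays within the class of multigraphs with self-loops, which is all the paper allows.

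The heart of the argument is a one-line factorization of the activity vector of $v$ in $G$. For each $i\in\{0,1\}$, a configuration $\sigma\in\{0,1\}^{V(G)}$ with $\sigma(v)=i$ is exactly a pair of configurations on $V(G_1)$ and $V(G_2)$ agreeing on the value $i$ at the identified vertex, and since $E(G)=E(G_1)\sqcup E(G_2)$ and all vertex weights equal $1$, the edge-weight product of $\sigma$ is the product of the edge-weight products of the two restrictions. Summing over all such $\sigma$ yields $\left[Z_{G,v}\right]_i=\left[Z_{G_1,v_1}\right]_i\left[Z_{G_2,v_2}\right]_i$, i.e.\ the new activity vector is $(a_1a_2,\,b_1b_2)$. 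Then $\left[Z_{G,v}\right]_0=a_1a_2\neq 0$ and $R_{G,v}=\dfrac{b_1b_2}{a_1a_2}=r_1r_2$, which is what we want.

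There is no genuine obstacle in this lemma; it is a routine gadget construction. The only things to be careful about are (i) taking honestly disjoint copies of $G_1$ and $G_2$ before the identification, so the edge sets are truly disjoint and the count of configurations factors, and (ii) noting that identifying two vertices can only create additional parallel edges or self-loops, which the paper permits. I would state the vertex-identification operation and the factorization identity explicitly, since this same composition (together with related ones, e.g.\ joining distinguished vertices by an edge or a path) is the kind of tool the rest of the hardness section will lean on.
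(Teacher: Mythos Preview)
Your proposal is correct and matches the paper's proof essentially line for line: the paper also forms the ``wedge sum'' of $G_1$ and $G_2$ by identifying $v_1$ and $v_2$ into a single vertex $v$, observes the factorization $[Z_{G,v}]_i=[Z_{G_1,v_1}]_i\cdot[Z_{G_2,v_2}]_i$, and concludes $R_{G,v}=r_1r_2$. Your added care about disjoint copies and the multigraph setting is fine but not something the paper dwells on.
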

\begin{proof}
If $R_{G_{1},v_{1}}=r_{1}$ and $R_{G_{2},v_{2}}=r_{2}$, take $G$ to be the ``wedge sum'' of $G_{1}$ and $G_{2}$, by first taking their disjoint union and then identifying $v_{1}$ and $v_{2}$ as a single vertex $v$. Then 
$$R_{G,v}=\frac{\left[Z_{G,v}\right]_{1}}{\left[Z_{G,v}\right]_{0}}=\frac{\left[Z_{G_{1},v_{1}}\right]_{1}\cdot \left[Z_{G_{2},v_{2}}\right]_{1}}{\left[Z_{G_{1},v_{1}}\right]_{0}\cdot \left[Z_{G_{2},v_{2}}\right]_{0}}=R_{G_{1},v_{1}}\cdot R_{G_{2},v_{2}}=r_{1}r_{2},$$
hence $r_{1}r_{2}$ is realizable.
\end{proof}
\begin{lemma}\label{lem:mobiusmap}
If $r\in\mathbb{R}$ is realizable and $r\neq -\beta$, then $\frac{1+\gamma r}{\beta + r}$ is also realizable.
\end{lemma}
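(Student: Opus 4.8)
The plan is to take a graph $G$ with vertex $v$ realizing $r = R_{G,v}$ and build a new graph $G'$ with distinguished vertex $v'$ by attaching a small gadget to $v$. The natural gadget here is: introduce a fresh vertex $v'$ and join it to $v$ by a single edge. Then a spin configuration on $G'$ restricts to a spin configuration on $G$ together with a choice $\sigma(v')\in\{0,1\}$, and the new edge $\{v,v'\}$ contributes the factor $A_{\sigma(v),\sigma(v')}$. Recalling our normalization $A_{00}=\beta$, $A_{01}=A_{10}=1$, $A_{11}=\gamma$, we get
\begin{align*}
[Z_{G',v'}]_{0} &= \beta\,[Z_{G,v}]_{0} + 1\cdot[Z_{G,v}]_{1},\\
[Z_{G',v'}]_{1} &= 1\cdot[Z_{G,v}]_{0} + \gamma\,[Z_{G,v}]_{1}.
\end{align*}
Dividing through by $[Z_{G,v}]_{0}$ (which is nonzero since $r$ is realizable) gives $[Z_{G',v'}]_{0} = [Z_{G,v}]_{0}(\beta + r)$ and $[Z_{G',v'}]_{1} = [Z_{G,v}]_{0}(1 + \gamma r)$.

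Next I would observe that the hypothesis $r \neq -\beta$ guarantees $[Z_{G',v'}]_{0} = [Z_{G,v}]_{0}(\beta+r) \neq 0$, so the ratio $R_{G',v'}$ is well-defined, and it equals
\[
R_{G',v'} = \frac{[Z_{G',v'}]_{1}}{[Z_{G',v'}]_{0}} = \frac{1 + \gamma r}{\beta + r}.
\]
This exhibits $G'$ (with distinguished vertex $v'$) as a realization of $\frac{1+\gamma r}{\beta+r}$, completing the proof.

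There isn't really a hard step here — the main thing to be careful about is the bookkeeping of which matrix entry multiplies which coordinate (i.e. getting $\beta$ paired with the "$0$"-configuration of $v$ and $\gamma$ with the "$1$"-configuration), and explicitly invoking $r\neq-\beta$ to ensure the denominator of the new ratio does not vanish. One could alternatively phrase the same computation by noting that appending a pendant edge corresponds to multiplying the activity vector by the matrix $A$ itself, and the induced action on the ratio $R$ is exactly the Möbius map $r \mapsto (1+\gamma r)/(\beta+r)$ associated with $A^{\mathsf{T}}$ acting on $\mathbb{P}^1$; but the direct two-line calculation above is the cleanest way to present it.
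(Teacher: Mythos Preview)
Your proof is correct and follows exactly the same approach as the paper: attach a fresh pendant vertex to $v$, compute the two coordinates of the new activity vector as $\beta[Z_{G,v}]_0+[Z_{G,v}]_1$ and $[Z_{G,v}]_0+\gamma[Z_{G,v}]_1$, and read off the ratio. If anything, you are slightly more explicit than the paper in invoking $r\neq -\beta$ to guarantee $[Z_{G',v'}]_0\neq 0$.
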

\begin{proof}
If $R_{G,v}=r$, define a graph $G'$ with vertex set $V(G)\cup\{u\}$ and edge set $E(G)\cup\{\{u,v\}\}$, i.e. we attach a new edge to the vertex $v$ in $G$. Then
$$R_{G',u}=\frac{\left[Z_{G',u}\right]_{1}}{\left[Z_{G',u}\right]_{0}}=
\frac{\left[Z_{G,u,v}\right]_{1,0}+\left[Z_{G,u,v}\right]_{1,1}}{\left[Z_{G,u,v}\right]_{0,0}+\left[Z_{G,u,v}\right]_{0,1}}=
\frac{\left[Z_{G,v}\right]_{0}+\gamma\cdot \left[Z_{G,v}\right]_{1}}
{\beta\cdot \left[Z_{G,v}\right]_{0}+\left[Z_{G,v}\right]_{1}}=\frac{1+\gamma r}{\beta +r},
$$
hence $\frac{1+\gamma r}{\beta + r}$ is realizable.
\end{proof}
\begin{lemma}\label{lem:greaterthan1}
Let $\beta,\gamma$ be real numbers such that $(\beta,\gamma)\in\Gamma$. Then some real number in $(1,+\infty)$ is realizable.
\end{lemma}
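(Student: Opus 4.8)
The plan is to reduce to producing a realizable number $r$ with $|r|>1$, since Lemma~\ref{lem:productrealize} then makes $r^{2}\in(1,+\infty)$ realizable. To generate realizable numbers I will combine three elementary gadgets with the trivial one-vertex graph (which realizes $1$): (i) attaching a self-loop at the distinguished vertex multiplies the two coordinates of its activity vector by $\beta$ and $\gamma$ respectively, so it sends a realizable $r$ to $\tfrac{\gamma}{\beta}r$ when $\beta\neq 0$; (ii) attaching a pendant edge realizes $\phi(r):=\tfrac{1+\gamma r}{\beta+r}$, which is exactly Lemma~\ref{lem:mobiusmap}; and (iii) the two-vertex graph with a doubled edge realizes $\tfrac{1+\gamma^{2}}{1+\beta^{2}}$ (read off the activity vector of one of its two vertices). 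Fact~(i) is immediate from the definition of $[Z_{G,v}]$, and (iii) is a one-line computation.

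I then split on whether $|\gamma|>|\beta|$. Given the constraints $\beta>\gamma$ and $\gamma<0$ defining $\Gamma$, the region $|\gamma|>|\beta|$ is exactly $\{\beta\le 0\}\cup\{\beta>0,\ \beta+\gamma<0\}$, and there gadget (iii) already gives $\tfrac{1+\gamma^{2}}{1+\beta^{2}}>1$. So the real work is in the complementary region within $\Gamma$, namely $\beta>0$, $\gamma<0$, $0\le\beta+\gamma<1$, $(\beta,\gamma)\neq(1,-1)$. In this region I start from $\tfrac{\gamma}{\beta}$ (gadget (i)) and apply $\phi$. Writing $\phi\!\left(\tfrac{\gamma}{\beta}\right)=\tfrac{\beta+\gamma^{2}}{\beta^{2}+\gamma}=:\tfrac{N}{D}$ with $N=\beta+\gamma^{2}>0$, the inequality $\left|\tfrac{N}{D}\right|>1$ is equivalent to $(N-D)(N+D)>0$. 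One computes $N-D=(\gamma-\beta)(\beta+\gamma-1)$, a product of two negative quantities on all of $\Gamma$, hence positive; and $N+D=(\beta+\tfrac12)^{2}+(\gamma+\tfrac12)^{2}-\tfrac12$, which is positive in this region because $\beta\ge-\gamma>0$ forces $(\beta+\tfrac12)^{2}+(\gamma+\tfrac12)^{2}\ge(-\gamma+\tfrac12)^{2}+(\gamma+\tfrac12)^{2}=2\gamma^{2}+\tfrac12>\tfrac12$. So as long as $D=\beta^{2}+\gamma\neq 0$, Lemma~\ref{lem:mobiusmap} applies to $\tfrac{\gamma}{\beta}$ (the hypothesis $\tfrac{\gamma}{\beta}\neq-\beta$ is precisely $D\neq 0$) and produces a realizable number of absolute value greater than $1$.

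What remains is the curve $\gamma=-\beta^{2}$ with $0<\beta<1$, where $D=0$ and $\tfrac{\gamma}{\beta}=-\beta$ so the previous step is unavailable. For this curve I instead realize $t:=\tfrac{\gamma}{\beta}\cdot\tfrac{1+\gamma}{1+\beta}$, a product (Lemma~\ref{lem:productrealize}) of a self-loop gadget and a pendant-edge gadget; it simplifies to $t=\beta^{2}-\beta$. Since $t\neq-\beta$, Lemma~\ref{lem:mobiusmap} gives that $\phi(t)=\tfrac{1+\beta^{3}-\beta^{4}}{\beta^{2}}$ is realizable, and $\phi(t)-1=\tfrac{(1-\beta)(1+\beta+\beta^{3})}{\beta^{2}}>0$, which completes this case.

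The only delicate points are identifying which elementary gadget to use in each sub-region and verifying the small inequalities for $N\pm D$ and for $\phi(t)-1$; the degenerate curve $\gamma=-\beta^{2}$ is the single place where the first-choice gadget breaks down and the two-step gadget is needed. Everything else is routine.
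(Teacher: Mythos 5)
Your proof is correct, and it takes a genuinely different route from the paper's. The paper splits into four cases keyed on the sign of $\beta+\gamma$ and whether $\beta=0$ or $\gamma=-\beta^2$; for each, it exhibits a small concrete gadget and shows by a direct algebraic identity that the resulting ratio exceeds~$1$. Your argument instead reduces to producing $|r|>1$ and squaring, and your first split (on $|\gamma|>|\beta|$) lets the doubled-edge gadget $\tfrac{1+\gamma^2}{1+\beta^2}$ dispose of the paper's Cases~1 and~2 (including $\beta=0$) in one stroke, which is a cleaner unification. In the region $\beta>0$, $\beta+\gamma\geq 0$ you realize $N/D=\tfrac{\beta+\gamma^2}{\beta^2+\gamma}$ by a self-loop followed by a pendant edge and verify $|N/D|>1$ via the factorization $(N-D)(N+D)=(\beta-\gamma)(1-\beta-\gamma)(\beta^2+\gamma^2+\beta+\gamma)$; the paper realizes $(N/D)^2$ directly on a three-vertex graph and uses precisely the same factorization, so the key algebra is identical even though the gadgets and the ``square then check, or check then square'' order differ. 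For the degenerate curve $\gamma=-\beta^2$, $0<\beta<1$, your gadget (realizing $t=\beta^2-\beta$ and applying the M\"obius map) is different from the paper's (a three-vertex path with a terminal self-loop realizing $\gamma$, then the M\"obius map), but both handle the case correctly. Net assessment: same underlying arithmetic, but your case structure is more economical and your Case~A observation is a genuine simplification worth noting.
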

\begin{proof}
We divide the proof into the following cases:\\
Case 1: $\beta+\gamma <0$ and $\beta\neq 0$. Take $V(G)=\{v\}$ and let $E(G)$ consist of 2 self loops on $v$. Then 
$$R_{G,v}=\left(\frac{\gamma}{\beta}\right)^{2}=1+\frac{(\beta-\gamma)(-\beta-\gamma)}{\beta^{2}}>1.$$
Case 2: $\beta=0$. Take $V(G)=\{v_{1},v_{2}\}$ and $E(G)=\{\{v_{1},v_{2}\},\{v_{2},v_{2}\}\}$. We have
$R_{G,v_{1}}=\frac{\beta+\gamma^{2}}{\beta^{2}+\gamma}=\gamma$.
By applying \Cref{lem:mobiusmap} 
since $\gamma \neq -\beta$, $(1+\gamma^2)/(\beta+\gamma)$ is realizable. By
applying~\Cref{lem:productrealize}, it follows that the real number 
$(1+\gamma^2)^2/\gamma^2$ is realizable. Since $-2<\gamma<0$ this quantity is at least~$4$. \\
Case 3: $\beta+\gamma\geq 0$ and $\gamma\neq -\beta^{2}$. Take $V(G)=\{v_{1},v_{2},v_{3}\}$ and $E(G)=\{\{v_{1},v_{2}\},\{v_{1},v_{3}\},\{v_{2},v_{2}\},\allowbreak
\{v_{3},v_{3}\}\}$. Then
$$R_{G,v_{1}}=\left(\frac{\beta+\gamma^{2}}{\beta^{2}+\gamma}\right)^{2}=1+\frac{(\beta-\gamma)(1-\beta-\gamma)(\beta^{2}+\gamma^{2}+\beta+\gamma)}{(\beta^{2}+\gamma)^{2}}>1.$$
Case 4: $\beta+\gamma\geq 0$ and $\gamma=-\beta^{2}$. Since $(\beta,\gamma)\neq  (1,-1)$, it follows that $-1<\gamma<0$ and $0<\beta+\gamma<1$. Take $V(G)=\{v_{1},v_{2},v_{3}\}$ and $E(G)=\{\{v_{1},v_{2}\},\{v_{2},v_{3}\},\{v_{3},v_{3}\}\}$. We have
$$R_{G,v_{1}}=\frac{\beta^{2}+\gamma+\gamma(\beta+\gamma^{2})}{\beta(\beta^{2}+\gamma)+\beta+\gamma^{2}}=\gamma.$$
 
By applying \Cref{lem:mobiusmap} 
since $\gamma \neq -\beta$, $(1+\gamma^2)/(\beta+\gamma)>1$ is realizable.
\end{proof}

\begin{lemma}\label{lem:minus1to0}
Let $\beta,\gamma$ be real numbers such that $(\beta,\gamma)\in\Gamma$. Then some real number in $(-1,0)$ is realizable.
\end{lemma}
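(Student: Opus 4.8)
The plan is to build the desired number in $(-1,0)$ out of the three operations already available: the constant $1$ is realizable (take $G$ a single isolated vertex $v$, so that $[Z_{G,v}]_{0}=[Z_{G,v}]_{1}=1$ and $R_{G,v}=1$); products of realizable numbers are realizable (\Cref{lem:productrealize}); and appending a pendant edge turns a realizable $r\neq-\beta$ into $\phi(r):=\frac{1+\gamma r}{\beta+r}$ (\Cref{lem:mobiusmap}). The crucial structural fact is the identity $\phi(r)=\gamma+\frac{1-\beta\gamma}{\beta+r}$: since $\beta\gamma<1$ on $\Gamma$ (as already noted), the fraction $\frac{1-\beta\gamma}{\beta+r}$ is positive and tends to $0$ as $r\to+\infty$ through large positive values, so $\phi$ evaluated at large positive realizable inputs produces numbers lying just above $\gamma$. \Cref{lem:greaterthan1} together with \Cref{lem:productrealize} supplies such inputs: if $r_{0}>1$ is realizable then so is every power $r_{0}^{k}$, and $r_{0}^{k}\to+\infty$.

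I would then split on the position of $\gamma$. Case $\gamma\in[-1,0)$: here $\beta>\gamma\geq-1$, so $\beta>-1$, and in particular $-\beta<1<r_{0}^{k}$, which makes \Cref{lem:mobiusmap} applicable to every $r_{0}^{k}$. From $1-\beta\gamma>0$ we get $\phi(r_{0}^{k})>\gamma\geq-1$ for all $k$, while $\phi(r_{0}^{k})\to\gamma<0$ forces $\phi(r_{0}^{k})<0$ once $k$ is large; choosing such a $k$ gives a realizable number in $(-1,0)$. Case $\gamma<-1$: now the limit $\gamma$ lies below $-1$, so the large-input argument fails, and instead I would apply $\phi$ a single time to $r=1$. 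The defining inequalities of $\Gamma$ give $\beta>-2-\gamma>-1$, hence $1+\beta>0$; then $\phi(1)=\frac{1+\gamma}{1+\beta}$ has negative numerator and positive denominator, so $\phi(1)<0$, and $\phi(1)>-1$ is precisely the inequality $\beta+\gamma>-2$. Thus $\phi(1)\in(-1,0)$ is realizable.

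The one point requiring genuine care --- and what I see as the main (mild) obstacle --- is noticing that the uniform ``iterate $\phi$ on large numbers'' strategy really does break down for $\gamma<-1$, and that it is exactly in that regime that the constraint $\beta+\gamma>-2$ defining $\Gamma$ forces $\beta>-1$, which in turn is exactly what pins the one-step value $\phi(1)=\frac{1+\gamma}{1+\beta}$ inside the target interval. Everything else is routine verification: confirming $r_{0}^{k}\neq-\beta$ so that \Cref{lem:mobiusmap} may legitimately be invoked, and checking the two elementary equivalences $\phi(1)>-1\Leftrightarrow\beta+\gamma>-2$ and $\phi(r)>\gamma\Leftrightarrow 1-\beta\gamma>0$, the latter coming from the recorded bound $\beta\gamma<1$ on $\Gamma$.
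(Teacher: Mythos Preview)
Your proposal is correct and follows essentially the same approach as the paper: the same case split on whether $\gamma<-1$ or $\gamma\geq-1$, with the one-step value $\phi(1)=\frac{1+\gamma}{\beta+1}$ in the former case (the paper obtains this directly from a single-edge graph, which is exactly what ``realize $1$, then apply $\phi$'' constructs) and the iterated-power argument $\phi(r_{0}^{k})\to\gamma^{+}$ in the latter. The only cosmetic difference is that the paper makes the threshold explicit, choosing $r>-1/\gamma$ rather than saying ``$k$ large enough''.
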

\begin{proof}
Notice that in this range $-2<\beta+\gamma<2\beta$ and so $\beta>-1$. Consider the following 2 cases:\\
Case 1: $\gamma<-1$. Let $G$ consists of a single edge $\{v_{1},v_{2}\}$. We have 
$$R_{G,v_{1}}=\frac{1+\gamma}{\beta+1}=-1+\frac{\beta+\gamma+2}{\beta +1}>-1,$$
and, since $\gamma<-1$, $R_{G,v_1}$ is also less than 0. So $R_{G,v_{1}}$ gives a realizable ratio in $(-1,0)$.\\
Case 2: $\gamma\geq -1$. From \Cref{lem:greaterthan1}, and since we can take arbitrary powers due to \Cref{lem:productrealize}, we know some real number $r>-\frac{1}{\gamma}$ is realizable. 
Moreover, since $\beta\gamma<  1$, 
we have $-\beta < -1/\gamma$, so $r\neq -\beta$ and we can apply Lemma~\ref{lem:mobiusmap}. Also $\beta+r > \beta -1/\gamma > 0$. Appying the lemma,
we have
$$\frac{1+\gamma r}{\beta + r}=\gamma+\frac{1-\beta\gamma}{\beta + r}>\gamma\geq -1.$$
Since $r>-\frac{1}{\gamma}$, 
the quantity $(1+\gamma r)/(\beta+r)$
is also less than 0. So $\frac{1+\gamma r}{\beta + r}$ is a realizable ratio in $(-1,0)$. 
\end{proof}

\begin{proposition}\label{prop:dense}
Fix real parameters $\beta,\gamma$ such that $(\beta,\gamma)\in\Gamma$. For any real numbers $R\neq 0$ and $\varepsilon>0$, some real number strictly between $e^{-\varepsilon}R$ and $e^{\varepsilon}R$ is realizable.
\end{proposition}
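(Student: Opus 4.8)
The plan is to show that, writing $\mathcal{R}$ for the set of numbers realizable under the fixed parameters $\beta,\gamma$ and $\mathcal{R}_{>0}=\mathcal{R}\cap(0,\infty)$, the set $\log\mathcal{R}_{>0}:=\{\log r:\ r\in\mathcal{R}_{>0}\}$ is dense in $\mathbb{R}$. This is exactly \Cref{prop:dense} for targets $R>0$. The case $R<0$ then follows by multiplying (\Cref{lem:productrealize}) a fixed negative realizable number $b$ --- supplied by \Cref{lem:minus1to0} --- by a positive realizable number whose logarithm is within $\varepsilon$ of $\log(R/b)$; here $R/b>0$, and one keeps track of the fact that multiplying by the negative number $b$ reverses the two inequalities defining ``strictly between $e^{-\varepsilon}R$ and $e^{\varepsilon}R$''.

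To reach $\log\mathcal{R}_{>0}$, first collect the available ingredients. By \Cref{lem:greaterthan1} and \Cref{lem:minus1to0} there are realizable numbers $a>1$ and $b\in(-1,0)$; by \Cref{lem:productrealize}, $b^{2}\in(0,1)$ and every power $a^{n}$ are realizable, and $\mathcal{R}_{>0}$ is closed under multiplication, so $\log\mathcal{R}_{>0}$ is a sub-semigroup of $(\mathbb{R},+)$ containing the positive element $\log a$ and the negative element $\log b^{2}$. The missing ingredient is a convergent sequence of distinct realizable numbers. Since $(\beta,\gamma)\in\Gamma$ forces $\beta\gamma<1$ and $\gamma<0$, for all large $n$ we have $a^{n}\neq-\beta$ and $\beta+a^{n}>0$, so \Cref{lem:mobiusmap} produces the realizable numbers
$$\frac{1+\gamma a^{n}}{\beta+a^{n}}=\gamma+\frac{1-\beta\gamma}{\beta+a^{n}},$$
which are pairwise distinct for large $n$ and converge to $\gamma\neq0$; squaring them (\Cref{lem:productrealize}) yields distinct positive realizable numbers converging to $\gamma^{2}>0$. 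Hence $\log\mathcal{R}_{>0}$ contains a convergent sequence of distinct elements, and --- translating by a suitable multiple of $\log a$ so that the limit lies in $(0,\infty)$ --- it contains, for every $\eta>0$, two positive elements differing by less than $\eta$.

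It then remains to prove the elementary fact that a sub-semigroup $T$ of $(\mathbb{R},+)$ containing a negative element $q$ and, for every $\eta>0$, two positive elements differing by less than $\eta$ is dense in $\mathbb{R}$. I would argue by a ``comb'' construction: given a target $x$ and accuracy $\delta>0$, fix positive $s_{1}>s_{2}$ in $T$ with $h:=s_{1}-s_{2}<\delta$; for each integer $N\geq1$ the elements $a s_{1}+(N-a)s_{2}=N s_{2}+a h$ with $0\le a\le N$ lie in $T$ and form an arithmetic progression of common difference $h$ spanning the interval $[N s_{2},N s_{1}]$ of length $Nh$; adding non-negative multiples of $q$ (still in $T$) shifts this progression downward by multiples of $|q|$. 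Choosing $N$ with $Nh>|q|$ and $N s_{2}>x$, the shifted progressions together cover $(-\infty,N s_{2}]$ leaving consecutive gaps at most $h<\delta$, so some element of $T$ lies within $\delta$ of $x$. Applying this with $T=\log\mathcal{R}_{>0}$ completes the proof. The steps requiring the most care are the distinctness and the ``limit is positive'' normalization of the auxiliary sequence (the borderline value $\gamma=-1$, where the squares tend to $1$, is exactly why one first translates by a multiple of $\log a$) and the verification that the shifted combs really do cover a half-line with small gaps; neither is a genuine obstacle, and I do not expect any single step to be hard.
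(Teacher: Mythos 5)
Your proof is correct, and it reaches the same destination as the paper's proof using largely the same raw ingredients (Lemmas~\ref{lem:productrealize}--\ref{lem:minus1to0} and the convergence $\frac{1+\gamma a^n}{\beta+a^n}\to\gamma$), but with a cleaner abstract packaging. Where the paper constructs three explicit realizable numbers $R_1,R_2>1$ with $R_2/R_1\in(1,e^\varepsilon)$ and $R_3\in(0,1)$ and then hits the target via a bespoke geometric-progression argument (the system \eqref{ineq:system}), you isolate the abstract property being used: $\log\mathcal{R}_{>0}$ is a sub-semigroup of $(\mathbb{R},+)$ containing a negative element and, for every $\eta>0$, two positive elements within $\eta$ of each other; your ``comb'' lemma then delivers density in one stroke. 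This is a genuine reorganization rather than a new idea---the paper's inequalities \eqref{ineq:system} and your comb covering are really the same mechanism---but your version is more modular and the general density lemma is reusable. One thing the paper's more explicit route buys is visibility into the quantitative cost: the remark after the proof reads off from \eqref{eq:simplifiedlong} that the graph size grows like $\Omega(\varepsilon^{-1})$, which motivates the much more careful \Cref{thm:exponentialaccuracy}; your abstract lemma obscures that bound, though of course it could be recovered. Minor points that you flag and that do go through: the $r_n=\gamma+\frac{1-\beta\gamma}{\beta+a^n}$ are strictly decreasing to $\gamma$ (using $\beta\gamma<1$, noted at the start of \Cref{sec:hardness}), hence distinct and eventually negative, so their squares are distinct positives; and the translation by $\log a$ to push the accumulation point into $(0,\infty)$ correctly handles the boundary case $\gamma=-1$.
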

\begin{proof}
We first assume that $R>0$. Take a realizable ratio $r_{0}$ greater than 1 (which exists by Lemma~\ref{lem:greaterthan1}), raise to the $k$th power (applying Lemma~\ref{lem:productrealize}) for some $k$ that is sufficiently large  that $r_0^k > -\beta$, and then apply \Cref{lem:mobiusmap}. This realizes a ratio
$$r_{1}=\frac{1+\gamma r_{0}^{k}}{\beta+r_{0}^{k}}=\gamma+\frac{1-\beta\gamma}{\beta +r_{0}^{k}}.$$
Since $\beta \gamma < 1$,
for a sufficiently large $k$, the ratio $r_{1}$ lies in the interval $(\gamma,e^{-\varepsilon}\gamma)$. Let
$$r_{2}=\frac{1+\gamma r_{0}^{k+1}}{\beta + r_{0}^{k+1}},$$
which  
satisfies $\gamma < r_2 < r_1 < \gamma e^{-\epsilon}$.
Thus, $r_{2}/r_{1}\in (1,e^{\varepsilon})$.

By Lemma~\ref{lem:minus1to0}, a number $r\in (-1,0)$ can be realized.
By Lemma~\ref{lem:productrealize}, the number $r r_0^j$ can be realized for any positive integer~$j$.
We will take $j$ large enough that $|r r_0^j | > 1/|r_1| > 1/|r_2|$. 
By Lemma~\ref{lem:productrealize}, the quantities 
$R_1 = r_1 r r_0^j$ and $R_2 = r_2 r r_0^j$ can be realized. These are in the range    $(1,+\infty)$  and have $R_2/R_1 = r_{2}/r_{1}\in (1,e^{\varepsilon})$. Moreover, 
by Lemma~\ref{lem:productrealize}, the quantity $R_3 = r^2 \in (0,1)$ can be realized.  To finish we will show that the multiplicative semigroup generated by $\{R_{1},R_{2},R_{3}\}$   intersects $(e^{-\varepsilon}R,e^{\varepsilon}R)$. To see this, consider the following system of inequalities:
\begin{equation}\label{ineq:system}
 \begin{cases}
R_{3}^{m}R_{1}^{n}\leq e^{-\varepsilon}R\\
R_{3}^{m}R_{2}^{n}\geq e^{\varepsilon}R.
\end{cases}   
\end{equation}

If some positive integers $m,n$ satisfy the above system of inequalities, then due to the fact that $R_{2}/R_{1}\in(1,e^{\varepsilon})$, at least one term of the geometric progression 
$$R_{3}^{m}R_{1}^{n},\quad R_{3}^{m}R_{1}^{n-1}R_{2},\quad\dots,\quad R_{3}^{m}R_{1}R_{2}^{n-1}, \quad R_{3}^{m}R_{2}^{n}$$
falls in the interval $(e^{-\varepsilon}R,e^{\varepsilon}R)$. What's more, as a product of realizable numbers, each term is realizable under the parameters $(\beta,\gamma)$. So it only remains to show that the system \eqref{ineq:system} has a positive integer solution. 

In order to ensure a solution for $n$, the requirements on $m$ are 
$$
R_{3}^{m} R_1<e^{-\varepsilon}R
$$
(this ensures a positive solution for $n$) and
$$
\log_{R_{1}}\left(\frac{e^{-\varepsilon}R}{R_{3}^{m}}\right)-\log _{R_{2}}\left(\frac{e^{\varepsilon}R}{R_{3}^{m}}\right)\geq 1
$$
(this ensures an integer solution for $n$). Using $R_2>R_1$, the latter simplifies to
\begin{equation}\label{eq:simplifiedlong}
m\ln\frac{1}{R_{3}}\geq \frac{\ln R_{1}\cdot\ln R_{2}+\varepsilon(\ln R_{1}+\ln R_{2})}{\ln R_{2}-\ln R_{1}}-\ln R.
\end{equation}
Since $R_{3}<1$, a sufficiently large integer $m$ satisfies both requirements. This concludes the proof in the case $R>0$. 

In the case $R<0$, pick any negative realizable ratio $r$, as in \Cref{lem:minus1to0}. Since $R/r>0$, we already know some real number in $(e^{-\varepsilon}R/r,e^{\varepsilon}R/r)$ is realizable. Multiplying it by $r$ gives a realizable ratio in $(e^{\varepsilon}R,e^{-\varepsilon}R)$.
\end{proof}
\begin{remark}
In \Cref{prop:dense}, we showed that the set of realizable ratios is dense in $\mathbb{R}$. However, we didn't control the size of the graph used in the approximation. It's worth noting that the dependency of the size on the accuracy parameter $\varepsilon$ is at least inverse linear: since $R_{2}/R_{1}=r_{2}/r_{1}=1+O(\varepsilon)$, i.e. $\log R_{2}-\log R_{1}=O(\varepsilon)$, by requirement \eqref{eq:simplifiedlong} the integer $m$ must be $\Omega(\varepsilon^{-1})$. This turns out to be insufficient on its own for proving \#P-hardness. In the following section, we will strengthen the dependency on $\varepsilon$ to polylogarithmic. In other words, we will approximately realize any ratio $R$ with exponential accuracy.
\end{remark}
\subsection{Exponential Accuracy}\label{subsec:exponentialaccuracy}
Actually, in addition to realizing with exponential accuracy, we must also \textit{efficiently compute} the graph $G$ that realizes a given ratio. This means it's necessary to quantify the computational expense. Assume $\beta$ and $\gamma$ are fixed real numbers, and that the input parameters $R$ and $\varepsilon$ are both rational numbers written in standard fraction forms. By ``polynomial-time algorithm'' we mean the running time is polynomial in the number of bits in the representations of the input parameters. In particular, since $\varepsilon$ is representable using $\log(\varepsilon^{-1})$ bits, the running time is polynomial in $\log(\varepsilon^{-1})$.

\begin{theorem}\label{thm:exponentialaccuracy}
Fix rational numbers $\beta,\gamma$ such that $(\beta,\gamma)\in\Gamma$. There is a polynomial-time algorithm that, given as input rational numbers $R>0$ and $\varepsilon>0$, outputs a graph $G$ and a vertex $v\in V(G)$ such that $\dfrac{[Z_{G,v}]_{1}}{[Z_{G,v}]_{0}}\in(e^{-\varepsilon}R,e^{\varepsilon}R)$.
\end{theorem}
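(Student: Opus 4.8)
The plan is to build the target ratio by a binary-expansion / repeated-squaring strategy, upgrading the crude density argument of \Cref{prop:dense} into an efficient one. The key observation is that \Cref{lem:productrealize} lets us \emph{square} a realized ratio (take the wedge sum of a graph with a copy of itself), so from a single realizable seed $r_0>1$ we can realize $r_0^{2^k}$ using a graph of size only $O(k \cdot |G_0|)$ rather than size $2^k|G_0|$. Likewise \Cref{lem:productrealize} lets us realize any finite product of previously constructed ratios with additive (hence polynomial) blow-up in graph size, and \Cref{lem:mobiusmap} lets us apply the Möbius transformation $r\mapsto (1+\gamma r)/(\beta+r)$ at the cost of one extra edge. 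So the real content is a \emph{numerical} lemma: using only squaring, products, and this one Möbius map, starting from the seeds guaranteed by \Cref{lem:greaterthan1} and \Cref{lem:minus1to0}, one can hit any prescribed positive $R$ within multiplicative error $e^{\pm\varepsilon}$ with only $\mathrm{poly}(\log(1/\varepsilon), \text{bitsize}(R))$ arithmetic operations, each operation being realizable with a small graph gadget.

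First I would fix, once and for all (depending only on $\beta,\gamma$), a realizable $a>1$ and a realizable $b\in(-1,0)$ from \Cref{lem:greaterthan1} and \Cref{lem:minus1to0}, together with the small explicit graphs realizing them. Working in the additive group $\log|\cdot|$, the realizable numbers we can cheaply produce include $\log a$, $\log(-b)$, and — crucially — a value that can be made \emph{arbitrarily small but positive}: applying \Cref{lem:mobiusmap} to a large power $a^{2^k}$ gives a ratio of the form $\gamma + (1-\beta\gamma)/(\beta + a^{2^k})$, which tends to $\gamma$ from the correct side at rate $\Theta(a^{-2^k})$, i.e. doubly-exponentially fast in $k$. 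Taking a suitable ratio of two such consecutive values (as in the $r_2/r_1$ step of \Cref{prop:dense}) yields a realizable number $\rho_k$ with $\log \rho_k = \Theta(a^{-2^k})$, a ``fine adjustment unit'' whose logarithm shrinks doubly-exponentially while its realizing graph grows only linearly in $k$. With a coarse unit of size $\Theta(1)$ (powers of $a$, and $b^2\in(0,1)$ for the other direction) and a fine unit of size $e^{-2^{\Theta(k)}}$, a greedy two-scale rounding scheme writes $\log R$ as an integer combination of these units up to error $\varepsilon$: use the coarse unit to get within $O(1)$, then — since we can realize $\rho_k$ for $k$ up to $\mathrm{poly}(\log(1/\varepsilon))$ with polynomial-size graphs — use $\rho_k$ with $2^{\Theta(k)} \asymp \log(1/\varepsilon)$ to close the gap to within $\varepsilon$, the number of $\rho_k$-copies needed being at most $O(1)/\log\rho_k = e^{2^{\Theta(k)}}$, which is realized by repeated squaring in $\mathrm{poly}(k)$ steps. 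I would organize this as: (i) a subroutine \textsc{Power} producing $r^{2^k}$; (ii) a subroutine producing the fine unit $\rho_k$; (iii) a greedy combiner; and (iv) a bookkeeping argument that every intermediate graph has polynomial size and every arithmetic check (which power to square, whether to include a factor) is a polynomial-time rational computation.

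The main obstacle I expect is \textbf{controlling the interaction between the two scales while keeping everything realizable (in particular, avoiding the forbidden value $r=-\beta$ in every application of \Cref{lem:mobiusmap}) and keeping all intermediate ratios bounded away from $0$ and $\infty$ in a controlled way}. Two specific technical headaches: (a) the fine unit $\rho_k$ is only a ratio of two near-$\gamma$ quantities, so to \emph{realize} $\rho_k$ itself (not just $r_1$ and $r_2$) one must, as in \Cref{prop:dense}, pre-multiply by a large power $a^j b$ to push $r_1,r_2$ above $1$ before taking their quotient via \Cref{lem:productrealize} — and the exponent $j$ must be chosen (polynomially) as a function of $k$, with a proof that $\beta + r$ stays positive at the relevant step; (b) the greedy combiner must be shown to terminate with the right sign and within $\varepsilon$ even though $\log\rho_k$ is not a ``nice'' rational — here one only needs \emph{two-sided} realizable units near $\gamma$ (approaching from above) and near, say, $\gamma^{-1}$ or $b^2$ (giving a value in $(0,1)$), so that both $\log R - (\text{current sum})$ signs are coverable; establishing that a doubly-exponentially small step size suffices to certify the $e^{\pm\varepsilon}$ bound is the crux, and it is exactly the place where the doubly-exponential convergence rate of the Möbius iteration (versus the merely geometric rate implicitly used in \Cref{prop:dense}) does the work. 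Everything else — the graph-size accounting and the polynomial-time decidability of each step — is routine once the numerical scheme is pinned down.
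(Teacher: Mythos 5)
The central numerical insight --- use the doubly-exponential convergence rate of the M\"obius iteration $f(r)=(1+\gamma r)/(\beta+r)$ to hit a target with exponential accuracy in polynomially many steps --- is the right one, and it is indeed the engine of the paper's proof. But the way you try to harvest it rests on an accounting error about gadget sizes. Lemma~\ref{lem:productrealize} realizes the product $r_1r_2$ by taking the \emph{wedge sum} of the two realizing graphs, and wedge sums physically add vertex counts. So the wedge sum of $G$ with a copy of itself has size $\approx 2|G|$, and iterating, the graph realizing $r_0^{2^k}$ has size $\Theta(2^k|G_0|)$, not $O(k\cdot|G_0|)$ as you claim. There is no ``exponentiation by squaring'' shortcut: unlike arithmetic circuits, gadget products cannot share sub-gadgets, and each factor in a realized product needs its own disjoint copy. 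The same error recurs fatally in your greedy combiner: to realize $\rho_k^m$ with $m=e^{2^{\Theta(k)}}$ copies of the fine unit you need $m$ disjoint copies of the $\rho_k$-gadget, not ``$\mathrm{poly}(k)$ steps of repeated squaring''. With $2^k\asymp\log(1/\varepsilon)$ as you propose, $m\approx 1/\varepsilon$, so the output graph has size $\Omega(1/\varepsilon)$ --- exponential in the input length $\log(1/\varepsilon)$, not polynomial. This is precisely the obstacle the paper's algorithm is engineered to avoid.

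The paper's actual construction is a \emph{backward iteration along a path}: the output graph is a path $v_0,\dots,v_n$ with a constant-size gadget $G_i$ (realizing a ratio $x_i$) hung on each node, and the ratio at $v_0$ is determined by the recursion $y_n=x_n$, $y_k=x_k\cdot f(y_{k+1})$. The doubly-exponential speedup enters not through tiny ``units'' being summed up, but as a \emph{contraction property of $f$}: on a suitable interval $[c,d]$ one arranges $\bigl(f'(c)/f(c)\bigr)d<\tfrac12$ (the landmark condition $b/(2d)<f'(c)$ with $b=f(c)$, $a=f(d)$), which makes the endpoint-ratio of the target window at least \emph{square} each time the window is pushed one step backward through $f$ (Claim~\ref{claim:square}). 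Hence $n=O(\log(1/\varepsilon))$ path steps suffice, and crucially each step costs only a constant-size gadget, giving total graph size $O(\log R+\log(1/\varepsilon))$. A ``units'' approach could in principle be rescued with a geometric hierarchy of scales $\delta_i\approx 2^{-i}$, each realized by an $O(i)$-size gadget and used $O(1)$ times --- but establishing that while keeping intermediate ratios in admissible ranges is essentially re-deriving the path construction in disguise, and the pivot it needs is exactly the contraction estimate above, which your sketch does not supply.
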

\begin{proof} The proof is somewhat lengthy, so we divide it into several parts: 

\vspace{6pt}
\noindent \textbf{Part I: Preparations.} Our graph $G$ will have a path as its backbone, with additional gadgets attached on nodes:

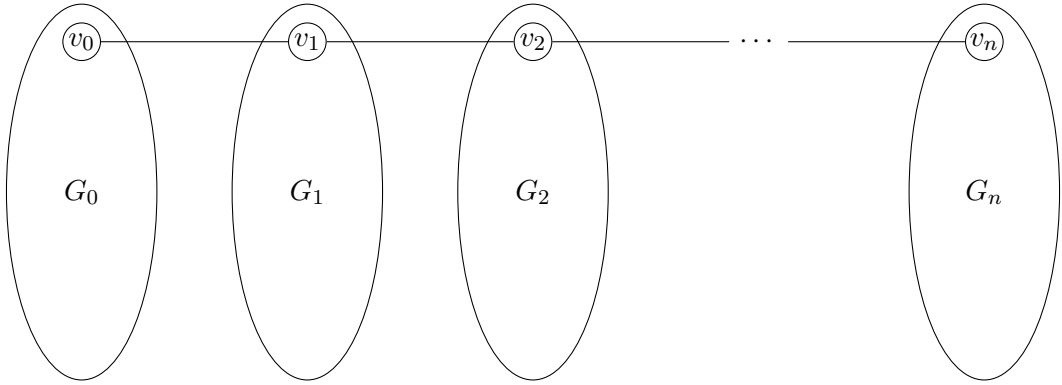
\begin{figure}[H]
\begin{center}
{
\tikzset{lab/.style={circle,draw,inner sep=0pt,fill=none,minimum size=5mm}}
\begin{tikzpicture}
\draw (0,3) node [lab] (0) {$v_{0}$};
\draw (3,3) node [lab] (1) {$v_{1}$};
\draw (6,3) node [lab] (2) {$v_{2}$};
\draw (9,3) node (3) {$\cdots$};
\draw (12,3) node [lab] (4) {$v_{n}$};
\draw (0,1) ellipse (1 and 2.5);
\draw (3,1) ellipse (1 and 2.5);
\draw (6,1) ellipse (1 and 2.5);
\draw (12,1) ellipse (1 and 2.5);
\draw (0,1) node {$G_{0}$};
\draw (3,1) node {$G_{1}$};
\draw (6,1) node {$G_{2}$};
\draw (12,1) node {$G_{n}$};
\draw (0)--(1)--(2)--(3)--(4);
\end{tikzpicture}
}
\end{center}
\caption{Structure of the Graph $G$} \label{fig:expaccgrpah}
\end{figure}
Formally, let $(v_{0},v_{1},\dots,v_{n})$ be a path, and let $G_{0}, G_{1},\dots, G_{n}$ be graphs realizing ratios $x_{0},x_{1},\dots, x_{n}$.

We form the graph $G$ by attaching $G_{0},\dots, G_{n}$ to the corresponding nodes on the path. It follows easily from \Cref{lem:mobiusmap} and \Cref{lem:productrealize} that if we denote the linear fractional transformation $r\mapsto \frac{1+\gamma r}{\beta +r}$ by $f$, the iteration 
\begin{equation}\label{eq:iteration}
y_{n}=x_{n}, \text{ and } y_{k}=x_{k}\cdot f(y_{k+1}),\text{ for }0\leq k\leq n-1
\end{equation}
gives $y_{0}=\dfrac{\left[Z_{G,v_{0}}\right]_{1}}{\left[Z_{G,v_{0}}\right]_{0}}$. So it suffices to compute the gadget graphs $G_{0},\dots, G_{n}$ such that the ratios $(x_{0},\dots,x_{n})$ they realize produce a $y_{0}\in(e^{-\varepsilon}R,e^{\varepsilon}R)$.

Before describing the algorithm, we need to prepare four ``landmarks'' $a,b,c,d$ on the real line, with $\gamma<a<b<0\leq |\beta|<c<d$. We require that $b=f(c)$, $a=f(d)$, and $\frac{b}{2d}<f'(c)$ (here $f'$ stands for the derivative of $f$). The existence of such rational numbers $a,b,c,d$ can be shown easily. For example, we can take $d=2c$ and let $c$ be sufficiently large. Observe that as $x\rightarrow +\infty$, the function value $f(x)$ approaches $\gamma$ from the right, and $f'(x)$ tends to 0 faster than $1/x$. So eventually $f'(c)$ gets closer to 0 than $b/2d$. 

We also prepare a gadget graph $H_{1}$ realizing a ratio $h_{1}\in\left(\sqrt{b/a},1\right)$, and a graph $H_{2}$ realizing a ratio $h_{2}\in (-\infty,-2)$. Their existence follows from \Cref{prop:dense}. 

\vspace{6pt}
\noindent \textbf{Part II: The Algorithm.} Until now, we have been describing information that doesn't depend on the input $(R,\varepsilon)$, and is thus hard-wired into our algorithm. Next we introduce the algorithm:

\vspace{6pt}

\begin{algorithm}[H]
\DontPrintSemicolon
\SetKwInOut{Input}{Input}\SetKwInOut{Output}{Output}
\caption{Realize Arbitrary Ratio}\label{alg:exponentialaccuracy}
\Input{$R,\varepsilon\in\mathbb{Q}^{>0}$}
\Output{A graph $G$ and a vertex $v_{0}$ of $G$ such that $\left[Z_{G,v_{0}}\right]_{1}/\left[Z_{G,v_{0}}\right]_{0}\in (e^{-\varepsilon}R,e^{\varepsilon}R)$.}
$k\gets 0$,
$R_{1}^{(0)}\gets R /(1+\varepsilon)  ,\quad R_{2}^{(0)}\gets R (1+\varepsilon)  $ \;
\While{$R_{2}^{(k)}/R_{1}^{(k)}<\sqrt{a/b}$}{
    \textbf{Compute} a graph $G_{k}$ realizing a ratio  $x_{k}\in\left(-R_{2}^{(k)}/\sqrt{ab},R_{2}^{(k)}/a\right)$
\tcp*{Using $H_{1}, H_{2}$}
    $R_{1}^{(k+1)}\gets f^{-1}(R_{1}^{(k)}/x_{k}),\quad R_{2}^{(k+1)}\gets f^{-1}(R_{2}^{(k)}/x_{k})$
\tcp*{$f^{-1}$ is the inverse of $f$}
    $k\gets k+1$\;
}
\textbf{Compute} a graph $G_{k}$ realizing a ratio $x_{k}\in(R_{1}^{(k)},R_{2}^{(k)})$
\tcp*{Using $H_{1}, H_{2}$}
$n\gets k$\;
Form a graph $G$ from $G_{0},\dots, G_{n}$ as in \Cref{fig:expaccgrpah}\;
\end{algorithm}

\vspace{6pt}
\noindent \textbf{Part III: Correctness.} To prove the correctness of \Cref{alg:exponentialaccuracy}, 
we first prove two claims about the numbers $x_{0},\dots,x_{n}$ and $R_{1}^{(0)},\dots, R_{1}^{(n)},R_{2}^{(0)},\dots,R_{2}^{(n)}$ computed in the course of the algorithm.

\begin{claim}\label{claim:betweencd}
$0<R_{1}^{(0)}<R_{2}^{(0)}$ and
for $k\in\{1,2,\dots,n\}$, $c<R_{1}^{(k)}<R_{2}^{(k)}<d$.
\end{claim}
\begin{proof}[Proof of Claim 1]\renewcommand{\qedsymbol}{$\blacksquare$}
We perform induction on $k$. The base case   $k=0$ is clear from the algorithm.   Now assume the claim holds for $k-1<n$. From the induction hypothesis and the choice of the ratio $x_{k-1}$, we have $x_{k-1}<R_{2}^{(k-1)}/a<0$ and hence $\gamma<R_{2}^{(k-1)}/x_{k-1}<R_{1}^{(k-1)}/x_{k-1}$. Since $f^{-1}$ is decreasing  on $(\gamma,+\infty)$,
we have
$$R_{2}^{(k)}=f^{-1}\left(\frac{R_{2}^{(k-1)}}{x_{k-1}}\right)<f^{-1}\left(\frac{R_{2}^{(k-1)}}{R_{2}^{(k-1)}/a}\right)=f^{-1}(a)=d,$$
and by the while loop condition
$$R_{1}^{(k)}=f^{-1}\left(\frac{R_{1}^{(k-1)}}{x_{k-1}}\right)>f^{-1}\left(\frac{R_{2}^{(k-1)}/\sqrt{\frac{a}{b}}}{x_{k-1}}\right)>f^{-1}\left(\frac{R_{2}^{(k-1)}/\sqrt{\frac{a}{b}}}{-R_{2}^{(k-1)}/\sqrt{ab}}\right)=f^{-1}(b)=c.$$
From the induction hypothesis we also clearly have 
\[R_{2}^{(k)}=f^{-1}\left(\frac{R_{2}^{(k-1)}}{x_{k-1}}\right)>f^{-1}\left(\frac{R_{1}^{(k-1)}}{x_{k-1}}\right)=R_{1}^{(k)}.\qedhere\]
\end{proof}
\begin{claim}\label{claimdone}
Running the iteration \eqref{eq:iteration} on the ratios $(x_{0},\dots,x_{n})$ produces a $y_{0}\in(e^{-\varepsilon}R,e^{\varepsilon}R)$.
\end{claim}
\begin{proof}[Proof of Claim \ref{claimdone}]\renewcommand{\qedsymbol}{$\blacksquare$}
Let the sequence $\{y_{k}\}_{0\leq k\leq n}$ be as in the iteration \eqref{eq:iteration}. We inductively show that $R_{1}^{(k)}< y_{k}<R_{2}^{(k)}$ holds for all $0\leq k\leq n$. For $k=n$, this is clear from the last \textbf{Compute} operation. Now assume $k\in\{0,1,\dots,n-1\}$ and the induction hypothesis holds for $k+1$. Combining with \Cref{claim:betweencd}, we know that $c<R_{1}^{(k+1)}<y_{k+1}<R_{2}^{(k+1)}<d$. Recall that $c>|\beta|$, and this ensures that $f$ is decreasing on $[c,+\infty)$. What's more, we know $x_{k}<0$ from the proof of \Cref{claim:betweencd}. So 
$$x_{k}\cdot f\left(R_{1}^{(k+1)}\right)<x_{k}\cdot f(y_{k+1})<x_{k}\cdot f\left(R_{2}^{(k+1)}\right),$$
i.e. $R_{1}^{(k)}<y_{k}<R_{2}^{(k)}$. Now, taking $k=0$ in the inductive hypothesis yields the claim.
\end{proof}

The correctness of \Cref{alg:exponentialaccuracy} then follows directly from \Cref{claimdone} (see \textbf{Part I} of this proof).

\vspace{6pt}
\noindent \textbf{Part IV: Efficiency.} It remains to show the efficiency of our algorithm. The next claim serves to bound the number of \textbf{Compute} operations executed:

\begin{claim}\label{claim:square}
For each $k\in\{1,2,\dots,n\}$, we have $R_{2}^{(k)}/R_{1}^{(k)}>\left(R_{2}^{(k-1)}/R_{1}^{(k-1)}\right)^{2}$.
\end{claim}
\begin{proof}[Proof of Claim 3]\renewcommand{\qedsymbol}{$\blacksquare$}
 We know from \Cref{alg:exponentialaccuracy} that 
 $R_{i}^{(k-1)}=x_{k-1}\cdot f\left(R_{i}^{(k)}\right)$, for $i\in\{1,2\}$. So
\begin{equation}\label{eq:claim3-1}
\begin{split}
\ln R_{2}^{(k-1)}-\ln R_{1}^{(k-1)}
 &= \ln\left(x_{k-1}\cdot f(R_{2}^{(k)})\right)-\ln\left( x_{k-1}\cdot f(R_{1}^{(k-1)})\right)\\
 &=\ln\left(-f(R_{2}^{(k)})\right)-\ln\left(-f(R_{1}^{(k)})\right).
\end{split}
\end{equation}
On the interval $[c,d]$, the function $x\mapsto \ln(-f(x))$ has derivative
$$\frac{f'(x)}{f(x)}=\frac{\beta-1/\gamma}{(x+\beta)(x+1/\gamma)}\leq \frac{\beta-1/\gamma}{(c+\beta)(c+1/\gamma)}=\frac{f'(c)}{f(c)},$$ where the inequality uses $c+1/\gamma>0$ (this is because the definition of~$f$ ensures that $c+1/\gamma=(\beta+c)f(c)\gamma^{-1}=(\beta+c)b\gamma^{-1}>0$).

Consequently, since $c < R_1^{(k)} < R_2^{(k)} < d$ by \Cref{claim:betweencd}, we have
\begin{equation}\label{eq:claim3-2}
\ln\left(-f(R_{2}^{(k)})\right)-\ln\left(-f(R_{1}^{(k)})\right)\leq \frac{f'(c)}{f(c)}\left(R_{2}^{(k)}-R_{1}^{(k)}\right).
\end{equation}
On the other hand, the derivative of $x\mapsto \ln x$ is at least $1/d$ on the interval $[c,d]$. So by \Cref{claim:betweencd} again
\begin{equation}\label{eq:claim3-3}
\ln R_{2}^{(k)}-\ln R_{1}^{(k)} \geq \frac{1}{d}\left(R_{2}^{(k)}-R_{1}^{(k)}\right).
\end{equation}
Combining \eqref{eq:claim3-1}, \eqref{eq:claim3-2} and \eqref{eq:claim3-3}, we have
$$\ln R_{2}^{(k-1)}-\ln R_{1}^{(k-1)}\leq \frac{f'(c)}{f(c)}\cdot d\cdot\left(\ln R_{2}^{(k)}-\ln R_{1}^{(k)}\right)
 < \frac{1}{2}\left(\ln R_{2}^{(k)}-\ln R_{1}^{(k)}\right),$$
where the final inequality uses the crucial property $b/(2d)<f'(c)$, i.e. $\frac{f'(c)}{f(c)}\cdot d<\frac{1}{2}$.
\end{proof}

From the while loop condition in \Cref{alg:exponentialaccuracy} we know that $R_{2}^{(n-1)}/R_{1}^{(n-1)}<\sqrt{a/b}$, while from the initialization of variables  $R_{2}^{(0)}/R_{1}^{(0)}= e^{2 \varepsilon}$. So it follows from \Cref{claim:square} that 
$ \sqrt{a/b} >  {R_2^{(n-1)}}/{R_1^{(n-1)}} > (e^{2\epsilon})^{2^{n-2}} $
so $n = O(\log(1/\epsilon))$.

The operation \textbf{Compute} is executed $(n+1)$ times in \Cref{alg:exponentialaccuracy}. Now that we have shown the number of \textbf{Compute} executions is polynomial in the input size, it suffices to show that each \textbf{Compute} operation takes polynomial time. 

The very first execution of \textbf{Compute} is a little bit different, where we need to realize a ratio in 
$\left( R/ ((1+\epsilon)\sqrt{ab}), (1+\epsilon) R /a\right)$.  Similar to the methods in \Cref{prop:dense}, let $s$ be a sufficiently large odd number such that $h_{2}^{s}<-   R/((1+\epsilon)\sqrt{ab})$, and then there must be an integer $t>0$ such that $h_{2}^{s}h_{1}^{t}$ falls into the interval, since $h_{1}\in\left(\sqrt{b/a},1\right)$. Both $s$ and $t$ are $O(|\log R|)$ and take polynomial time to compute. (The rational number $R$ is represented with at least $\Omega(|\log R|)$ bits.)

For $1\leq k< n$, we need to realize a ratio in $\left(-R_{2}^{(k)}/\sqrt{ab},R_{2}^{(k)}/a\right)$ in the    execution of \textbf{Compute} when the while loop is entered with value~$k$. By \Cref{claim:betweencd} the interval is contained in $(d/b,0)$. So the same method above applies and this time with a constant running time. 

Finally, we need to a realize a ratio in $(R_1^{(n)},R_2^{(n)})$.
By \Cref{claim:betweencd} we have $0 < c < R_1^{(n)} < R_2^{(n)} < d$, and the while loop condition in \Cref{alg:exponentialaccuracy} ensures that $R_2^{(n)}/R_1^{(n)}\geq \sqrt{a/b}$. Let $s$ be a sufficiently large even number such that $h_{2}^{s}>d$, and then there must be an integer $t>0$ such that $h_{2}^{s}h_{1}^{t}$ falls into the interval $\left(R_1^{(n)},R_2^{(n)}\right)$, since $h_{1}\in\left(\sqrt{b/a},1\right)$. Both $s$ and $t$ take constant time to compute.
\setcounter{claim}{0}
\end{proof}

For future reference, we want to be able to not only realize some ratio in a given interval, but also calculate exactly the ratio we realized. To avoid making the preceding theorem overly cumbersome, we state this as a separate proposition below. It is identical to \Cref{thm:exponentialaccuracy} except the addition of the final sentence, and that we also allow $R<0$. 
\begin{proposition}\label{prop:computeratio}
Fix rational numbers $\beta,\gamma$ such that $(\beta,\gamma)\in\Gamma$. There is a polynomial-time algorithm that, given as input rational numbers $R\neq 0$ and $\varepsilon>0$, outputs a graph $G$ and a vertex $v\in V(G)$ such that $\dfrac{[Z_{G,v}]_{1}}{[Z_{G,v}]_{0}}$ is strictly between $e^{-\varepsilon}R$ 
and $e^{\varepsilon}R$. The algorithm also outputs $[Z_{G,v}]_{1}$ and $[Z_{G,v}]_{0}$. 
\end{proposition}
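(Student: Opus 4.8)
The plan is to carry out essentially the same construction as in the proof of \Cref{thm:exponentialaccuracy}, while additionally tracking the activity vectors $[Z_{G_k,v_k}]$ of each gadget and propagating them through the backbone recursion. First I would strengthen the \textbf{Compute} primitive: whenever \Cref{thm:exponentialaccuracy} (or \Cref{prop:dense}, used inside it) builds a gadget realizing some ratio in a prescribed interval by taking a wedge sum of copies of $H_1,H_2$ and attaching pendant edges, I would also output the pair $([Z_{G,v}]_0,[Z_{G,v}]_1)$. This is possible because each elementary operation has an explicit effect on the activity vector: for a wedge sum (\Cref{lem:productrealize}) the two coordinates multiply componentwise, and for a pendant edge (\Cref{lem:mobiusmap}) the new vector is $(\beta[Z_{G,v}]_0 + [Z_{G,v}]_1,\ [Z_{G,v}]_0 + \gamma[Z_{G,v}]_1)$, up to the trivial bookkeeping of which coordinate is the ``0'' one. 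Since $H_1,H_2$ are fixed gadgets with known activity vectors (rational numbers hard-wired into the algorithm), and the number of elementary operations used to build each $G_k$ is polynomial (as established in \textbf{Part IV}), these vectors stay rational with polynomially many bits and are computable in polynomial time.

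Next I would handle the backbone. Having gadgets $G_0,\dots,G_n$ with known activity vectors, forming the graph $G$ of \Cref{fig:expaccgrpah} corresponds exactly to iterating the transformation of \Cref{lem:mobiusmap} together with the product rule of \Cref{lem:productrealize}, which is the recursion \eqref{eq:iteration} already used in the proof. Concretely, writing $[Z_{G^{(k)},v_k}] = (p_k,q_k)$ for the activity vector of the vertex $v_k$ in the subgraph consisting of $G_k,G_{k+1},\dots,G_n$ together with the sub-path $v_k,\dots,v_n$, one has $(p_n,q_n) = [Z_{G_n,v_n}]$ and, for $0\le k\le n-1$, a linear update $(p_k,q_k) = $ (entrywise product of $[Z_{G_k,v_k}]$ with $(\beta p_{k+1}+q_{k+1},\ p_{k+1}+\gamma q_{k+1})$). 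Running this recursion from $k=n$ down to $k=0$ yields $[Z_{G,v_0}]_0 = p_0$ and $[Z_{G,v_0}]_1 = q_0$ exactly. Each step is a constant number of rational arithmetic operations, and since $n = O(\log(1/\varepsilon))$ and all intermediate quantities have polynomially bounded bit-length (products of $O(n)$ fixed-bit-length rationals with the fixed rationals $\beta,\gamma$), the whole computation runs in polynomial time. The fact that $y_0 = q_0/p_0 \in (e^{-\varepsilon}R,e^{\varepsilon}R)$ and in particular $p_0\neq 0$ is exactly the content of \Cref{claimdone}.

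Finally, for the sign extension $R<0$: mirror the last paragraph of \Cref{prop:dense}. Pick a fixed negative realizable ratio $r$ via \Cref{lem:minus1to0}, with its activity vector computed explicitly as above; apply the $R>0$ case to the positive target $R/r$ to get a graph realizing some ratio $\rho \in (e^{-\varepsilon}R/r,\, e^{\varepsilon}R/r)$ together with its activity vector; then take the wedge sum with the gadget realizing $r$, multiplying the activity vectors entrywise, to obtain a graph realizing $r\rho$, which lies strictly between $e^{\varepsilon}R$ and $e^{-\varepsilon}R$, and whose activity vector is thereby known. I do not expect a genuine obstacle here — the only thing to be careful about is sign/normalization bookkeeping: the quantities $[Z_{G,v}]_0$ and $[Z_{G,v}]_1$ can be negative and need not be positive, so one must track the actual signed values rather than just magnitudes, and check at the end that the displayed coordinate used as denominator is the nonzero one guaranteed by \Cref{claimdone}. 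This is routine but is the one place where care is required.
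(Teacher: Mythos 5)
Your proposal is correct and takes essentially the same approach as the paper's proof: compute each gadget's activity vector exactly via the multiplicative (wedge-sum) rule, then propagate along the backbone using the linear recursion $B_{k} = [Z_{G_k,v_k}]\circ\bigl(\begin{bmatrix}\beta & 1\\ 1 & \gamma\end{bmatrix}B_{k+1}\bigr)$, and handle $R<0$ by attaching a gadget from \Cref{lem:minus1to0}. The paper's proof is simply a terser version of the same argument.
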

\begin{proof}
We keep the notation in the proof of \Cref{thm:exponentialaccuracy} and give an additional procedure to calculate $\left[Z_{G,v_{0}}\right]$ on top of \Cref{alg:exponentialaccuracy}. 

First, we calculate the vector $\left[Z_{G_{k},v_{k}}\right]$ for each gadget graph $G_{k}$ attached to the path. Since each $G_{k}$ is a wedge sum of constant-sized gadget graphs, both $\left[Z_{G_{k},v_{k}}\right]_{0}$ and $\left[Z_{G_{k},v_{k}}\right]_{1}$ can be efficiently computed by multiplication (see \Cref{lem:productrealize}). 

We then compute $\left[Z_{G,v_{0}}\right]$ by the following recursive procedure:
$$B_{n}=\left[Z_{G_{n},v_{n}}\right], \text{ and }B_{k}=\left[Z_{G_{k},v_{k}}\right]\circ\left(\begin{bmatrix}
\beta & 1\\
1 & \gamma
\end{bmatrix}B_{k+1}\right),\text{ for }0\leq k\leq n-1,$$
where $\circ$ is the entry-wise product of two 2-by-1 vectors. It's easy to see that the result of the recursion, the vector $B_{0}$, is exactly $\left[Z_{G,v_{0}}\right]$. 

The case $R<0$ is easy to cope with by attaching to $v_{0}$ a gadget from \Cref{lem:minus1to0}. 
\end{proof}

\begin{remark}
To ensure that the algorithms in \Cref{thm:exponentialaccuracy} and \Cref{prop:computeratio} are polynomial-time in terms of bit complexity, we observe that all intermediate values computed are rational numbers representable with $\mathrm{poly}(R,1/\varepsilon)$ bits. This follows from the fact that the partition functions of   $\mathrm{poly}(R,1/\varepsilon)$-sized graphs are represented with $\mathrm{poly}(R,1/\varepsilon)$  bits. Throughout the remainder of \Cref{sec:hardness}, the bit complexity of all algorithms remains polynomial in the length of their inputs for the same reason; therefore, we omit explicit arguments about bit complexity going forward for brevity.
\end{remark}

\begin{remark}\label{remfive}
The results and proofs in this section are directly analogous to Proposition 15 of \cite{bezakova2019inapproximability}. Most importantly, our graph $G$ has the same path-iteration structure used in \cite{bezakova2019inapproximability}. The main difference between our proof and the one in \cite{bezakova2019inapproximability} is that, having no explicit ``contraction maps'' (see their Lemma 28) to rely on, our algorithm instead pivots on the landmarks $a,b,c,d$ and especially on the property $b/2d<f'(c)$, which helps achieve a similar contraction effect (see \Cref{claim:square} in the proof of our \Cref{thm:exponentialaccuracy}).
\end{remark}

\subsection{Simulating Ising Models}
In order to present the reduction for proving \Cref{thm:sharpPhard}, we need to be able to approximately realize a ferromagnetic Ising edge interaction using our interaction matrix $\begin{bmatrix} \beta & 1\\ 1 & \gamma\end{bmatrix}$. This is formulated in the next lemma:

\begin{lemma}\label{lem:realizeIsing}
Fix rational numbers $\beta,\gamma$ such that $(\beta,\gamma)\in\Gamma$. There is a polynomial-time algorithm that, given as input rational numbers $M^{*}>1$ and $\varepsilon>0$, outputs a graph $G$ and two vertices $u,v\in V(G)$ such that
$$[Z_{G,u,v}]=N\begin{bmatrix}
M_{0} & 1\\
1 & M_{1}
\end{bmatrix},$$
for some $N>0$ and $M_{0},M_{1}>M^{*}$ such that $M_{1}/M_{0}\in(1,e^{\varepsilon})$. The algorithm also outputs the exact matrix $[Z_{G,u,v}]$ it realized.
\end{lemma}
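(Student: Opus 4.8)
The plan is to build the gadget in two stages: first produce a matrix $[Z_{G,u,v}]$ whose off-diagonal entries are equal (to~$1$ after normalization) and whose diagonal entries are both large, and then separately tune the ratio $M_1/M_0$ to land in $(1,e^{\varepsilon})$. For the first stage, I would take $u$ and $v$ to be the endpoints of a long path of the type used in \Cref{thm:exponentialaccuracy}, with gadgets realizing carefully chosen ratios attached at the interior vertices, and with further gadgets attached at $u$ and $v$ themselves. The key algebraic observation is that if we attach, at vertex $u$, a gadget realizing ratio $x$, this multiplies row~$i$ of $[Z_{G,u,v}]$ by the $i$-th coordinate of the activity vector; so by attaching at $u$ a gadget with activity vector $(p_0,p_1)^{\top}$ and at $v$ a gadget with activity vector $(q_0,q_1)^{\top}$ we can scale the four entries of a base matrix $\begin{bmatrix} a & b \\ b & c\end{bmatrix}$ (arising from a path of even length, which is symmetric) to $\begin{bmatrix} a p_0 q_0 & b p_0 q_1 \\ b p_1 q_0 & c p_1 q_1\end{bmatrix}$. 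Choosing the activity-vector ratios so that $p_0 q_1 = p_1 q_0$ makes the off-diagonal entries equal; the freedom in the remaining degrees of freedom then lets us push the diagonal entries above $M^{*}$. Concretely, since \Cref{prop:computeratio} lets us realize (and compute exactly) ratios arbitrarily close to any target, and since \Cref{lem:productrealize} lets us take products, I would first realize some ratio $R_0 > 1$, raise it to a large power to get a symmetric path-matrix whose diagonal dominates, and then use one more realizable ratio on one side to balance the off-diagonals.

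In more detail, I would proceed as follows. Start from a single edge between $u$ and $v$, giving $[Z_{G,u,v}] = \begin{bmatrix}\beta & 1 \\ 1 & \gamma\end{bmatrix}$, and note this is already symmetric. Attaching at $u$ (resp.\ $v$) a gadget realizing ratio $r$ replaces the matrix by $\mathrm{diag}(1,r)$ times it (resp.\ times $\mathrm{diag}(1,r)$ on the right). Using \Cref{prop:computeratio}, realize a ratio $r_u$ at $u$ and $r_v$ at $v$; the resulting matrix is $\begin{bmatrix}\beta & r_v \\ r_u & \gamma r_u r_v\end{bmatrix}$, which fails to be symmetric, so instead I would insert a symmetric backbone: take the path of even length $2\ell$ with identical gadgets at symmetric positions, so the base matrix $[Z_{G,u,v}]$ before scaling is symmetric, say $\begin{bmatrix} a_\ell & b_\ell \\ b_\ell & a_\ell\end{bmatrix}$ after also balancing the two ends symmetrically — here the iteration \eqref{eq:iteration} applied from both ends gives control of $a_\ell/b_\ell$, and by \Cref{prop:dense} and monotonicity of $f$ on the relevant interval we can make $a_\ell/b_\ell$ as large as we wish, in particular $> M^{*}$. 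Finally, to break the symmetry in a controlled way and obtain $M_1/M_0\in(1,e^{\varepsilon})$ rather than exactly $1$, attach one further gadget at $v$ realizing a ratio $\rho\in(1,e^{\varepsilon})$ (such a $\rho$ is realizable by \Cref{prop:dense}, and computable by \Cref{prop:computeratio}); this multiplies the second row by $\rho$, turning $\begin{bmatrix} M_0 N & N \\ N & M_0 N\end{bmatrix}$ into $N\begin{bmatrix} M_0 & 1 \\ \rho & \rho M_0\end{bmatrix}$ — which is again not symmetric. To fix this I would instead attach the balancing gadget at \emph{both} ends but with ratios whose product is $1$ and whose quotient is in $(1,e^{2\varepsilon})$: attach ratio $\sqrt{\rho}$-type quantities, or more simply realize at $u$ a ratio $s_0$ and at $v$ a ratio $s_1$ with $s_0 s_1$ chosen to keep off-diagonals equal; a short computation shows one needs $s_1/s_0 = M_1/M_0$, so choosing realizable $s_0,s_1$ with $s_1/s_0\in(1,e^{\varepsilon})$ and rescaling the path length to restore $M_0,M_1 > M^{*}$ finishes the construction. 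Exact computability of all outputs follows from the exact-output clause of \Cref{prop:computeratio} together with the recursive evaluation of $[Z_{G,u,v}]$ along the path, exactly as in the proof of \Cref{prop:computeratio}.

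The main obstacle I anticipate is \textbf{simultaneously} achieving three constraints with the limited control we have: (i) equal off-diagonal entries (exact symmetry, or at least symmetry up to the allowed slack), (ii) both diagonal entries exceeding the prescribed threshold $M^{*}$, and (iii) the diagonal \emph{ratio} confined to the narrow window $(1,e^{\varepsilon})$. Each realizable ratio is only controlled up to a multiplicative $e^{\pm\varepsilon'}$ error (\Cref{prop:computeratio}), so I expect the bookkeeping to require choosing the internal accuracy parameters $\varepsilon'$ a constant factor smaller than the target $\varepsilon$, and verifying that the errors in the two ends' gadgets partially cancel in the off-diagonal balance condition while their effect on the diagonal ratio stays within $(1,e^{\varepsilon})$. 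A secondary subtlety is ensuring all the base activity-vector ratios stay in the domain where $f$ (and hence the path iteration) behaves monotonically — this is where the landmarks $a,b,c,d$ from the proof of \Cref{thm:exponentialaccuracy} are reused — and that the path length needed to push the diagonal above $M^{*}$ is only $O(\log M^{*})$, keeping the algorithm polynomial-time.
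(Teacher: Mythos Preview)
Your high-level plan --- build a symmetric two-terminal gadget, attach vertex gadgets with equal ratios at the two terminals to keep the off-diagonals equal, and use the remaining freedom to tune $M_1/M_0$ --- is sound, but the core construction has a real gap: you rely on \emph{series} composition (a long path, possibly with interior gadgets) to obtain a base matrix whose diagonal dominates its off-diagonal. That is not achievable. For a bare path the two-terminal matrix is $A^n$ with $A=\begin{bmatrix}\beta&1\\1&\gamma\end{bmatrix}$, and since $A$ has distinct real eigenvalues with $|\lambda_1|\neq|\lambda_2|$ whenever $\beta+\gamma\neq 0$, the normalized matrix $A^n/\lambda_1^n$ converges to a rank-one matrix; writing $A^n=\begin{bmatrix}a&b\\b&c\end{bmatrix}$ one has $ac/b^2\to 1$, so $M_0 M_1\to 1$ rather than $\to\infty$. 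Inserting diagonal factors for interior gadgets does not help: a long product of $2\times 2$ matrices still tends to rank one generically. So the assertion that ``we can make $a_\ell/b_\ell$ as large as we wish'' is false. (A related slip: a gadget with an automorphism swapping $u$ and $v$ yields a \emph{symmetric} matrix, but not one with equal diagonal entries; your displayed $\begin{bmatrix}a_\ell&b_\ell\\b_\ell&a_\ell\end{bmatrix}$ conflates these, which is why the subsequent attempt to ``break the symmetry'' by a one-sided gadget runs in circles.)

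The missing idea is \emph{parallel} rather than series composition. The paper joins $u$ and $v$ by $2k$ parallel copies of a length-$2$ path, giving the entry-wise power
$\begin{bmatrix}(\beta^2+1)^{2k}&(\beta+\gamma)^{2k}\\(\beta+\gamma)^{2k}&(\gamma^2+1)^{2k}\end{bmatrix}$;
now $ac/b^2=\bigl((\beta^2+1)(\gamma^2+1)/(\beta+\gamma)^2\bigr)^{k}=\bigl(1+(1-\beta\gamma)^2/(\beta+\gamma)^2\bigr)^k$ grows exponentially in $k$, which is exactly what pushes both $M_0$ and $M_1$ above $M^*$ with $k=O(\log M^*)$. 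After that, attaching \emph{identical} copies of a single gadget at both terminals (your correct observation that equal ratios preserve off-diagonal equality) with ratio $R$ just above $\bigl((\beta^2+1)/(\gamma^2+1)\bigr)^k$ gives $M_1/M_0=(\gamma^2+1)^{2k}R^2/(\beta^2+1)^{2k}\in(1,e^{\varepsilon})$ directly from the accuracy guarantee of \Cref{prop:computeratio}; no further end-balancing is needed. The degenerate case $\beta+\gamma=0$, where the length-$2$ path has zero off-diagonal, is handled by a preliminary gadget that perturbs $(\beta,\gamma)$ off that line.
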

\begin{proof}
For technical reasons, we first assume $\beta+\gamma\neq 0$. Let $P$ be a length-2 path with endpoints $u,v$. Formally, let its vertex set be $\{u,w,v\}$ and its edge set be $\{\{u,w\},\{w,v\}\}$. It's easy to calculate 
$$[Z_{P,u,v}]=\begin{bmatrix}
\beta^{2}+1 & \beta+\gamma\\
\beta+\gamma & \gamma^{2}+1
\end{bmatrix}.$$
Now let $G$ be a parallel connection of $2k$ paths like $P$, with gadget graphs $G_{1}, G_{2}$ attached to the two end points:

\begin{center}
{
\tikzset{lab/.style={circle,draw,inner sep=0pt,fill=none,minimum size=5mm}}
\begin{tikzpicture}
\draw (-2,0) node [lab] (0) {$u$};
\draw (2,0) node [lab] (1) {$v$};
\draw (0,2) node [lab] (2) {};
\draw (0,1) node [lab] (3) {};
\draw (0,0) node (4) {$\vdots$};
\draw (0,-1) node [lab] (5) {};
\draw (0,-2) node [lab] (6) {};
\draw (0)--(2)--(1);
\draw (0)--(3)--(1);
\draw (0)--(5)--(1);
\draw (0)--(6)--(1);
\draw (-3,0) ellipse (1.5 and 1);
\draw (-3,0) node {$G_{1}$};
\draw (3,0) ellipse (1.5 and 1);
\draw (3,0) node {$G_{2}$};
\end{tikzpicture}
}
\end{center}

The algorithm consists of the following steps:
\begin{itemize}
\setlength\itemsep{0pt}
\item Compute the smallest integer $k$ such that $M\triangleq\left(\dfrac{(\beta^{2}+1)(\gamma^{2}+1)}{(\beta+\gamma)^{2}}\right)^{k}> e^{\varepsilon}M^{*}$. Note that $$\dfrac{(\beta^{2}+1)(\gamma^{2}+1)}{(\beta+\gamma)^{2}}=1+\dfrac{(1-\beta\gamma)^{2}}{(\beta+\gamma)^{2}}>1.$$
\item Compute a graph $G_{1}$ that realizes a ratio $R$ with
$$\left(\frac{\beta^{2}+1}{\gamma^{2}+1}\right)^{k}<R<e^{\varepsilon/2}\left(\frac{\beta^{2}+1}{\gamma^{2}+1}\right)^{k},$$ 
using the procedure in \Cref{thm:exponentialaccuracy}. Attach $G_{1}$ to the vertex $u$, and attach an isomorphic copy $G_{2}$ to the vertex $v$.   
\end{itemize}

Since $k=O(\log M^{*})$, the first step clearly runs in polynomial time. The guarantee of \Cref{thm:exponentialaccuracy} tells us that the second step runs in $\mathrm{poly}(k,\log\frac{1}{\varepsilon})$ time, which is again polynomial in the size of our inputs since $k=O(\log M^{*})$. So we have verified the efficiency of the algorithm. 

As to the correctness, it suffices to observe that 
\begin{align}
\label{eq:computematrix1}
[Z_{G,u,v}]_{0,0}&=(\beta^{2}+1)^{2k}\Big(\left[Z_{G_{1,u}}\right]_{0}\Big)^{2},\\
\label{eq:computematrix2}
[Z_{G,u,v}]_{0,1}=[Z_{G,u,v}]_{1,0}&=(\beta+\gamma)^{2k}\Big(\left[Z_{G_{1,u}}\right]_{0}\Big)^{2}R\triangleq N,\\
\label{eq:computematrix3}
[Z_{G,u,v}]_{1,1}&=(\gamma^{2}+1)^{2k}\Big(\left[Z_{G_{1,u}}\right]_{0}\Big)^{2}R^{2}.
\end{align}
It follows that
\begin{itemize}
\setlength\itemsep{0pt}
\item $M_{0}\triangleq\dfrac{[Z_{G,u,v}]_{0,0}}{N}=\dfrac{(\beta^{2}+1)^{2k}}{(\beta+\gamma)^{2k}R}>\dfrac{(\beta^{2}+1)^{k}(\gamma^{2}+1)^{k}}{e^{\varepsilon/2}(\beta+\gamma)^{2k}}=Me^{-\varepsilon/2}>M^{*}$, and
\item $\dfrac{M_{1}}{M_{0}}\triangleq\dfrac{[Z_{G,u,v}]_{1,1}}{[Z_{G,u,v}]_{0,0}}=\dfrac{(\gamma^{2}+1)^{2k}R^{2}}{(\beta^2+1)^{2k}}\in(1,e^{\varepsilon})$.
\end{itemize}

Finally, given equations \eqref{eq:computematrix1}, \eqref{eq:computematrix2} and \eqref{eq:computematrix3}, and \Cref{prop:computeratio}, we can easily compute the 2-by-2 matrix $[Z_{G,u,v}]$ exactly. 

This concludes the proof of the lemma in the case $\beta+\gamma\neq 0$. The case $\beta+\gamma=0$ can be solved with a small tweak of parameters. In fact, we can perturb the parameters using gadgets given by \Cref{prop:dense}:
\begin{center}
{
\tikzset{lab/.style={circle,draw,inner sep=0pt,fill=none,minimum size=5mm}}
\begin{tikzpicture}
\draw (-4,0) node [lab] (0) {};
\draw (-2,0) node [lab] (1) {};
\draw (0)--(1);
\draw (0,0) node {$\Longrightarrow$};
\draw (4,0) node [lab] (2) {};
\draw (6,0) node [lab] (3) {};
\draw (2)--(3);
\draw (3,0) ellipse (1.5 and 1);
\draw (7,0) ellipse (1.5 and 1);
\draw (3,0) node {$H_{1}$};
\draw (7,0) node {$H_{2}$};
\end{tikzpicture}
}
\end{center}
For each edge with an interaction matrix $\begin{bmatrix}
\beta & 1\\
1 & \gamma
\end{bmatrix}$, by attaching a gadget graph realizing a ratio $r$ to each of its endpoints, we can turn the interaction matrix into $\begin{bmatrix}\beta r^{-1} & 1\\ 1 & \gamma r
\end{bmatrix}$, up to a normalization factor. For any pair $(\beta,\gamma)$ with $\beta+\gamma=0$ and in the range $\Gamma$, one can always use \Cref{prop:dense} to prepare a gadget graph that realizes a ratio $r\in(1,1+\frac{1}{\beta})$ so that the perturbed parameter pair $(\beta r^{-1},\gamma r)$ still lies in the range $\Gamma$ but is no longer on the line $\{x+y=0\}$. So we have reduced the case $\beta+\gamma=0$ to the case $\beta+\gamma\neq 0$, which is already solved.
\end{proof}
\subsection[Hardness]{Proof of \Cref{thm:sharpPhard}}

Having the crucial \Cref{prop:computeratio} and \Cref{lem:realizeIsing} in place, we are finally ready to prove \Cref{thm:sharpPhard}. The proof follows the approach of \cite{goldberg2014complexity}, reducing from the following problem:

\begin{itemize}[leftmargin=0pt]
\setlength\itemsep{0pt}
  \item[] \textbf{Name} {\sc \#Minimum Cardinality $(s,t)$-Cut}.
  \item[] \textbf{Instance} A graph $G=(V,E)$ and distinguished vertices~$s,t\in V$.
  \item[] \textbf{Output} $|\{S\subseteq E:\mbox{$S$ is a minimum cardinality $(s,t)$-cut in $G$}\}|$.
\end{itemize}

\begin{proof}[Proof of \Cref{thm:sharpPhard}]

If $\beta=\gamma$, then from $\beta+\gamma>-2$ and $\min\{\beta,\gamma\}<0$ it follows that $\beta=\gamma\in(-1,0)$. The theorem then follows from \Cref{prop:isingsharpP}. So we may assume that $\beta\neq \gamma$. By symmetry between $\beta$ and $\gamma$, it's then without loss of generality to assume $\gamma<\beta$. This places us in the range $\Gamma$, and hence in particular, \Cref{prop:computeratio} and \Cref{lem:realizeIsing} apply.  

We give a Turing reduction from {\sc \#Minimum Cardinality $(s,t)$-Cut}, which was shown to be \#P-hard by \cite{provan1983complexity}, to the problem of determining the sign of the partition function $Z_{G}$. 

Let $(G,s,t)$ be an instance of {\sc \#Minimum Cardinality $(s,t)$-Cut}. Assume without loss of generality that $G$ is connected. Let $n=|V(G)|$ and $m=|E(G)|$. Let $k$ be the size of a minimum cardinality $(s,t)$-cut in $G$, and let $C$ be the number of size-$k$ $(s,t)$-cuts, both of which are unknown. In order to compute $C$, we will create a sequence of graphs based on $G$, and feed them into the oracle that computes the sign of the partition function.  

First, we run the procedure in \Cref{lem:realizeIsing}, on input $M^{*}=2^{5m}$ and $\varepsilon=2^{-4m}$. This gives us a gadget graph $H$, two distinguished terminals among its vertices, and rational numbers $N,M_{0},M_{1}>0$, such that 
\begin{enumerate}[label=(\alph*),itemsep=0pt]
\item $2^{5m}<M_{0}<M_{1}<e^{2^{-4m}}M_{0}$, and
\item The graph $H$ realizes an interaction matrix
$N\begin{bmatrix}
M_{0} & 1\\
1 & M_{1}
\end{bmatrix}$ between its two terminals.
\end{enumerate}

Create a graph $G'$ by replacing every edge $\{u,v\}\in E(G)$ with a copy of the gadget graph $H$. We claim that the 2-by-2 matrix $\left[Z_{G',s,t}\right]$ contains very accurate information about $C$. For example, assume that in a spin configuration of $G$, the source $s$ is fixed to have spin $0$ and the sink $t$ to have spin $1$. Let $\Omega=\{\sigma\in\{0,1\}^{V(G)}:\sigma(s)=0,\sigma(t)=1\}$ be all possible spin configurations conditional on the spins of $s$ and $t$. Then a minimum-cardinality $(s,t)$-cut is equivalent to a configuration $\omega\in\Omega$ that minimizes the number of edges with differing spins on the endpoints. This set of configurations corresponding to minimum-cardinality $(s,t)$-cuts are denoted by $\Omega_{0}$, which has exactly $C$ elements. We then have 
\begin{align*}
\frac{\left[Z_{G',s,t}\right]_{1,0}}{N^{m}}&=\frac{1}{N^{m}}\sum_{\sigma\in\Omega}\prod_{\{u,v\}\in E}A_{\sigma(u),\sigma(v)} &\left(\text{where }A=N\begin{bmatrix}
M_{0} & 1\\
1 & M_{1}
\end{bmatrix}\right)\\
&=\sum_{\sigma\in \Omega_{0}}\prod_{\{u,v\}\in E}\frac{A_{\sigma(u),\sigma(v)}}{N}+\sum_{\sigma\in\Omega\setminus\Omega_{0}}\prod_{\{u,v\}\in E}\frac{A_{\sigma(u),\sigma(v)}}{N}\\
&\leq CM_{1}^{m-k}+2^{m}M_{1}^{m-k-1}\\
&\leq CM_{1}^{m-k}(1+2^{m}M_{1}^{-1})\\
&\leq (1+2^{-4m})CM_{1}^{m-k}.
\end{align*}

On the other hand, we have the obvious lower bound $\left[Z_{G',s,t}\right]_{1,0}/N^{m}\geq CM_{0}^{m-k}$.  Similarly we can obtain estimates for the other entries of $\left[Z_{G',s,t}\right]$: 
\begin{equation}\label{eq:sharopestimates}
\begin{split}
\left[Z_{G',s,t}\right]_{0,0}/N^{m},\left[Z_{G',s,t}\right]_{1,1}/N^{m}&\in\big(M_{0}^{m} ,(1+2^{-4m})M_{1}^{m}\big)\\
\left[Z_{G',s,t}\right]_{1,0}/N^{m},\left[Z_{G',s,t}\right]_{0,1}/N^{m}&\in\left(CM_{0}^{m-k}, (1+2^{-4m})CM_{1}^{m-k}\right)
\end{split}
\end{equation}

Since $M_{1}/M_{0}\in(1,e^{2^{-4m}})$, the lower bounds matches the upper bounds up to an exponentially small multiplicative constant. Thus, $C$ and $k$ are the crucial information determining the matrix $\left[Z_{G',s,t}\right]$. But in order to extract them exactly using the sign oracle, more work is needed.

Assume we can generate two gadget graphs $H_{1},H_{2}$ that realize vertex activity vectors $\begin{bmatrix}N_{1}\\ N_{1}h_{1}\end{bmatrix}$ and $\begin{bmatrix}N_{2}\\ N_{2}h_{2}\end{bmatrix}$, respectively, for some $N_1$, $h_1$, $N_2$ and $h_2$. By attaching them to the distinguished vertices $s$ and respectively $t$, we obtain a graph $G'_{H_{1},H_{2}}$ with a partition function 

$$Z_{G'_{H_{1},H_{2}}}=N_{1}N_{2}\Big(\left[Z_{G',s,t}\right]_{0,0}+h_{1}\left[Z_{G',s,t}\right]_{1,0}+h_{2}\left[Z_{G',s,t}\right]_{0,1}+h_{1}h_{2}\left[Z_{G',s,t}\right]_{1,1}\Big).$$

If the numbers $N_{1},N_{2},h_{1},h_{2}$ are known and are nonzero, as is the case if we have generated $H_{1}$ and $H_{2}$ using \Cref{prop:computeratio}, by feeding $G'_{H_{1},H_{2}}$ into the oracle for determining the sign of the partition function, we can determine the sign of the function 
$$T(x,y):=\frac{1}{1+xy}\frac{\left[Z_{G',s,t}\right]_{0,0}}{N^{m}}+\frac{x}{1+xy}\frac{\left[Z_{G',s,t}\right]_{1,0}}{N^{m}}+\frac{y}{1+xy}\frac{\left[Z_{G',s,t}\right]_{0,1}}{N^{m}}+\frac{xy}{1+xy}\frac{\left[Z_{G',s,t}\right]_{1,1}}{N^{m}}$$
at $x=h_{1}$, $y=h_{2}$, as
$$T(h_{1},h_{2})=\frac{1}{(1+h_{1}h_{2})N_{1}N_{2}N^{m}}\cdot Z_{G'_{H_{1},H_{2}}}.$$
Our claim is that, by trying suitably generated $H_{1}$'s and $H_{2}$'s, from the values of all the $\mathrm{sgn}(T(h_{1},h_{2}))$ we get, the number $C$ can be determined exactly. 

We proceed by a sandwiching argument. Define linear functions $L(x)=M_{0}^{m}+(1+2^{-4m})CM_{1}^{m-k}x$ and $U(x)=(1+2^{-4m})M_{1}^{m}+CM_{0}^{m-k}x$. Using the bounds \eqref{eq:sharopestimates}, it's easy to see that for all $x,y<0$
$$L\left(\frac{x+y}{1+xy}\right)<T(x,y)<U\left(\frac{x+y}{1+xy}\right).$$

In particular, if the oracle tells us $T(h_{1},h_{2})>0$ then we know that $U(\frac{h_{1}+h_{2}}{1+h_{1}h_{2}})>0$, and otherwise we know $L(\frac{h_{1}+h_{2}}{1+h_{1}h_{2}})< 0$. Combining this observation with a standard binary search procedure, we can approximately determine the zeros of $L$ and $U$, where the information about $C$ and $k$ actually lies.

\vspace{6pt}
\begin{algorithm}[H]
\DontPrintSemicolon
\SetKwInOut{Input}{Input}\SetKwInOut{Output}{Output}
\caption{Binary Search for Zero}\label{alg:binarysearch}
\Input{Oracle access to the function $(H_{1},H_{2})\mapsto \mathrm{sgn}(T(\frac{h_{1}+h_{2}}{1+h_{1}h_{2}}))$}
\Output{$p,q<0$ such that $L(q)<0$, $U(p)>0$, and $q/p<e^{2^{-4m}}$}
Initialize variables $p\gets -4, \quad q\gets -M_{1}^{m}$
\tcp*{Clearly $L(q)<0$ and $U(p)>0$ hold}
\While{$q/p\geq \exp(2^{-4m})$}{
    Pick rational numbers $r\in \left( -|p|^{4/9}|q|^{5/9},-|p|^{5/9}|q|^{4/9}\right)$ and $\varepsilon\leq(100 |r| )^{-1}2^{-4m}$\;
    
    Use \Cref{prop:computeratio} with inputs $R_1=-1/2$ and~$\epsilon$ to get a graph $H_{1}$ which realizes a ratio $h_{1}\in (e^{\epsilon} R_1, e^{-\epsilon} R_1)$\;

    Use \Cref{prop:computeratio} with inputs $R_2=\frac{2r+1}{2+r}$ and~$\varepsilon$ to get a graph $H_{2}$ which realizes a ratio $h_{2}\in (e^{-\epsilon} R_2, e^{\epsilon} R_2)$
    \tcp*{
    By \Cref{lem:auxitoalg2}, $\frac{h_{1}+h_{2}}{1+h_{1}h_{2}}\in\left(-|p|^{1/3}|q|^{2/3},-|p|^{2/3}|q|^{1/3}\right)$}
    Use the oracle to get the sign of $T(h_{1},h_{2})$\;
    \eIf{$T(h_{1},h_{2})>0$}{
        Then $U(\frac{h_{1}+h_{2}}{1+h_{1}h_{2}})>0$, and let $p\gets\frac{h_{1}+h_{2}}{1+h_{1}h_{2}}$ \tcp*{$L(q)<0$ and $U(p)>0$ still hold}
    }{
        Then $L(\frac{h_{1}+h_{2}}{1+h_{1}h_{2}})<0$, and let $q\gets\frac{h_{1}+h_{2}}{1+h_{1}h_{2}}$ \tcp*{$L(q)<0$ and $U(p)>0$ still hold}
    }
}
\end{algorithm}

 \vspace{6pt}

The reason \Cref{alg:binarysearch} runs in polynomial time is as follows:
\begin{itemize}[itemsep=0pt]
\item In each iteration of the while loop, since $\frac{h_{1}+h_{2}}{1+h_{1}h_{2}}\in\left(-|p|^{1/3}|q|^{2/3},-|p|^{2/3}|q|^{1/3}\right)$ by \Cref{lem:auxitoalg2}, at the end of the iteration $\ln(q/p)$ shrinks by at least a factor of $1/3$.
\item The initial value of $\ln(q/p)$ is no more than $m \ln(M_1)$, which is at most polynomial in~$m$ since the algorithm of Lemma~\ref{lem:realizeIsing} outputs~$M_1$ in polynomial time given input $M^* = 2^{5m}$.
\end{itemize}

The outcome of \Cref{alg:binarysearch} is a pair $p,q<0$ with $q/p\in(1,e^{2^{-4m}})$ such that $L(q)<0$ and $U(p)>0$. Now, we have\
\begin{equation}\label{eq:upperboundC}
U(p)>0\Rightarrow C<\frac{(1+2^{-4m})M_{1}^{m}}{(-p)M_{0}^{m-k}}
\end{equation}
and
\begin{equation}\label{eq:lowerboundC}
L(q)<0\Rightarrow C>\frac{M_{0}^{m}}{(1+2^{-4m})(-q)M_{1}^{m-k}}.
\end{equation}

The ratio of the upperbound for $C$ to the lowerbound is at most
$$(1+2^{-4m})^{2}\cdot\frac{q}{p}\left(\frac{M_{1}}{M_{0}}\right)^{2m-k}\leq \exp\left(2\cdot 2^{-4m}+2^{-4m}+2m\cdot 2^{-4m}\right)<\frac{2^{m}}{2^{m}-1}.$$
This means that, for any given $k$, there is at most one integer in $\{1,2,\dots,2^{m}\}$ that lies between the lower bound and the upper bound. Since $C$ must be in $\{1,2,\dots,2^{m}\}$, if $k$ is determined, we can obtain a unique solution for $C$. But on the other hand, the fact $M_{0},M_{1}>2^{4m}$ implies that there is at most one value of $k$ that gives a solution $C$ in the right range. Therefore, both $k$ and $C$ are efficiently computable using the bounds \eqref{eq:upperboundC} and $\eqref{eq:lowerboundC}$.
\end{proof}

During the above proof of \Cref{thm:sharpPhard}, we have made use of the following technical lemma:
 \begin{lemma}\label{lem:auxitoalg2}
Let $q<p\leq -4$ be real numbers. Suppose $-|p|^{4/9}|q|^{5/9}<r<-|p|^{5/9}|q|^{4/9}$ and $0<\varepsilon\leq (100 |r|)^{-1}\cdot\min\{\ln(q/p),1\}$. If
\[-\frac{1}{2}e^{\varepsilon}< h_{1}<-\frac{1}{2}e^{-\varepsilon}\quad\text{and}\quad\frac{2r+1}{2+r}e^{-\varepsilon}<h_{2}<\frac{2r+1}{2+r}e^{\varepsilon},\]
then $-|p|^{1/3}|q|^{2/3}<\frac{h_{1}+h_{2}}{1+h_{1}h_{2}}<-|p|^{2/3}|q|^{1/3}$.
\end{lemma}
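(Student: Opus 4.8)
The plan is to exploit the fact that the binary search step was engineered so that the Möbius-type ``addition'' map $w(x,y)=\frac{x+y}{1+xy}$ sends the centres of the two allowed intervals to $r$ exactly: a direct computation gives $w\!\left(-\tfrac12,\ \tfrac{2r+1}{2+r}\right)=r$ (equivalently, conjugating by $t\mapsto\frac{1+t}{1-t}$ turns the addition law into ordinary multiplication and makes this transparent, since $\frac{1+(-1/2)}{1-(-1/2)}=\tfrac13$ and $\frac{1+\frac{2r+1}{2+r}}{1-\frac{2r+1}{2+r}}=3\cdot\frac{1+r}{1-r}$). So the lemma is a stability statement: when $(h_1,h_2)$ is moved off $\left(-\tfrac12,\tfrac{2r+1}{2+r}\right)$ by a multiplicative factor at most $e^{\varepsilon}$ in each coordinate, $w(h_1,h_2)$ stays in the target window. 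I would first record the elementary consequences of the hypotheses that make that window wide enough: since $q<p\le-4$ we have $|q|>|p|\ge 4$, so the constraint $-|p|^{4/9}|q|^{5/9}<r<-|p|^{5/9}|q|^{4/9}$ forces $|r|>4$; and in logarithmic coordinates it places $\ln|r|$ within $\tfrac1{18}\ln(q/p)$ of the centre $\tfrac12\ln(|p|\,|q|)$, whereas the target interval $\left(-|p|^{1/3}|q|^{2/3},-|p|^{2/3}|q|^{1/3}\right)$ for $w$ corresponds to $\ln|w|$ lying within radius $\tfrac16\ln(q/p)$ of that same centre. Hence it suffices to prove $\big|\ln|w(h_1,h_2)|-\ln|r|\big|\le\big(\tfrac16-\tfrac1{18}\big)\ln(q/p)=\tfrac19\ln(q/p)$ together with $w(h_1,h_2)<0$.

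To do this I would parametrise the perturbations multiplicatively: write $h_1=-\tfrac12 e^{t_1}$ and $h_2=\tfrac{2r+1}{2+r}e^{t_2}$ with $t_1,t_2\in(-\varepsilon,\varepsilon)$, and set $F(t_1,t_2)=w(h_1,h_2)$, so $F(0,0)=r$. The sign claim follows by showing that on the whole box $h_1+h_2>0$ while $1+h_1h_2<0$. The key quantitative input — obtained by substituting $\tfrac{2r+1}{2+r}=2-\tfrac{3}{2+r}$ and using $|r|>4$ and $\varepsilon\le\tfrac1{100|r|}$ to control the factor $e^{t_1+t_2}$ — is that $|1+h_1h_2|=\Theta(1/|r|)$ uniformly over the box (concretely, between roughly $1.4/|r|$ and $3.1/|r|$); in particular $F$ is smooth and nonzero there, so $\ln|F|$ is well-defined and the mean value theorem applies.

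Then I would differentiate. A routine computation gives $\partial_{t_1}\ln|F|=\dfrac{h_1(1-h_2^2)}{(h_1+h_2)(1+h_1h_2)}$ and $\partial_{t_2}\ln|F|=\dfrac{h_2(1-h_1^2)}{(h_1+h_2)(1+h_1h_2)}$. On the box, $h_1$ is near $-\tfrac12$, the quantity $h_2$ lies in a fixed bounded range (one checks $\tfrac{2r+1}{2+r}\in(2,\tfrac72)$), and $h_1+h_2$ is bounded away from $0$; together with $|1+h_1h_2|\ge c/|r|$ this yields $|\partial_{t_i}\ln|F||\le C|r|$ for an absolute constant $C$ (a crude bound gives $C\le 4$). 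The mean value theorem then gives $\big|\ln|F(t_1,t_2)|-\ln|r|\big|\le C|r|\,(|t_1|+|t_2|)\le 2C|r|\varepsilon\le\tfrac{2C}{100}\min\{\ln(q/p),1\}\le\tfrac{2C}{100}\ln(q/p)<\tfrac19\ln(q/p)$, which combined with the reductions of the first paragraph proves the lemma.

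The main obstacle is the middle step: establishing the \emph{uniform} estimate $|1+h_1h_2|=\Theta(1/|r|)$ — and hence $|\partial_{t_i}\ln|F||=O(|r|)$ — over the entire box, not merely at its centre, where $1+h_1h_2$ is exactly $\tfrac{3}{2(2+r)}$. This is precisely where the hypotheses $q<p\le-4$ (which give $|r|>4$) and $\varepsilon\le(100|r|)^{-1}\min\{\ln(q/p),1\}$ enter: they guarantee that the exponential factors $e^{t_1+t_2}$ never push $h_1h_2$ close enough to $-1$ to spoil the bound. Everything else is bookkeeping with the logarithms of the relevant endpoints.
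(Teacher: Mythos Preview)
Your argument is correct and carries through with the constants you indicate; the reduction to $\bigl|\ln|w(h_1,h_2)|-\ln|r|\bigr|<\tfrac{1}{9}\ln(q/p)$ is exactly right, the sign check is fine, and with $|1+h_1h_2|\ge 1.4/|r|$ one indeed gets $|\partial_{t_i}\ln|F||\le C|r|$ with $C<4$, so $2C|r|\varepsilon\le\tfrac{8}{100}\ln(q/p)<\tfrac{1}{9}\ln(q/p)$.

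The route, however, is genuinely different from the paper's. The paper does not invoke the mean value theorem or differentiate; instead it writes $R=\tfrac{2r+1}{2+r}$ and estimates the two factors $h_1+h_2$ and $-h_1h_2-1$ \emph{separately and explicitly}, sandwiching each between multiplicative bounds of the form $(R-\tfrac12)e^{\pm 6\varepsilon}$ and $(\tfrac{R}{2}-1)e^{\pm 4|r|\varepsilon}$ via elementary inequalities of the type $1-x\le e^{-x}$, $e^{x}\le 1+2x$ on the relevant range. Dividing gives $\tfrac{h_1+h_2}{1+h_1h_2}\cdot r^{-1}\in\bigl(e^{-(6+4|r|)\varepsilon},e^{(6+4|r|)\varepsilon}\bigr)\subset\bigl((q/p)^{-1/9},(q/p)^{1/9}\bigr)$, and then combining with the hypothesis on $r$ finishes. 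Your approach packages the same content more conceptually: you linearise $\ln|F|$ in the log-perturbations $(t_1,t_2)$ and read off a Lipschitz constant $O(|r|)$, the factor of $|r|$ arising from $|1+h_1h_2|^{-1}=\Theta(|r|)$. The paper's version is more elementary (no calculus, fully explicit constants $6$ and $4|r|$), while yours is cleaner and makes the mechanism---that the only source of amplification is the small denominator $1+h_1h_2$---more transparent. Both hinge on the same uniform estimate $|1+h_1h_2|=\Theta(1/|r|)$ over the box, which is precisely where the hypotheses $|r|>4$ and $\varepsilon\le(100|r|)^{-1}$ are spent.
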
 
\begin{proof}
Note that if $\varepsilon=0$, then the assumption would force $h_{1}=-\frac{1}{2}$ and $h_{2}=\frac{2r+1}{2+r}$, in which case $\frac{h_{1}+h_{2}}{1+h_{1}h_{2}}$ exactly equals $r$. So it suffices to control the deviation from this ideal case in terms of the error parameter $\varepsilon>0$.

Since $r<p\leq -4$, we have $R:=\frac{2r+1}{2+r}>2$. As it is guaranteed that $\varepsilon\leq 1/100$, we can use the inequalities $e^{\varepsilon}\leq 1+2\varepsilon$, $e^{-\varepsilon}\geq 1-\varepsilon$ and $1+6\varepsilon\leq e^{6\varepsilon}\leq 1+12\varepsilon$ to obtain
\begin{align*}
h_{1}+h_{2}< Re^{\varepsilon}-\frac{1}{2}e^{-\varepsilon}
&= \left(R-\frac{1}{2}\right)e^{6\varepsilon}+R(e^{\varepsilon}-e^{6\varepsilon})+\frac{1}{2}(e^{6\varepsilon}-e^{-\varepsilon})\\
&\leq \left(R-\frac{1}{2}\right)e^{6\varepsilon}+2(2\varepsilon-6\varepsilon)+\frac{1}{2}(12\varepsilon+\varepsilon)\\
&\leq \left(R-\frac{1}{2}\right)e^{6\varepsilon}.
\end{align*}
Using the estimates $e^{\varepsilon}\leq 1+2\varepsilon$, $e^{-\varepsilon}\geq 1-\varepsilon$ and $1-6\varepsilon\leq e^{-6\varepsilon}\leq 1-3\varepsilon$, we have
\begin{align*}
h_{1}+h_{2}> Re^{-\varepsilon}-\frac{1}{2}e^{\varepsilon}
&= \left(R-\frac{1}{2}\right)e^{-6\varepsilon}+R(e^{-\varepsilon}-e^{-6\varepsilon})+\frac{1}{2}(e^{-6\varepsilon}-e^{\varepsilon})\\
&\geq \left(R-\frac{1}{2}\right)e^{-6\varepsilon}+2\left(-\varepsilon+3\varepsilon\right)+\frac{1}{2}(-6\varepsilon-2\varepsilon)\\
&= \left(R-\frac{1}{2}\right)e^{-6\varepsilon}.
\end{align*}

It's also guaranteed that $|r|\varepsilon<1/100$. Using the estimates $e^{-2\varepsilon}\geq 1-2\varepsilon$ and $e^{-4|r|\varepsilon}\leq 1-2|r|\varepsilon$, we have
\begin{align*}
-h_{1}h_{2}-1>\frac{R}{2}e^{-2\varepsilon}-1 &=\left(\frac{R}{2}-1\right)\left(1-\frac{R}{R-2}(1-e^{-2\varepsilon})\right)\\
&\geq \left(\frac{R}{2}-1\right)(1-|r|\cdot 2\varepsilon)\\
&\geq \left(\frac{R}{2}-1\right)e^{-4|r|\varepsilon}.
\end{align*}
Using the estimates $e^{2\varepsilon}\leq 1+4\varepsilon$ and $e^{4|r|\varepsilon}\geq 1+4|r|\varepsilon$ we have
\begin{align*}
-h_{1}h_{2}-1<\frac{R}{2}e^{2\varepsilon}-1 &=\left(\frac{R}{2}-1\right)\left(1+\frac{R}{R-2}(e^{2\varepsilon}-1)\right)\\
&\leq \left(\frac{R}{2}-1\right)(1+|r|\cdot 4\varepsilon )\\
&\leq \left(\frac{R}{2}-1\right)e^{4|r|\varepsilon }.
\end{align*}
In conclusion, we have
$$\frac{h_{1}+h_{2}}{1+h_{1}h_{2}}r^{-1}=\frac{h_{1}+h_{2}}{R-1/2}\left(\frac{h_{1}h_{2}+1}{1-R/2}\right)^{-1}< \exp(6\varepsilon+4|r|\varepsilon)\leq \left(\frac{q}{p}\right)^{1/9}$$
and
$$\frac{h_{1}+h_{2}}{1+h_{1}h_{2}} r^{-1}=\frac{h_{1}+h_{2}}{R-1/2}\left(\frac{h_{1}h_{2}+1}{1-R/2}\right)^{-1}> \exp(-6\varepsilon-4|r|\varepsilon)\geq \left(\frac{q}{p}\right)^{-1/9}.$$
Combining these with the assumption $-|p|^{4/9}|q|^{5/9}<r<-|p|^{5/9}|q|^{4/9}$, we conclude that 
\[-|p|^{1/3}|q|^{2/3}<\frac{h_{1}+h_{2}}{1+h_{1}h_{2}}<-|p|^{2/3}|q|^{1/3}.\qedhere\]
\end{proof}

 \section{Approximation Schemes}\label{sec:approx}
In this section, we give the two approximation schemes promised in \Cref{thm:FPTAS} and \Cref{thm:FPRAS}.  
\subsection{Preliminaries for the FPTAS}
The deterministic approximation scheme of \Cref{thm:FPTAS} will mainly rely on the powerful zero-freeness framework. In particular, our main tool is the following lemma developed and proved in \cite{barvinok2016combinatorics} and \cite{patel2017deterministic}.

\begin{lemma}\label{lem:zerofreeness}
Fix rational numbers $\beta$ and $\gamma$. Let $U$ be an open set in the complex plane that contains the real interval $[0,\lambda]$ for some $\lambda\in\mathbb{Q}^{+}$. Suppose that for all graphs $G$ the polynomial $Z_{G}(x)$ has no complex root in $U$. Then for any positive integer $\Delta$, there exists an FPTAS for $Z_{G}(\lambda)$, where $G$ is an input graph of maximum degree no more than $\Delta$ (without the bounded degree requirement, there is a quasi-polynomial time approximation scheme).
\end{lemma}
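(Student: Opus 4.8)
The plan is to run Barvinok's polynomial-interpolation method in the refined algorithmic form of \cite{patel2017deterministic}, so that the whole argument reduces to assembling two known ingredients together with a routine check that our partition function is of the right type. Fix a graph $G$ with $n=|V(G)|$ vertices and write $p(x)=Z_{G}(x)$; this is a polynomial in $x$ of degree at most $n$ with $p(0)=\beta^{|E(G)|}$. Applying the hypothesis to a single edge shows $\beta\neq 0$, hence $p(0)\neq 0$. Since a multiplicative $e^{\pm\epsilon}$-approximation of $p(\lambda)$ is the same thing as an additive $\epsilon$-approximation of a fixed branch of $\log p(\lambda)$, it suffices to produce the latter. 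Replacing $U$ by the connected component containing $[0,\lambda]$, we may assume $U$ is a domain.

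First I would pass from the open set $U$ to a disk. Because $U$ is open and contains the compact segment $[0,\lambda]$, the standard ``sausage'' lemma of Barvinok (see \cite{barvinok2016combinatorics}) produces a polynomial $\phi$ of some degree $k$ depending only on $U$ --- hence only on $\beta,\gamma,\lambda$, i.e. a constant --- with $\phi(0)=0$, $\phi(1)=\lambda$, and $\phi(\overline{D}_{1+\eta})\subseteq U$ for some fixed $\eta>0$, where $\overline{D}_{1+\eta}=\{z\in\mathbb{C}:|z|\le 1+\eta\}$. Then $q(z):=p(\phi(z))$ is a polynomial of degree at most $nk$, it is zero-free on $\overline{D}_{1+\eta}$, and $q(0)=p(0)\neq 0$ while $q(1)=p(\lambda)$. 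Barvinok's truncation estimate then bounds the degree-$m$ Taylor polynomial $T_{m}$ of $\log q$ about $0$: one gets $|\log q(1)-T_{m}(1)|=O\!\big(nk\,(1+\eta)^{-m}/m\big)$, so taking $m=O(\log(n/\epsilon))$ (with the constant depending on $\eta$) makes this at most $\epsilon$, and $T_{m}(1)$ is the desired additive approximation of $\log Z_{G}(\lambda)$.

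The algorithmic step is to evaluate $T_{m}(1)$ efficiently. The coefficients of $T_{m}$ are obtained from the first $m$ Taylor coefficients of $q$ by the identity $q'/q=(\log q)'$, i.e. a triangular recursion (Newton's identities) costing $\mathrm{poly}(m)$ arithmetic operations on numbers of $\mathrm{poly}(n,\log(1/\epsilon))$ bits. The coefficients $[z^{j}]q$ for $j\le m$ are in turn fixed linear combinations --- with coefficients read off from the (hard-wired) powers of $\phi$ --- of the coefficients $[x^{i}]Z_{G}(x)$ for $i\le km$. Writing $S=\sigma^{-1}(1)$ one has $[x^{i}]Z_{G}(x)=\sum_{S\subseteq V(G),\,|S|=i}\gamma^{e(S)}\beta^{e(V\setminus S)}$, where $e(\cdot)$ counts induced edges; brute-force summation over the $\binom{n}{i}$ such sets computes this coefficient in time $n^{O(i)}=n^{O(\log(n/\epsilon))}$, which already yields the claimed quasi-polynomial approximation scheme. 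For graphs of maximum degree at most $\Delta$, I would instead invoke the coefficient-computation machinery of \cite{patel2017deterministic}: after the local rewriting $\beta^{e(V\setminus S)}=\beta^{|E(G)|}\beta^{-e(S)-e(S,V\setminus S)}$, the quantity $[x^{i}]Z_{G}(x)$ becomes an inclusion--exclusion over connected subgraphs on at most $i$ vertices of a bounded-degree graph invariant, computable in time $n\cdot C(\Delta)^{i}=n\cdot C(\Delta)^{O(\log(n/\epsilon))}=\mathrm{poly}(n,1/\epsilon)$, giving the FPTAS.

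The only things specific to our setting that need checking are elementary: that $Z_{G}(x)$ is genuinely a polynomial of degree $\le|V(G)|$ with $Z_{G}(0)\neq 0$, and that it belongs to the class of bounded-degree graph polynomials handled by \cite{patel2017deterministic} (which it does, being the partition function of a two-state spin system with no external field). The two substantive ingredients --- the existence of the disk-to-$U$ polynomial $\phi$, and the efficient computation of low-order coefficients of $Z_{G}$ on bounded-degree graphs --- are exactly the contributions of \cite{barvinok2016combinatorics} and \cite{patel2017deterministic}. If one wanted a fully self-contained proof, the main obstacle would be re-deriving the Patel--Regts coefficient algorithm, i.e. showing that $[x^{i}]Z_{G}$ is an additively decomposable graph parameter; the sausage lemma and Barvinok's error bound are comparatively short. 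Since both tools are available off the shelf, the proof here is essentially bookkeeping.
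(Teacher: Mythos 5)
Your proposal is correct and takes essentially the same approach as the paper: the paper states \Cref{lem:zerofreeness} without proof, attributing it directly to \cite{barvinok2016combinatorics} and \cite{patel2017deterministic}, and your sketch accurately unpacks the two ingredients those references supply (the sausage/disk-to-domain polynomial and the Taylor-truncation bound from Barvinok, and the $n\cdot C(\Delta)^{i}$-time coefficient algorithm for bounded-degree graph polynomials from Patel--Regts). The small local checks you supply --- $Z_G(0)=\beta^{|E(G)|}\neq 0$ via the single-edge graph, the rewriting of $[x^i]Z_G$ as a weighted induced-subgraph sum to put it in the Patel--Regts class --- are exactly the routine verifications needed to invoke those tools.
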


Our method for showing zero-freeness is the classical contraction method. It was first introduced in \cite{asano1970theorems} to give a simple proof for the Lee-Yang circle theorem \cite{lee1952statistical}, and was further extended in \cite{ruelle1971extension}. 
These results have been used previously in the area of algorithmic counting, e.g., by
Sinclair and Srivastava~\cite{SinclairSrivastava} and by Guo, Liao, Lu, and Zhang~\cite{GLLZ}. We will use the theorem of \cite{ruelle1971extension} in the following form.

\begin{lemma}\label{lem:contraction}
For each $\ell\in[m]$, let $K_\ell$ be a subset of the complex plane $\mathbb{C}$ that doesn't contain 0. Suppose the complex multi-affine polynomial
$$P(z_{1},\dots,z_{m})=\sum_{I\subseteq [m]}F(I)\prod_{\ell\in I}z_{\ell},$$
where each $F(I)$ is a complex coefficient, vanishes (i.e. attains value 0) only when $z_{\ell}\in K_{\ell}$ for some $\ell\in[m]$. Write $[m]$ as a disjoint union of subsets $I_{1},\dots,I_{n}$. Then the complex multi-affine polynomial
$$Q(w_{1},\dots,w_{n}):=\sum_{J\subseteq [n]}F\left(\bigcup_{j\in J}I_{j}\right)\prod_{j\in J}w_{j}$$
can vanish only when 
$w_{j}\in (-1)^{|I_{j}|+1}\prod_{\ell\in I_{j}}K_{\ell}$ for some $j\in [n]$, where the product is the Minkowski product of sets, meaning that
$\prod_{\ell\in I_j} K_\ell := \left\{ \prod_{\ell\in I_{j}}x_{\ell}\mid \forall \ell\in I_{j},\;x_{\ell}\in K_{\ell}\right\}$.
\end{lemma}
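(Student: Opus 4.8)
The plan is to derive \Cref{lem:contraction} from the classical two‑variable Asano contraction (\cite{asano1970theorems}, used also in \cite{ruelle1971extension}) by an induction over the blocks $I_1,\dots,I_n$. First I would observe that it suffices to handle a single \emph{pairwise merge}: the operation that takes a multi‑affine polynomial in variables $\{u,v\}\cup(\text{rest})$ and replaces the pair $u,v$ by one new variable whose coefficient is the old coefficient of the monomial $uv$, discarding the monomials $u$ and $v$. Writing each $I_j$ as an iterated union of pairs, $Q$ is obtained from $P$ by finitely many such merges; merges on disjoint pairs commute, so $Q$ does not depend on the order. The region attached to the variable produced by merging $u$ (current region $R_u$) with $v$ (current region $R_v$) will be $-R_uR_v$ (Minkowski product, with the sign), so starting from $R_\ell=K_\ell$ and iterating gives, for the variable $w_j$ obtained by fully contracting $I_j$, the region $(-1)^{|I_j|+1}\prod_{\ell\in I_j}K_\ell$, since $-(-1)^{a+1}(-1)^{b+1}=(-1)^{(a+b)+1}$; this also shows the answer is order‑independent. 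Along the way one checks the invariant that $0$ never enters any of these Minkowski products (if $0\notin R_u$ and $0\notin R_v$ then every product $xy$ with $x\in R_u$, $y\in R_v$ is nonzero), which keeps the base case applicable.

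The base case to prove is then: if $P(z_1,z_2,z_3,\dots,z_m)=\sum_I F(I)\prod_{\ell\in I}z_\ell$ vanishes only when some $z_\ell\in K_\ell$, then the polynomial $\widetilde P$ obtained by replacing, for each $S\subseteq\{3,\dots,m\}$, the block $F(S)+F(S\cup\{1\})z_1+F(S\cup\{2\})z_2+F(S\cup\{1,2\})z_1z_2$ by $F(S)+F(S\cup\{1,2\})w$ vanishes only when $w\in -K_1K_2$ or some $z_\ell\in K_\ell$ with $\ell\ge 3$. To see this, fix arbitrary values $v_3,\dots,v_m$ with $v_\ell\notin K_\ell$ for each $\ell$ (such values exist — e.g.\ $v_\ell=0$ — precisely because $0\notin K_\ell$). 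Substituting them reduces the claim, for that choice of $v$, to the purely two‑variable implication
\[ a+bz_1+cz_2+dz_1z_2\ne 0\ \text{on}\ (\mathbb{C}\setminus K_1)\times(\mathbb{C}\setminus K_2)\quad\Longrightarrow\quad a+dw\ne 0\ \text{on}\ \mathbb{C}\setminus(-K_1K_2), \]
where $a,b,c,d$ are the resulting (spectator‑dependent) coefficients and $a+dw=\widetilde P(w,v_3,\dots,v_m)$. Since this must hold for every admissible $v$, the two‑variable implication finishes the lemma.

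For the two‑variable implication, note that for any admissible spectators $a=P(0,0,v_3,\dots,v_m)\ne 0$ (the hypothesis gives non‑vanishing whenever every coordinate is outside its forbidden set, in particular at $z_1=z_2=0$), so if $d=0$ the claim is vacuous, and if $d\ne 0$ the unique root of $a+dw$ is $-a/d$, which I must show lies in $-K_1K_2$; equivalently $a/d=\zeta_1\zeta_2$ for some $\zeta_1\in K_1$, $\zeta_2\in K_2$. This is exactly Asano's contraction lemma. I would prove it by regarding the $z_2$‑root of $P$ as the Möbius function $g(z_1)=-\tfrac{a+bz_1}{c+dz_1}$: the hypothesis says $g(z_1)\in K_2$ whenever $z_1\in\mathbb{C}\setminus K_1$ and $c+dz_1\ne 0$, and since $g$ is a bijection of $\widehat{\mathbb{C}}$ one can transfer this to a statement about $g^{-1}(K_2)$ and then, feeding in the point $0\in\mathbb{C}\setminus K_1$ together with the constraints that $0\notin K_1$ and $0\notin K_2$ place on the pole $-c/d$ and on the $g$‑preimages of $0$ and $\infty$, read off the desired factorization of $a/d$. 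The degenerate cases — $ad=bc$ (then $P$ splits as a product of two affine forms and the claim reduces to two one‑variable statements) and the cases $b=0$, $c=0$, or $-c/d\notin K_1$ — need separate elementary arguments; alternatively this two‑variable statement can simply be cited (\cite{asano1970theorems}, \cite{ruelle1971extension}).

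I expect the main obstacle to be precisely this two‑variable base case. The Möbius picture is a one‑line idea, but a rigorous treatment must enumerate the degenerate positions of the pole and of the $0$‑ and $\infty$‑preimages of $g$ relative to the \emph{entirely arbitrary} sets $K_1,K_2$, and that case analysis is the only genuinely delicate part of the proof. Everything else — the commutation of merges on disjoint blocks, the inductive sign rule yielding $(-1)^{|I_j|+1}$, and the invariant that $0$ stays out of every intermediate Minkowski product — is routine bookkeeping that I would record but not belabor.
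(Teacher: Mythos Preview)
Your proposal is correct and follows essentially the same approach as the paper: reduce the general contraction to an iteration of pairwise merges, and for each merge invoke the two-variable Asano--Ruelle lemma (the paper's proof sketch simply cites \cite{ruelle1971extension} for the two-variable statement rather than sketching the M\"obius argument). Your additional bookkeeping on the sign rule $(-1)^{|I_j|+1}$, the spectator substitution, and the invariant $0\notin R$ is sound and makes explicit what the paper leaves implicit.
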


\begin{proof}[Proof Sketch]
The transformation from the polynomial $P$ to the polynomial $Q$ amounts to ``contracting'' subsets of variables into single variables. We may perform the contraction iteratively, where in each step we contract two variables into one. By induction it suffices to analyze each individual contraction step. For each contraction step, it suffices to prove the following statement: if a complex polynomial $a_{0}+a_{1}z_{\ell_{1}}+a_{2}z_{\ell_{2}}+a_{3}z_{\ell_{1}}z_{\ell_{2}}$ vanishes only when $z_{\ell_{1}}\in K_{\ell_{1}}$ or $z_{\ell_{2}}\in K_{\ell_{2}}$, then the polynomial $a_{0}+a_{3}w$ vanishes only when $w\in -K_{\ell_{1}}\cdot K_{\ell_{2}}$. We refer to \cite[Proof of Main Lemma]{ruelle1971extension} for a proof of this statement.
\end{proof}

The following corollary is all we need \Cref{lem:contraction} for:
\begin{corollary}\label{cor:contractunitcircle}
Fix real parameters $\beta$ and $\gamma$. Assume that the polynomial $\gamma z_{1}z_{2}+z_{1}+z_{2}+\beta$ doesn't vanish when $|z_{1}|,|z_{2}|<r$, for some $r>0$. Then for any graph $G$, the partition function $Z_{G}(\bflam)$ doesn't vanish if $|\lambda_{v}|<r^{\deg_{G}(v)}$ for all $v\in V(G)$.
\end{corollary}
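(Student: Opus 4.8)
The plan is to realize $Z_{G}(\bflam)$ as the output of the contraction operation in \Cref{lem:contraction}, starting from a product of single-edge polynomials, and then translate the hypothesis on $\gamma z_{1}z_{2}+z_{1}+z_{2}+\beta$ into the claimed zero-free polydisk. Concretely, I would introduce one ``half-edge'' variable $z_{(e,v)}$ for each incidence of an edge $e$ with an endpoint $v$, giving $m:=\sum_{v}\deg_{G}(v)=2|E|$ variables (a self-loop at $v$ contributes two half-edge variables, both associated to $v$). For each edge $e=\{u,v\}$ set $P_{e}:=\gamma\,z_{(e,u)}z_{(e,v)}+z_{(e,u)}+z_{(e,v)}+\beta$ and let $P:=\prod_{e\in E}P_{e}$, a multi-affine polynomial in the $m$ variables. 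Since $P$ is a product, $P=0$ forces some $P_{e}=0$, and by hypothesis $P_{e}=0$ forces $|z_{(e,u)}|\ge r$ or $|z_{(e,v)}|\ge r$. Thus $P$ satisfies the requirement of \Cref{lem:contraction} with $K_{\ell}:=\{z\in\mathbb{C}:|z|\ge r\}$ for every half-edge $\ell$, and $0\notin K_{\ell}$ because $r>0$.

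Next, group the half-edges by vertex: let $I_{v}:=\{(e,v):v\in e\}$, so $\{I_{v}\}_{v\in V}$ partitions the half-edge set and $|I_{v}|=\deg_{G}(v)$. The key step is to check that the contracted polynomial $Q$ produced by \Cref{lem:contraction} from this partition, viewed as a polynomial in variables $w_{v}$, is exactly $Z_{G}(\bflam)$ under $w_{v}\mapsto\lambda_{v}$. Unwinding the definition of $Q$: a set $J\subseteq V$ contributes $F\big(\bigcup_{v\in J}I_{v}\big)\prod_{v\in J}w_{v}$, and $F\big(\bigcup_{v\in J}I_{v}\big)$ is the product over edges $e=\{u,v\}$ of the coefficient picked up from $P_{e}$ when the half-edges of $e$ lying in $\bigcup_{v\in J}I_{v}$ are switched on; since $(e,u)$ lies in that union iff $u\in J$, this coefficient is $\beta$, $1$, or $\gamma$ according as zero, one, or two endpoints of $e$ lie in $J$, i.e.\ it equals $A_{\mathbbm{1}[u\in J],\,\mathbbm{1}[v\in J]}$ (this also covers self-loops, where ``two endpoints in $J$'' means $v\in J$). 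Summing over $J$, equivalently over spin configurations $\sigma=\mathbbm{1}_{J}\in\{0,1\}^{V}$, reproduces $Z_{G}(\bflam)=\sum_{\sigma}\prod_{e}A_{\sigma(u),\sigma(v)}\prod_{v}\lambda_{v}^{\sigma(v)}$, including isolated vertices $v$ (where $I_{v}=\emptyset$ and $Q$ simply picks up the factor $1+\lambda_{v}$).

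Finally I would feed this into the conclusion of \Cref{lem:contraction}: $Q$ can vanish only if $w_{v}\in(-1)^{|I_{v}|+1}\prod_{\ell\in I_{v}}K_{\ell}$ for some $v$. With each $K_{\ell}=\{|z|\ge r\}$, the Minkowski product of $\deg_{G}(v)$ copies is $\{z:|z|\ge r^{\deg_{G}(v)}\}$ when $\deg_{G}(v)\ge 1$ (every modulus $\ge r^{\deg_{G}(v)}$ and every argument is attainable) and $\{1\}$ when $\deg_{G}(v)=0$; in both cases it is contained in $\{z:|z|\ge r^{\deg_{G}(v)}\}$, a rotation-invariant set left unchanged by multiplication by the scalar $(-1)^{|I_{v}|+1}$. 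Hence, if $|\lambda_{v}|<r^{\deg_{G}(v)}$ for every $v\in V$, then no $\lambda_{v}$ lies in its bad set, so $Z_{G}(\bflam)=Q\big((\lambda_{v})_{v\in V}\big)\neq 0$, which is the claim. The only genuinely delicate point I expect is the middle paragraph: matching the combinatorial contraction rule of \Cref{lem:contraction} (which retains only the ``all-or-nothing'' monomials inside each group $I_{v}$, not a naive substitution $z_{(e,v)}\mapsto w_{v}$) with the definition of $Z_{G}$, together with the bookkeeping for self-loops and isolated vertices; once that is nailed down the rest is immediate.
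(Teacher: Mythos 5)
Your proof is correct and follows essentially the same route as the paper: introduce one variable per edge--vertex incidence, take the product of the single-edge polynomials, apply \Cref{lem:contraction} with $K_\ell=\{|z|\ge r\}$ and the partition of half-edges by vertex, identify the contracted polynomial with $Z_G$, and observe that the Minkowski product of $\deg_G(v)$ copies of $\{|z|\ge r\}$ is rotation-invariant and contained in $\{|z|\ge r^{\deg_G(v)}\}$. You are somewhat more explicit than the paper about the bookkeeping for self-loops and isolated vertices, but the decomposition, the key lemma invoked, and the conclusion are identical.
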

\begin{proof}
Let $G=(V,E)$ with $|V|=n$.
Without loss of generality, assume $V=[n]$.
To use Lemma~\ref{lem:contraction}, we first need to create a ground set $[m]$. For each edge $e=\{u,v\}\in E$, let $u_{e}$ and $v_{e}$ be a copy of the vertex $u$ and $v$, respectively. Then consider the ground set $\bigcup_{e=\{u,v\}\in E}\{u_{e},v_{e}\}$, which has size $m := 2|E(G)|$. Let
$$P(\boldsymbol{z})=\prod_{e=\{u,v\}\in E}\left(\gamma z_{u_{e}}z_{v_{e}}+z_{u_{e}}+z_{v_{e}}+\beta\right).$$
Let $K=\{z\in\mathbb{C}:|z|\geq r\}$.
The assumption in the statement of the corollary guarantees that $P(\boldsymbol{z})$ vanishes only if some $z_\ell \in K$.

We can write $P(\boldsymbol{z})$ in the form from Lemma~\ref{lem:contraction} by defining a coefficient~$F(I)$ for
every subset~$I$ of the ground set. To do this, partition $E$
into sets $E_0$, $E_1$, and~$E_2$  
where $E_0$ is the set of $e=\{u,v\}$ such at $u_e$ and $v_e$ are both out of~$I$, $E_1$ is the set of $e=\{u,v\}$ with exactly one of $u_e,v_e$ in~$I$ and $E_2$ is the set of $e=\{u,v\}$ with both of $u_e$ and $v_e$ in~$I$. Then $F(I) = \gamma^{|E_2|} \beta^{|E_0|}$.

Now for each $v\in V$, let $I_v$ be the set of all ground set elements corresponding to vertex~$v$. That is, $I_v = \{v_{e} \mid e\in E, v\in e\}$.
Consider the polynomial 
$$Q(w_{1},\dots,w_{n}):=\sum_{J\subseteq V}F\left(\bigcup_{v\in J}I_{v}\right)\prod_{j\in J}w_{j}.$$
We can think of the set $J$ as the set of vertices with spin~$1$. Then $Q(\boldsymbol{w})=Z_{G}(\boldsymbol{w})$. So Lemma~\ref{lem:contraction}
guarantees that $Q(w_1,\ldots,w_n)$ vanishes only when, for some $v\in V$,
$w_v \in (-1)^{|I_v|+1} \prod_{i \in I_v} K$, proving the corollary.
 \end{proof}

\subsection[FPTAS]{Proof of \Cref{thm:FPTAS}}
In light of \Cref{cor:contractunitcircle} and Lemma~\ref{lem:zerofreeness}, it only remains to show zero-freeness for the single polynomial $\gamma z_{1}z_{2}+z_{1}+z_{2}+\beta$. 
\begin{lemma}\label{lem:unitcircle}
For real numbers $\beta,\gamma$ such that $\beta>\gamma$ and $\beta+\gamma> 2$, there exists $r>1$ such that the polynomial $\gamma z_{1}z_{2}+z_{1}+z_{2}+\beta$ doesn't vanish when $|z_{1}|,|z_{2}|<r$. 
\end{lemma}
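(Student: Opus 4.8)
The plan is to first prove the stronger statement that $P(z_1,z_2):=\gamma z_1z_2+z_1+z_2+\beta$ has no zero on the \emph{closed} bidisk $\overline D=\{(z_1,z_2):|z_1|\le 1,\ |z_2|\le 1\}$, and then to deduce the existence of some $r>1$ by a routine compactness step. Indeed, $\{P=0\}$ is closed and $\overline D$ is compact, so if they are disjoint then $\rho:=\operatorname{dist}(\{P=0\},\overline D)>0$; pushing each coordinate of a point of the open polydisk of radius $1+\tfrac{\rho}{4}$ into the closed unit disk changes it by less than $\tfrac{\rho}{4}$ in each coordinate, hence by less than $\rho$ in total, so that polydisk still avoids $\{P=0\}$. (Equivalently, $r\mapsto\min\{|P(z_1,z_2)|:|z_1|\le r,\ |z_2|\le r\}$ is continuous and positive at $r=1$.)

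For the core claim I will argue by contradiction: suppose $P(z_1,z_2)=0$ with $|z_1|,|z_2|\le1$. Rewriting the equation as $z_2(\gamma z_1+1)=-(z_1+\beta)$, the key computation is the identity
\[
|z_1+\beta|^2-|\gamma z_1+1|^2=(1-\gamma^2)|z_1|^2+2(\beta-\gamma)\operatorname{Re}(z_1)+(\beta^2-1),
\]
valid because $\beta,\gamma$ are real. I will show the right-hand side is strictly positive whenever $|z_1|\le 1$. Granting this, $\gamma z_1+1\neq 0$ (if it vanished, then $P=0$ would force $z_1=-\beta$, impossible since $\beta>1$ as noted below), and therefore $|z_2|=|z_1+\beta|/|\gamma z_1+1|>1$, contradicting $|z_2|\le1$.

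It then remains to establish positivity of that right-hand side. First, $\beta>\gamma$ and $\beta+\gamma>2$ give $2\beta>2$, so $\beta>1$. For fixed $s=|z_1|\in[0,1]$, the displayed quantity is affine in $\operatorname{Re}(z_1)$ with positive slope $2(\beta-\gamma)$, so over $\operatorname{Re}(z_1)\in[-s,s]$ it is minimized at $\operatorname{Re}(z_1)=-s$, where it equals $g(s):=(1-\gamma^2)s^2-2(\beta-\gamma)s+(\beta^2-1)$; it thus suffices to show $g>0$ on $[0,1]$. One checks $g(0)=\beta^2-1>0$ and $g(1)=(\beta-\gamma)(\beta+\gamma-2)>0$. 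If $1-\gamma^2\le0$, then $g$ is concave (or affine) on $[0,1]$, so its minimum there is attained at an endpoint and we are done. If $1-\gamma^2>0$, then $g$ is a convex parabola with vertex at $s^\star=(\beta-\gamma)/(1-\gamma^2)>0$; I claim $s^\star\ge1$: if instead $\beta-\gamma<1-\gamma^2$, then combining with $\beta>2-\gamma$ yields $2-2\gamma<1-\gamma^2$, i.e.\ $(\gamma-1)^2<0$, absurd. Hence $g$ is decreasing on $[0,1]$ and $\min_{[0,1]}g=g(1)>0$.

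All the computations here are elementary; the one point requiring care is the case split on the sign of $1-\gamma^2$ combined with the observation that the hypotheses $\beta>\gamma$ and $\beta+\gamma>2$ prevent the vertex $s^\star$ from lying in $[0,1)$. This is essential rather than cosmetic: at the vertex one has $g(s^\star)=-(\beta\gamma-1)^2/(1-\gamma^2)\le0$, so the argument genuinely breaks unless $s^\star\notin[0,1)$. I expect this verification to be the main (though still routine) obstacle.
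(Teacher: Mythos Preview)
Your proof is correct. Both your argument and the paper's reduce to the same core fact, namely that $|z+\beta|>|\gamma z+1|$ for all $|z|\le 1$, which is equivalent to saying the M\"obius map $z\mapsto-\frac{z+\beta}{\gamma z+1}$ sends the closed unit disk strictly outside itself; and both then extend to some $r>1$ by compactness/continuity.

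The difference is in how this core inequality is verified. The paper invokes conformality: since the M\"obius map has real coefficients, it sends circles centered on $\mathbb{R}$ to circles centered on $\mathbb{R}$, so it suffices to check the inequality on the real interval $[-1,1]$, where it factors neatly as $\bigl(1-z\tfrac{\gamma-1}{\beta-1}\bigr)\bigl(1+z\tfrac{\gamma+1}{\beta+1}\bigr)>0$. You instead work directly with complex $z_1$, expand $|z_1+\beta|^2-|\gamma z_1+1|^2$, minimize over $\operatorname{Re}(z_1)$, and analyze the resulting one-variable quadratic $g(s)$ on $[0,1]$ with a case split on the sign of $1-\gamma^2$. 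Your route is more elementary in that it avoids any appeal to properties of M\"obius transformations, at the cost of a short case analysis (and the observation that the vertex $s^\star$ cannot lie in $[0,1)$, which you correctly identify as the one genuine point of care). The paper's route is slicker once one is comfortable with the conformal picture. Both are complete and of comparable length.
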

\begin{proof}
In this proof, for any subset $S$ of the Riemann sphere $\mathbb{C}\cup\{\infty\}$, we use $S^{c}$ to denote its complement $\mathbb{C}\cup\{\infty\}\setminus S$.

In the degenerate case $\beta\gamma=1$, the polynomial $\gamma z_{1}z_{2}+z_{1}+z_{2}+\beta$ factorizes into $\gamma(z_{1}+\beta)(z_{2}+\beta)$, which clearly doesn't vanish when $|z_{1}|,|z_{2}|<\beta$. Since $\beta>1$, the conclusion holds in this case. In the following we assume $\beta\gamma\neq 1$, so $g:\mathbb{C}\cup\{\infty\}\rightarrow\mathbb{C}\cup\{\infty\}$ given by $z\mapsto -\frac{z+\beta}{\gamma z+1}$ is an invertible M\"{o}bius transformation on the Riemann sphere.

Let $D(0,r)$ denote the open disk $\{z\in\mathbb{C}:|z|<r\}$. Since $g(z)$ is the unique solution to the equation $\gamma z\cdot g(z)+z+g(z)+\beta=0$, it suffices to show for some $r>1$ that $g$ maps $D(0,r)$ into $D(0,r)^{c}$. 

Note that since $\beta,\gamma\in\mathbb{R}$, the transformation $g$ maps the $\mathbb{R}\cup\{\infty\}$ into $\mathbb{R}\cup\{\infty\}$. By conformality, $g$ maps any circle centered on $\mathbb{R}\cup\{\infty\}$ to a circle centered on $\mathbb{R}\cup\{\infty\}$. In particular, $g(D(0,r))$ is a disk centered on $\mathbb{R}\cup\{\infty\}$. So $g(D(0,r))$ and $D(0,r)$ are disjoint as long as their intersections with $\mathbb{R}\cup\{\infty\}$ are disjoint. It suffices to show that $g$ maps the real interval $(-r,r)$ into $(-r,r)^{c}$, for some $r>1$. By continuity of $g$ (as a map on the Riemann sphere), it also suffices to show that $g$ maps the interval $[-1,1]$ into $[-1,1]^{c}$.

Now take any real number $z\in [-1,1]$. From $\beta>\gamma$ and $\beta+\gamma>2$ we know $\beta>1$. we have
\begin{align*}
|g(z)|>1 &\Leftrightarrow |z+\beta|/|\gamma z+1|>1\\
&\Leftrightarrow (z+\beta)^{2}>(\gamma z+1)^{2} &(\text{since }\beta,\gamma,z\in\mathbb{R})\\
&\Leftrightarrow \left(1-z\frac{\gamma-1}{\beta-1}\right)\left(1+z\frac{\gamma+1}{\beta+1}\right)>0 &(\text{since }\beta>1).
\end{align*}

 It follows from $\beta>\gamma$ that $\frac{\gamma-1}{\beta-1}<1$ and $\frac{\gamma+1}{\beta+1}<1$, while it follows from $\beta+\gamma>2$ that $\frac{\gamma-1}{\beta-1}>-1$ and $\frac{\gamma+1}{\beta+1}>-1$. So both $|\frac{\gamma-1}{\beta-1}|$ and $|\frac{\gamma+1}{\beta+1}|$ are  less than 1. Since $|z|\leq 1$, we have

$$1-z\frac{\gamma-1}{\beta-1}>0\text{ and }1+z\frac{\gamma+1}{\beta+1}>0.$$
This proves $|g(z)|>1$ and hence $g$ maps the interval $[-1,1]$ into $[-1,1]^{c}$.
\end{proof}

\begin{corollary}\label{cor:negation}
For real numbers $\beta,\gamma$ such that $\beta<\gamma$ and $\beta+\gamma<-2$, there exists $r>1$ such that the polynomial $\gamma z_{1}z_{2}+z_{1}+z_{2}+\beta$ doesn't vanish when $|z_{1}|,|z_{2}|<r$.
\end{corollary}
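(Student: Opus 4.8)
The plan is to reduce Corollary~\ref{cor:negation} directly to Lemma~\ref{lem:unitcircle} by a sign-flip substitution, without redoing any analysis. Suppose $\beta<\gamma$ and $\beta+\gamma<-2$. Set $\beta'=-\gamma$ and $\gamma'=-\beta$. Then $\beta'-\gamma' = -\gamma+\beta = -(\gamma-\beta) >0$... wait, we need $\beta'>\gamma'$, i.e. $-\gamma > -\beta$, i.e. $\beta>\gamma$, which is false; instead $\beta<\gamma$ gives $-\gamma<-\beta$, so actually $\beta'<\gamma'$. So the clean choice is $\beta' = -\beta$ and $\gamma' = -\gamma$: then $\beta'>\gamma'$ iff $-\beta>-\gamma$ iff $\gamma>\beta$, which holds; and $\beta'+\gamma' = -(\beta+\gamma) > 2$. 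Thus $(\beta',\gamma')$ satisfies the hypotheses of Lemma~\ref{lem:unitcircle}, so there is $r>1$ such that $\gamma' z_1 z_2 + z_1 + z_2 + \beta'$ has no zero with $|z_1|,|z_2|<r$.

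Next I would transfer this back to $(\beta,\gamma)$ via the substitution $z_i \mapsto -z_i$. Concretely, plug $z_1=-w_1$, $z_2=-w_2$ into $\gamma z_1 z_2 + z_1 + z_2 + \beta$: this gives $\gamma w_1 w_2 - w_1 - w_2 + \beta$. That is not quite the mirrored polynomial, so instead I would negate the whole polynomial as well, or equivalently observe the following identity: $\gamma z_1 z_2 + z_1 + z_2 + \beta = 0$ is equivalent to $(-\gamma)(- z_1)(- z_2) + (- z_1) + (- z_2) + (-\beta) = 0$ after multiplying by $-1$ and substituting. Precisely, $-\bigl(\gamma z_1 z_2 + z_1 + z_2 + \beta\bigr) = \gamma'(-z_1)(-z_2) + (-z_1) + (-z_2) + \beta'$ where $\gamma'=-\gamma$, $\beta'=-\beta$ — one checks $\gamma'(-z_1)(-z_2) = -\gamma z_1 z_2$, $(-z_1)+(-z_2) = -(z_1+z_2)$, $\beta' = -\beta$, matching. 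Since $|z_i|<r \iff |-z_i|<r$, the polynomial $\gamma z_1 z_2 + z_1 + z_2 + \beta$ vanishes at some $(z_1,z_2)$ with $|z_1|,|z_2|<r$ if and only if $\gamma' w_1 w_2 + w_1 + w_2 + \beta'$ vanishes at $(w_1,w_2)=(-z_1,-z_2)$ with $|w_1|,|w_2|<r$. The latter is impossible by Lemma~\ref{lem:unitcircle} applied to $(\beta',\gamma')$, so the former is impossible too, which is exactly the claim.

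There is essentially no obstacle here — the only thing to be careful about is bookkeeping the inequality directions under negation (making sure $\beta'>\gamma'$ and $\beta'+\gamma'>2$ come out right, as traced above) and confirming the algebraic identity relating the two trivariate polynomials. A short alternative phrasing that avoids even writing the identity: the map $(\beta,\gamma,z_1,z_2)\mapsto(-\beta,-\gamma,-z_1,-z_2)$ takes the zero set of $\gamma z_1z_2+z_1+z_2+\beta$ bijectively onto the zero set of $\gamma z_1 z_2 + z_1 + z_2 + \beta$ with the sign-flipped parameters, and preserves the polydisk $\{|z_1|,|z_2|<r\}$; combined with Lemma~\ref{lem:unitcircle} this gives the corollary immediately.

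\begin{proof}
Set $\beta'=-\beta$ and $\gamma'=-\gamma$. From $\beta<\gamma$ we get $\beta'=-\beta>-\gamma=\gamma'$, and from $\beta+\gamma<-2$ we get $\beta'+\gamma'=-(\beta+\gamma)>2$. Hence $(\beta',\gamma')$ satisfies the hypotheses of Lemma~\ref{lem:unitcircle}, so there is $r>1$ such that $\gamma' w_1 w_2 + w_1 + w_2 + \beta'$ has no zero with $|w_1|,|w_2|<r$. Now note the identity
\[
-\bigl(\gamma z_1 z_2 + z_1 + z_2 + \beta\bigr) = \gamma'(-z_1)(-z_2) + (-z_1) + (-z_2) + \beta',
\]
which is verified termwise: $\gamma'(-z_1)(-z_2)=-\gamma z_1 z_2$, $(-z_1)+(-z_2)=-(z_1+z_2)$, and $\beta'=-\beta$. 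Suppose, for contradiction, that $\gamma z_1 z_2 + z_1 + z_2 + \beta$ vanishes at some $(z_1,z_2)$ with $|z_1|,|z_2|<r$. Then by the identity, $\gamma' w_1 w_2 + w_1 + w_2 + \beta'$ vanishes at $(w_1,w_2)=(-z_1,-z_2)$, and $|w_1|=|z_1|<r$, $|w_2|=|z_2|<r$, contradicting the choice of $r$. Therefore $\gamma z_1 z_2 + z_1 + z_2 + \beta$ does not vanish when $|z_1|,|z_2|<r$.
\end{proof}
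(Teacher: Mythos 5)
Your proof is correct and follows essentially the same route as the paper: negate both parameters (obtaining $(\beta',\gamma')=(-\beta,-\gamma)$ in the range covered by Lemma~\ref{lem:unitcircle}) and flip the sign of $z_1,z_2$ to relate the two polynomials by an overall sign. The paper states this more compactly as a one-line observation, but the substance and the key algebraic identity are identical.
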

\begin{proof}
By \Cref{lem:unitcircle}, the polynomial $(-\gamma)(-z_{1})(-z_{2})+(-z_{1})+(-z_{2})+(-\beta)$ doesn't vanish when $|z_{1}|,|z_{2}|<r$. So its negation, $\gamma z_{1}z_{2}+z_{1}+z_{2}+\beta$, doesn't vanish for $|z_{1}|,|z_{2}|<r$ either.
\end{proof}
Now we are ready to prove \Cref{thm:FPTAS}.

\begin{proof}[Proof of \Cref{thm:FPTAS}]
The range of parameters can be divided into 4 regions:

Case 1: $\beta>\gamma$ and $\beta+\gamma>2$. Combining \Cref{lem:unitcircle} and \Cref{cor:contractunitcircle}, there is a disk $D(0,r)$ containing 1 such that, for all graphs $G$ the polynomial $Z_{G}(x)$ doesn't vanish on $D(0,r)$. An FPTAS is thus given by \Cref{lem:zerofreeness}.

Case 2: $\beta<\gamma$ and $\beta+\gamma>2$. This case follows by symmetry from Case 1, as switching $\beta$ and $\gamma$ preserves $Z_{G}$.

Case 3: $\beta<\gamma$ and $\beta+\gamma<-2$. In a similar way to Case 1, this case follows by combining \Cref{cor:negation}, \Cref{cor:contractunitcircle} and \Cref{lem:zerofreeness}.

Case 4: $\beta>\gamma$ and $\beta+\gamma<-2$. This case follows by symmetry from Case 3.
\end{proof}

\subsection{Preliminaries for the FPRAS}\label{subsec:prelimFPRAS}

Our randomized approximation scheme for \Cref{thm:FPRAS} closely resembles the one in \cite{jerrum1993polynomial}. The first main ingredient in \cite{jerrum1993polynomial} is the ``subgraphs-world'' transformation that reduces a spin system problem to a Holant problem. Here, we need to use a slightly generalized version of the subgraphs-world transformation. Though it has appeared in various forms in the literature (e.g. \cite{guo2013complexity}), we introduce it here for the sake of completeness. 

\begin{definition}
Let $(\chi_{a,b})_{a,b\in \{0,1\}}$ be the standard Fourier characters on $\mathbb{F}_{2}^{2}$, defined by $\chi_{a,b}(x_{1},x_{2})=(-1)^{ax_{1}+bx_{2}}$, for $a,b,x_{1},x_{2}\in\{0,1\}$. For any function $\psi:\{0,1\}^{2}\rightarrow\mathbb{R}$, define its Fourier transform to be the function $\widehat{\psi}:\{0,1\}^{2}\rightarrow\mathbb{R}$ given by
$$\widehat{\psi}(a,b)=
\frac{1}{4} \sum_{x_{1},x_{2}\in \{0,1\}} \psi(x_{1},x_{2})\cdot \chi_{a,b}(x_{1},x_{2}),\quad\forall a,b\in\{0,1\}.$$
\end{definition}
Then we have the identity
$$\psi=\sum_{a,b\in\{0,1\}}\widehat{\psi}(a,b)\cdot \chi_{a,b}.$$
\begin{proposition}\label{prop:csptoholant}
Let $\psi:\{0,1\}^{2}\rightarrow\mathbb{Q}^{\geq 0}$. An FPRAS for $\mathsf{Holant}\left(\{\widehat{\psi}\}\cup\{\mathbf{Even}_{k}:k\geq 1\}\right)$ implies an FPRAS for $\mathsf{\#CSP}(\{\psi\})$.
\end{proposition}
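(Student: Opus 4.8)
The plan is to implement the subgraphs-world (high-temperature) expansion, rewriting $Z(G;\psi)$ as a sum over subsets of edges whose contribution is a product of Fourier coefficients, and then identify this sum as a Holant evaluation over the parity functions $\mathbf{Even}_k$. Concretely, start from the identity $\psi(x_1,x_2)=\sum_{a,b\in\{0,1\}}\widehat\psi(a,b)\chi_{a,b}(x_1,x_2)$ applied on every edge of $G$. Expanding the product over edges, $Z(G;\psi)=\sum_{\sigma}\prod_{\{u,v\}\in E}\sum_{a_{uv}}\widehat\psi(a_{uv})(-1)^{\langle a_{uv},(\sigma(u),\sigma(v))\rangle}$ becomes $\sum_{A\in\{0,1\}^{E\times\{\text{two ends}\}}}\Big(\prod_e \widehat\psi(A_e)\Big)\sum_{\sigma}(-1)^{\sum_e \langle A_e,\sigma\rangle}$. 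For each edge $e=\{u,v\}$ the chosen character contributes $(-1)^{a^{(e)}_u \sigma(u)+a^{(e)}_v\sigma(v)}$, so the inner sum over $\sigma\in\{0,1\}^V$ factorizes over vertices and equals $\prod_{v}\sum_{\sigma(v)\in\{0,1\}}(-1)^{\sigma(v)\cdot(\text{sum of bits at }v)}$, which is $2^{|V|}$ if at every vertex $v$ the total number of incident chosen half-edge bits equal to $1$ is even, and $0$ otherwise. This is exactly a parity (Eulerian-type) constraint at each vertex.

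**Reformulating as a Holant instance:** I would put a Boolean variable on each edge of $G$ — but the natural variables here are really the half-edge bits, so the cleanest route is to split each edge $e$ into the two Fourier bits; since $\widehat\psi$ is symmetric (as $\psi$ is symmetric) one checks the two half-edge bits attached to an edge behave coherently. More carefully: build the Holant graph $G'$ by keeping the vertices of $G$, and for each edge $e=\{u,v\}$ of $G$ introduce a gadget (a fresh vertex carrying the constraint $\widehat\psi$ on its two incident edges, which go to $u$ and to $v$ respectively). The variable on the $u$-side edge of this gadget is $a^{(e)}_u$, etc. Then $\prod_e\widehat\psi(A_e)$ is the product of $\widehat\psi$-vertex contributions, and the parity condition "sum of incident bits at $v$ is even" is precisely the constraint $\mathbf{Even}_{\deg_G(v)}$ placed at vertex $v$. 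Hence $Z(G;\psi)=2^{|V(G)|}\cdot \mathrm{Holant}(G';\{\widehat\psi\}\cup\{\mathbf{Even}_k\})$. Since $\psi$ has nonnegative \emph{rational} entries, $\widehat\psi$ has rational entries (possibly of mixed sign, but that is fine — the Holant problem as defined allows $\mathbb{C}$-valued constraints), and the construction of $G'$ from $G$ is polynomial time. Finally, multiplying an approximation of the Holant value by the exactly-computable factor $2^{|V(G)|}$ yields an approximation of $Z(G;\psi)$ with the same relative error, so an FPRAS for the Holant problem gives an FPRAS for $\mathsf{\#CSP}(\{\psi\})$.

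**The main obstacle** I expect is bookkeeping the half-edge/edge correspondence cleanly so that (i) the parity constraints end up with arity exactly $\deg_G(v)$ and involve each incident edge once, and (ii) the symmetry of $\widehat\psi$ is used correctly to justify treating the two bits on an edge as the two arguments of a single $\widehat\psi$-vertex rather than as independent. One must also be a little careful that the Holant framework in this paper forbids self-loops (see the remark in the preliminaries); if $G$ has self-loops one handles them separately, either by excluding them from $\mathsf{\#CSP}$ inputs or by a trivial gadget replacement, but since the reduction target is stated for a class $\mathcal F$ of vertex constraints on loopless Holant graphs, I would simply note that the gadget construction never creates a self-loop (each $\widehat\psi$-gadget vertex has two distinct neighbours, or when $u=v$ this is the self-loop case to be treated by noting $\psi$ restricted to the diagonal is a unary constraint absorbable into the vertex). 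The rest — the Fourier expansion, the factorization over vertices, the evaluation of the geometric-style sum to $2^{|V|}$ or $0$, and the relative-error preservation — is routine.
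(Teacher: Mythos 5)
Your proposal is correct and is essentially the paper's proof: both expand each edge weight via the Fourier identity $\psi=\sum_{a,b}\widehat\psi(a,b)\chi_{a,b}$, factor the resulting sum over vertex spins to produce a $2^{|V|}$ factor times parity constraints $\mathbf{Even}_{\deg_G(v)}$ at each original vertex, and read off the incidence (vertex-edge bipartite) graph $G'$ with $\widehat\psi$ placed on edge-vertices as the Holant instance. The only cosmetic differences are that you flag the symmetry of $\widehat\psi$ and the self-loop case explicitly, neither of which the paper needs to dwell on (the construction produces parallel edges, not self-loops, when $G$ has a loop, and a fixed orientation convention suffices without invoking symmetry).
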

\begin{proof}
Let $G=(V,E)$ be an instance of $\mathsf{\#CSP}(\{\psi\})$. Let $G'=(V',E')$ be defined by
$$V'=V\cup E\text{ and }E'=\bigcup_{e=\{i,j\}\in E}\{\{i,e\},\{j,e\}\}.$$ 
For every vertex $v\in V\subset V'$, let $F_{v}=\mathbf{Even}_{d}$, where $d:=\deg_{G}v$. For every vertex $e\in E\subset V'$, let $F_{e}=\widehat{\psi}$. In this way, we form a Holant instance $\phi$ with base graph $G'$. We have
\begin{align*}
Z_{G}&=\sum_{x\in\{0,1\}^{V}}\prod_{\{i,j\}\in E}\psi(x_{i},x_{j})\\
&=\sum_{x\in\{0,1\}^{V}}\prod_{\{i,j\}\in E}\sum_{a,b\in\{0,1\}}\widehat{\psi}(a,b)(-1)^{ax_{i}+bx_{j}}\\
&=\sum_{x\in\{0,1\}^{V}}\sum_{y\in\{0,1\}^{E'}}\prod_{e=\{i,j\}\in E}\widehat{\psi}(y_{i,e},y_{j,e})(-1)^{y_{i,e}x_{i}+y_{j,e}x_{j}}\\
&=\sum_{y\in \{0,1\}^{E'}}\left(\prod_{e=\{i,j\}\in E}\widehat{\psi}(y_{i,e},y_{j,e})\right)\left(\sum_{x\in \{0,1\}^{V}}\prod_{e=\{i,j\}\in E}(-1)^{y_{i,e}x_{i}+y_{j,e}x_{j}}\right)\\
&=\sum_{y\in\{0,1\}^{E'}}\left(\prod_{e=\{i,j\}\in E}\widehat{\psi}(y_{i,e},y_{j,e})\right)\left(\sum_{x\in\{0,1\}^{V}}\prod_{i\in V}(-1)^{x_{i}(\sum_{\{i,e\}\in E'}y_{i,e})}\right)\\
&=\sum_{y\in\{0,1\}^{E'}}\left(\prod_{e=\{i,j\}\in E}\widehat{\psi}(y_{i,e},y_{j,e})\right)\left(\prod_{i\in V}\left(1+(-1)^{(\sum_{\{i,e\}\in E'}y_{i,e})}\right)\right)\\
&=2^{|V|}\sum_{y\in\{0,1\}^{E'}}\prod_{e=\{i,j\}\in E}\widehat{\psi}(y_{i,e},y_{j,e})\prod_{i\in V}\mathbf{Even}\left((y_{i,e})_{\{i,e\}\in E'}\right).\\
&=2^{|V|}[\![ \phi ]\!],
\end{align*}
where $[\![ \phi ]\!]$ denotes the partition function of the Holant instance $\phi$.
\end{proof}

In \cite{jerrum1993polynomial}, the next step is to prove the rapid mixing of a Markov chain associated to the Holant problem and compute the partition function using an MCMC algorithm. But fortunately for us, we don't even need to define the Markov chain, as the powerful framework of \cite{mcquillan2013approximating} has reduced all these efforts to verifying some simple criteria:

\begin{definition}
  \label{def:wind}
  For any finite set $J$ and any configuration $x\in \{0,1\}^J$,
  define $\mathcal{M}_x$ to be the set of partitions of $\left\{i\middle| x_i=1\right\}$
  into pairs and at most one singleton.  A function
  $F: \{0,1\}^J\rightarrow \mathbb{Q}^{\geq 0} $ is \textbf{windable} if
  there exist values $B(x,y,M)\geq 0$ for all $x,y\in \{0,1\}^J$ and
  all $M\in \mathcal{M}_{x\oplus y}$ (where $x\oplus y$ stands for the bit-wise XOR of $x$ and $y$) satisfying:
  \begin{enumerate}
  \item $F(x)F(y)=\sum_{M\in \mathcal{M}_{x\oplus y}}B(x,y,M)$ for all
    $x,y\in \{0,1\}^J$, and
  \item $B(x,y,M)=B(x\oplus S,y\oplus S,M)$ for all $x,y\in \{0,1\}^J$
    and all $S\in M\in \mathcal{M}_{x\oplus y}$.
  \end{enumerate}
  Here $x\oplus S$ denotes the vector obtained by changing $x_i$ to
  $1-x_i$ for the one or two elements $i$ in $S$.
\end{definition}
\begin{lemma}\label{lem:smallwindable}
Any function $\{0,1\}^{2}\rightarrow\mathbb{Q}^{\geq 0}$ is windable.
\end{lemma}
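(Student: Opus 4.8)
The plan is to exploit the fact that the index set here has only two elements. Write $J=\{1,2\}$ and fix an arbitrary $F:\{0,1\}^2\to\mathbb{Q}^{\geq 0}$. The first thing I would do is enumerate $\mathcal{M}_z$ for every $z\in\{0,1\}^2$ and observe that it is always a singleton: for $z=00$ it consists solely of the empty partition; for $z$ of Hamming weight~$1$, say with support $\{j\}$, the only partition of $\{j\}$ into pairs and at most one singleton is $\{\{j\}\}$; and for $z=11$, since a partition with two singletons is forbidden, the only admissible partition of $\{1,2\}$ is the single pair $\{\{1,2\}\}$. Hence for every ordered pair $x,y\in\{0,1\}^2$ there is a unique $M_{x,y}\in\mathcal{M}_{x\oplus y}$.

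Given this, the definition of the witnessing values is forced: to meet condition~1 of Definition~\ref{def:wind} I would set $B(x,y,M_{x,y}):=F(x)F(y)$ for all $x,y$, which is legitimate since $F$ is nonnegative and there are no other triples $(x,y,M)$ to assign. It then remains only to verify condition~2, and the key observation driving this is that the unique block $S\in M_{x,y}$ is precisely the set of coordinates on which $x$ and $y$ disagree --- whether $S$ is a singleton (the weight-$1$ case) or the full pair $\{1,2\}$ (the weight-$2$ case, where $y$ is the bitwise complement of $x$). Consequently, flipping the coordinates of $S$ in both arguments simply swaps them, $x\oplus S=y$ and $y\oplus S=x$; and since $(x\oplus S)\oplus(y\oplus S)=x\oplus y$ the relevant partition is unchanged, so
\[
B(x\oplus S,\,y\oplus S,\,M_{x,y})=F(x\oplus S)F(y\oplus S)=F(y)F(x)=B(x,y,M_{x,y}),
\]
which is exactly condition~2. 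When $x\oplus y=00$ there is no block $S$ at all and condition~2 holds vacuously.

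I do not expect any genuine obstacle: the statement is essentially a bookkeeping exercise once the definitions are unwound. The only point meriting a little care is the enumeration of $\mathcal{M}_z$ for the weight-$2$ configuration $z=11$ --- specifically, that ``at most one singleton'' rules out splitting $\{1,2\}$ into the two singletons $\{1\}$ and $\{2\}$, so that $\mathcal{M}_z$ remains a singleton and the values $B(x,y,M)$ are pinned down with no freedom left; once that is noted, both windability conditions drop out immediately.
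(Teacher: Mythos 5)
Your argument is correct and complete. You verify windability directly from Definition~\ref{def:wind}: for arity~$2$ the set $\mathcal{M}_{x\oplus y}$ is always a singleton (empty partition, single singleton, or single pair, respectively for the three possible weights of $x\oplus y$), so condition~1 forces $B(x,y,M)=F(x)F(y)$, and condition~2 reduces to the observation that flipping the unique block $S$ of $M$ in both arguments just swaps $x$ and $y$, leaving $F(x)F(y)$ invariant by commutativity. The nonnegativity of $B$ follows from $F\geq 0$, and the $x=y$ case is vacuous. No gaps.

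The paper does not give this calculation; it instead invokes Lemma~7 and Lemma~15 of \cite{mcquillan2013approximating}, which package general sufficient criteria for windability (decomposability and structural properties of the constraint language) and from which the arity-$2$ case falls out as a corollary. Your route is more elementary and self-contained: it bypasses the external machinery entirely by exploiting the very special fact that in arity~$2$ there is never more than one admissible partition, so the witnessing values $B$ have no freedom and windability collapses to the symmetry $F(x)F(y)=F(y)F(x)$. The trade-off is that the paper's citation-based proof is shorter on the page and leans on results needed elsewhere anyway, while yours makes the arity-$2$ case transparent without requiring the reader to consult the reference.
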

\begin{proof}
The statement follows directly by combining Lemma 7 and Lemma 15 in \cite{mcquillan2013approximating}.
\end{proof}
\begin{definition}
A function $F:\{0,1\}^{J}\mapsto\mathbb{Q}^{\geq 0}$ is \textbf{strictly terraced} if
\[ F(x)=0 \implies F(x\oplus e_{i})=F(x\oplus e_{j})\qquad\text{ for all $x\in\{0,1\}^J$ and all $i,j\in J$}.\]
Here $x\oplus e_{i}$ denotes the vector obtained by
changing $x_i$ to $1-x_i$.
\end{definition}
\begin{lemma}[Theorem 4 in \cite{mcquillan2013approximating}]\label{thm:windableterraced}
If $\mathcal{F}$ is a finite class of strictly terraced windable functions, then there is an FPRAS for $\mathsf{Holant}(\mathcal{F})$.
\end{lemma}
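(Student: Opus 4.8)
The plan is the standard Markov-chain Monte Carlo recipe: reduce approximate counting of $[\![\phi]\!]$ to approximate sampling, then exhibit a rapidly mixing chain, using the windability certificates to build a multicommodity flow of low congestion, exactly as in the subgraphs-world analysis of \cite{jerrum1993polynomial} and its abstraction \cite{mcquillan2013approximating}. For the reduction: given a Holant instance $\phi$ with base graph $G=(V,E)$ and constraints $(F_v)_{v\in V}$ from $\mathcal{F}$, let $\pi_\phi$ be the distribution on $\{0,1\}^E$ with $\pi_\phi(x)\propto w(x):=\prod_{v}F_v(x|_{J_v})$. I would show that an almost-uniform sampler for $\pi_\phi$, running in time polynomial in $|E|$ and $\log(1/\delta)$, yields an FPRAS for $[\![\phi]\!]$ by self-reducibility: pinning an edge $e$ to $0$ (resp.\ $1$) produces a Holant instance on $E\setminus\{e\}$ in which the two constraints meeting $e$ are replaced by their coordinate-restrictions. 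A restriction of a strictly terraced windable function is again strictly terraced and windable — a short check transports the windability certificate $B$ and the terraced property through the fixed coordinate — and restrictions only lower arity, so the whole self-reduction stays within a finite class. Then $[\![\phi]\!]$ telescopes into a product of ratios $[\![\phi_k]\!]/[\![\phi_{k-1}]\!]=\Pr_{\pi_{\phi_{k-1}}}[x_{e_k}=b_k]$, each estimated by sampling; choosing $b_k$ to be the more probable value of $x_{e_k}$ keeps every ratio $\ge 1/2$ and every $[\![\phi_k]\!]$ positive.

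Next I would set up the chain. Following the worm/near-assignment idea of \cite{jerrum1993polynomial}, the state space is a relaxation of the support of $\pi_\phi$ — configurations allowed to violate a bounded number of constraints, with weights extended through the windability data — and a transition toggles a single edge, accepted with a Metropolis/heat-bath probability that makes the extended measure reversible-stationary; laziness gives aperiodicity. Strict terracedness is what makes this relaxed chain irreducible: single-edge flips alone do \emph{not} connect $\mathrm{supp}(\pi_\phi)$ — already false for $\mathbf{Even}$ constraints, whose support is the cycle space of $G$ and hence an independent set under single flips — so the relaxation together with the terraced condition is needed to walk between configurations. One also verifies that the relaxed partition function is related to $[\![\phi]\!]$ in a polynomially controllable way, so that chain samples can be converted into samples from $\pi_\phi$.

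The core is rapid mixing, which I would obtain from a multicommodity flow built from the windability certificates. For each ordered pair $(x,y)$ of states, route $\pi(x)\pi(y)$ units of flow from $x$ to $y$ as follows: consider the edge set $x\oplus y$; at every vertex $v$ draw a matching $M_v\in\mathcal{M}_{(x\oplus y)|_{J_v}}$ with probability $B_v\big(x|_{J_v},y|_{J_v},M_v\big)\big/\big(F_v(x|_{J_v})F_v(y|_{J_v})\big)$, which is a genuine distribution by windability axiom~(1). Gluing these local pairings along shared edges decomposes $x\oplus y$ into disjoint simple paths and cycles; processing the pieces in a fixed order and flipping the edges of each piece in a fixed order makes the canonical path from $x$ to $y$ a deterministic walk once the $M_v$'s are drawn (passing through a few near-assignments at the active endpoint of the current piece). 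The crucial estimate is that the total flow through any transition $(u,u')$ of the chain is $O\big(\pi(u)\cdot\mathrm{poly}(|E|)\big)$; given this, the standard multicommodity-flow bound on mixing time yields a mixing time polynomial in $|E|$ and $\log(1/\varepsilon)$, which together with the first step completes the proof.

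The main obstacle is precisely this congestion bound, handled by a decoding argument: from the transition $(u,u')$, the collection $(M_v)_v$, and a bounded amount of bookkeeping (which piece of the decomposition is being processed and the position along it), one must reconstruct enough of the pair $(x,y)$, and — this is the role of windability axiom~(2), the shift-invariance $B(x,y,M)=B(x\oplus S,y\oplus S,M)$ — the relevant $B$-values, hence the $\pi$-weights of $(x,y)$, must agree with those read off from $u$ and the decoded data up to factors that telescope over the steps of the path. Summing over all pairs routing through $(u,u')$ then gives $\pi(u)$ times a combinatorial overhead, and finiteness of $\mathcal{F}$ is exactly what makes that overhead uniformly polynomial (it caps the number of matchings per vertex and the ratios among the nonzero values of each $F_v$ and each $B_v$). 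Making the canonical paths "canonical enough" that this decoding is well-defined, and checking the weight-telescoping carefully, is the technical heart of the argument.
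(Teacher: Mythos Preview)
The paper does not prove this lemma at all: it is quoted verbatim as Theorem~4 of \cite{mcquillan2013approximating} and used as a black box. So there is nothing in the paper's own text to compare your argument against.

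That said, your sketch is a faithful outline of McQuillan's proof --- the near-assignment Markov chain, canonical paths built from the local matchings supplied by the windability certificates, the decoding/congestion argument leveraging axiom~(2), and the role of strict terracedness in guaranteeing irreducibility and bounding ratios. One small caution: the self-reduction in \cite{mcquillan2013approximating} is not literally ``pin an edge and stay inside $\mathcal{F}$''; rather, McQuillan works with the class closed under pinnings from the outset (his Theorem~4 is stated for a finite $\mathcal{F}$, and the reduction enlarges $\mathcal{F}$ to include all pinnings, which is still finite). Your claim that restrictions of strictly terraced windable functions remain strictly terraced and windable is correct and is what makes this enlargement harmless, but it is worth stating explicitly that the finite class one feeds to the sampler is the pinning-closure of $\mathcal{F}$, not $\mathcal{F}$ itself. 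For the purposes of the present paper, none of this detail is needed --- a citation suffices.
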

\begin{corollary}\label{cor:windableterraced+}
If $\mathcal{F}$ is a finite class of strictly terraced windable functions, then there is an FPRAS for $\mathsf{Holant}(\mathcal{F}\cup\{\mathbf{Even}_{k}:k\geq 1\})$.
\end{corollary}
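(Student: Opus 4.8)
The obstacle here is purely one of finiteness: the family $\{\mathbf{Even}_{k}:k\geq 1\}$ is infinite, while Lemma~\ref{thm:windableterraced} only applies to a finite class. So the plan has two parts. First, I would give a polynomial-time, partition-function-preserving reduction from any instance of $\mathsf{Holant}(\mathcal{F}\cup\{\mathbf{Even}_{k}:k\geq 1\})$ to an instance of $\mathsf{Holant}(\mathcal{F}\cup\{\mathbf{Even}_{1},\mathbf{Even}_{2},\mathbf{Even}_{3}\})$. Second, I would check that $\mathcal{F}\cup\{\mathbf{Even}_{1},\mathbf{Even}_{2},\mathbf{Even}_{3}\}$ is a finite class of strictly terraced windable functions, so that Lemma~\ref{thm:windableterraced} gives an FPRAS for the latter problem, and composing the two parts proves the corollary.

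For the first part, the idea is to realize $\mathbf{Even}_{k}$ exactly, for every $k\geq 4$, by a gadget built only from $\mathbf{Even}_{3}$ constraints. Given a vertex $v$ carrying $\mathbf{Even}_{k}$ with incident edges $e_{1},\dots,e_{k}$, I would replace $v$ by a path of new vertices $v_{1},\dots,v_{k-2}$ joined by new internal edges $f_{1},\dots,f_{k-3}$, put $\mathbf{Even}_{3}$ on each $v_{i}$, attach $e_{1},e_{2}$ to $v_{1}$, attach $e_{i+1}$ to $v_{i}$ for $2\leq i\leq k-3$, and attach $e_{k-1},e_{k}$ to $v_{k-2}$ (equivalently, peel off two edges at a time with a fresh $\mathbf{Even}_{3}$ vertex). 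Summing the parity constraints at the $v_{i}$ shows each internal edge cancels and what remains is $e_{1}+\dots+e_{k}\equiv 0\pmod 2$; moreover, once an even-weight assignment to $e_{1},\dots,e_{k}$ is fixed, the constraints determine $f_{1},\dots,f_{k-3}$ uniquely and the final constraint is then automatically satisfied, whereas an odd-weight assignment admits no consistent internal assignment. Hence the gadget computes exactly $\mathbf{Even}_{k}$, with no extra multiplicative constant, so substituting it for every high-arity parity vertex leaves the Holant value unchanged; since $k$ is bounded by the instance size the blow-up is at most quadratic, giving a polynomial-time reduction.

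For the second part, strict terracedness of each $\mathbf{Even}_{d}$ is a uniform parity argument: if $\mathbf{Even}_{d}(x)=0$ then $x$ has odd Hamming weight, so $x\oplus e_{i}$ has even weight and $\mathbf{Even}_{d}(x\oplus e_{i})=1$ for every $i$, hence $\mathbf{Even}_{d}(x\oplus e_{i})=\mathbf{Even}_{d}(x\oplus e_{j})$; and every $F\in\mathcal{F}$ is strictly terraced and windable by hypothesis. Windability of $\mathbf{Even}_{1}$ and $\mathbf{Even}_{2}$ follows from Lemma~\ref{lem:smallwindable} (the arity-one case being immediate). For $\mathbf{Even}_{3}$ I would exhibit the witness of Definition~\ref{def:wind} directly by setting $B(x,y,M)=\mathbf{Even}_{3}(x)\,\mathbf{Even}_{3}(y)/|\mathcal{M}_{x\oplus y}|$ for all $x,y\in\{0,1\}^{3}$ and $M\in\mathcal{M}_{x\oplus y}$: condition~1 is then immediate, and condition~2 holds because $\mathcal{M}_{(x\oplus S)\oplus(y\oplus S)}=\mathcal{M}_{x\oplus y}$, because flipping a two-element block $S$ preserves the parity (hence the $\mathbf{Even}_{3}$-value) of $x$ and of $y$, and because a singleton block $S$ can occur only when $|\mathrm{supp}(x\oplus y)|$ is odd, in which case exactly one of $\mathbf{Even}_{3}(x),\mathbf{Even}_{3}(y)$ equals $0$, so both sides of condition~2 vanish. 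Thus $\mathcal{F}\cup\{\mathbf{Even}_{1},\mathbf{Even}_{2},\mathbf{Even}_{3}\}$ is a finite class of strictly terraced windable functions, Lemma~\ref{thm:windableterraced} gives an FPRAS for $\mathsf{Holant}(\mathcal{F}\cup\{\mathbf{Even}_{1},\mathbf{Even}_{2},\mathbf{Even}_{3}\})$, and composing with the reduction of the first part proves the corollary. The only points that need care are checking that the $\mathbf{Even}_{3}$-gadget realizes $\mathbf{Even}_{k}$ with no spurious scalar and verifying condition~2 of windability in the singleton case; both are routine.
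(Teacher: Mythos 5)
Your proposal is correct and follows essentially the same route as the paper: replace each $\mathbf{Even}_k$ ($k>3$) with a path of $\mathbf{Even}_3$ constraints, observe that the resulting class is finite, check strict terracedness, and invoke Lemma~\ref{thm:windableterraced}. The only substantive difference is that you exhibit an explicit windability witness $B(x,y,M)=\mathbf{Even}_3(x)\mathbf{Even}_3(y)/|\mathcal{M}_{x\oplus y}|$ for $\mathbf{Even}_3$ (and verify it correctly, including the singleton case), whereas the paper simply cites Lemma~17 of \cite{mcquillan2013approximating}; your version is self-contained but otherwise identical in structure.
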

\begin{proof}
For $k>3$, an $\mathbf{Even}_{k}$ constraint can easily be realized using a path of $(k-2)$   $\mathbf{Even}_{3}$ constraints (using $k-3$ additional variables).
For example, 
\[\mathbf{Even}_5(x_1,x_2,x_3,x_4,x_5) = \sum_{y_{1},y_{2}\in \{0,1\}}\mathbf{Even}_3(x_1,x_2,y_1) \mathbf{Even}_3(y_1,x_3,y_2) \mathbf{Even}_3(y_2,x_4,x_5).\] Thus,
it suffices to show that there is an FPRAS for $\mathsf{Holant}(\mathcal{F}\cup\{\mathbf{Even}_{3}\})$. Since $\mathbf{Even}_{3}$ is windable (see \cite[Lemma 17]{mcquillan2013approximating}) and strictly terraced, the claim follows from \Cref{thm:windableterraced}.
\end{proof}

\subsection[FPRAS]{Proof of \Cref{thm:FPRAS}} \label{subsec:proofFPRAS}
Now, it suffices to verify that certain constraint functions are windable and strictly terraced. 
\begin{lemma}\label{lem:lowerright}
For rational numbers $\beta,\gamma$ such that $\beta>\gamma$ and $\beta+\gamma\geq 2$, the function $\psi:\{0,1\}^{2}\rightarrow\mathbb{Q}$ defined by $\begin{bmatrix}
\psi(0,0) & \psi(0,1)\\
\psi(1,0) & \psi(1,1)
\end{bmatrix}=\begin{bmatrix}
\beta & 1\\
1 & \gamma
\end{bmatrix}$ satisfies the property that $\widehat{\psi}$ is windable and strictly terraced. 
\end{lemma}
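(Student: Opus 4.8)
The plan is to compute $\widehat\psi$ explicitly, check it is nonnegative-valued under the hypotheses, and then verify the two required properties essentially by inspection, invoking \Cref{lem:smallwindable} for windability.

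First I would apply the definition of the Fourier transform to the matrix $\begin{bmatrix}\beta & 1\\ 1 & \gamma\end{bmatrix}$. A direct four-term calculation gives
\[
\begin{bmatrix}\widehat\psi(0,0) & \widehat\psi(0,1)\\ \widehat\psi(1,0) & \widehat\psi(1,1)\end{bmatrix}
=\frac14\begin{bmatrix}\beta+\gamma+2 & \beta-\gamma\\ \beta-\gamma & \beta+\gamma-2\end{bmatrix}.
\]
Under the hypotheses $\beta>\gamma$ and $\beta+\gamma\geq 2$ every entry is a nonnegative rational: indeed $\beta+\gamma+2\geq 4>0$, $\beta-\gamma>0$, and $\beta+\gamma-2\geq 0$. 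Hence $\widehat\psi$ is a function $\{0,1\}^2\to\mathbb{Q}^{\geq 0}$, and \Cref{lem:smallwindable} immediately shows that $\widehat\psi$ is windable.

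For the strictly terraced property I would argue as follows. Since $\widehat\psi$ is symmetric, $\widehat\psi(0,1)=\widehat\psi(1,0)$ always. Among the four values above, the entries $\widehat\psi(0,0)$ and $\widehat\psi(0,1)=\widehat\psi(1,0)$ are strictly positive (using $\beta+\gamma\geq 2$ and $\beta>\gamma$ respectively), so the only configuration $x$ at which $\widehat\psi$ can vanish is $x=(1,1)$, and this happens precisely on the boundary line $\beta+\gamma=2$. It therefore suffices to check the strictly terraced condition at $x=(1,1)$: here $(1,1)\oplus e_1=(0,1)$ and $(1,1)\oplus e_2=(1,0)$, and $\widehat\psi(0,1)=\widehat\psi(1,0)$ by symmetry, so $\widehat\psi(x\oplus e_i)=\widehat\psi(x\oplus e_j)$ for all $i,j\in\{1,2\}$ (the case $i=j$ being trivial). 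This establishes that $\widehat\psi$ is strictly terraced, completing the proof.

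\noindent\textbf{Main obstacle.}
There is no substantive difficulty here; the only point requiring care is ensuring the case analysis of when $\widehat\psi$ vanishes is exhaustive, in particular correctly handling the boundary case $\beta+\gamma=2$ where $\widehat\psi(1,1)=0$. This boundary case is exactly why \Cref{thm:FPRAS} can include the line $|\beta+\gamma|=2$ that \Cref{thm:FPTAS} does not: the vanishing entry is compatible with the strictly terraced condition precisely because the off-diagonal entries of $\widehat\psi$ coincide.
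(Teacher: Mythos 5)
Your proposal is correct and follows essentially the same approach as the paper's proof: compute $\widehat\psi$ explicitly, observe that nonnegativity holds with $\widehat\psi(1,1)$ being the only entry that can vanish (and only when $\beta+\gamma=2$), invoke \Cref{lem:smallwindable} for windability, and use the symmetry $\widehat\psi(0,1)=\widehat\psi(1,0)$ to verify the strictly terraced condition at the single potentially vanishing input.
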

\begin{proof}
Since $\begin{bmatrix}
\widehat{\psi}(0,0) & \widehat{\psi}(0,1)\\
\widehat{\psi}(1,0) & \widehat{\psi}(1,1)
\end{bmatrix}=\dfrac{1}{4}\begin{bmatrix}
\beta+\gamma+2 & \beta-\gamma\\
\beta-\gamma & \beta+\gamma-2
\end{bmatrix}$, we have $\widehat{\psi}(x)\geq 0$ for all $x\in\{0,1\}^{2}$, and the only possibility of $\widehat{\psi}(x)=0$ is when $\beta+\gamma=2$ and $x=(1,1)$. In that case, we have $\widehat{\psi}(1,0)=\widehat{\psi}(0,1)=\frac{\beta-\gamma}{4}$. It follows that $\widehat{\psi}$ is strictly terraced. 

The windablity of $\widehat{\psi}$ follows from \Cref{lem:smallwindable}.
\end{proof}

\begin{lemma}\label{lem:upperleft}
For rational numbers $\beta,\gamma$ such that $\beta<\gamma$ and $\beta+\gamma\leq -2$, the function $\psi:\{0,1\}^{2}\rightarrow\mathbb{Q}$ defined by $\begin{bmatrix}
\psi(0,0) & \psi(0,1)\\
\psi(1,0) & \psi(1,1)
\end{bmatrix}=\begin{bmatrix}
\beta & 1\\
1 & \gamma
\end{bmatrix}$ satisfies the property that $-\widehat{\psi}$ is windable and strictly terraced. 
\end{lemma}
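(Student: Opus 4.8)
The plan is to run exactly the same argument as in the proof of \Cref{lem:lowerright}, tracking the effect of the extra minus sign. First I would compute the Fourier transform of $\psi$ explicitly from the definition, obtaining
\[
\begin{bmatrix}
\widehat{\psi}(0,0) & \widehat{\psi}(0,1)\\
\widehat{\psi}(1,0) & \widehat{\psi}(1,1)
\end{bmatrix}
=\frac{1}{4}\begin{bmatrix}
\beta+\gamma+2 & \beta-\gamma\\
\beta-\gamma & \beta+\gamma-2
\end{bmatrix},
\qquad\text{so}\qquad
\begin{bmatrix}
-\widehat{\psi}(0,0) & -\widehat{\psi}(0,1)\\
-\widehat{\psi}(1,0) & -\widehat{\psi}(1,1)
\end{bmatrix}
=\frac{1}{4}\begin{bmatrix}
-(\beta+\gamma)-2 & \gamma-\beta\\
\gamma-\beta & -(\beta+\gamma)+2
\end{bmatrix}.
\]
The point of recording this is that, under the hypotheses $\beta<\gamma$ and $\beta+\gamma\leq -2$, all four entries of $-\widehat{\psi}$ are nonnegative: $\gamma-\beta>0$ by the first hypothesis, $-(\beta+\gamma)-2\geq 0$ by the second, and $-(\beta+\gamma)+2\geq 4>0$. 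Hence $-\widehat{\psi}$ is a function $\{0,1\}^2\to\mathbb{Q}^{\geq 0}$.

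Next I would check that $-\widehat{\psi}$ is strictly terraced. The only entry of $-\widehat{\psi}$ that can be zero is $-\widehat{\psi}(0,0)$, and this happens precisely when $\beta+\gamma=-2$ (the entry $-\widehat{\psi}(1,1)$ is never zero since $\beta+\gamma=2$ is excluded by the hypothesis). In the boundary case $\beta+\gamma=-2$ we have $-\widehat{\psi}(1,0)=-\widehat{\psi}(0,1)=\tfrac{\gamma-\beta}{4}$, and since $(0,0)\oplus e_1=(1,0)$ and $(0,0)\oplus e_2=(0,1)$, the defining equality $F(x\oplus e_i)=F(x\oplus e_j)$ holds at the unique zero $x=(0,0)$. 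So $-\widehat{\psi}$ is strictly terraced. Windability of $-\widehat{\psi}$ is then immediate from \Cref{lem:smallwindable}, since $-\widehat{\psi}$ maps $\{0,1\}^2$ into $\mathbb{Q}^{\geq 0}$.

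There is essentially no hard step here; the only thing to be careful about is bookkeeping. Negating $\widehat{\psi}$ moves the potential zero from the $(1,1)$ corner (as in \Cref{lem:lowerright}) to the $(0,0)$ corner, and one must use the hypothesis $\beta+\gamma\leq -2$ in the correct direction to land in $\mathbb{Q}^{\geq 0}$ rather than $\mathbb{Q}^{\leq 0}$; the direction $\beta<\gamma$ (rather than $\beta>\gamma$) is what makes $\gamma-\beta$, rather than $\beta-\gamma$, the nonnegative off-diagonal entry. As an alternative route one could instead apply \Cref{lem:lowerright} to the matrix $\begin{bmatrix}-\beta & 1\\ 1 & -\gamma\end{bmatrix}$, whose parameters $(-\beta,-\gamma)$ satisfy $-\beta>-\gamma$ and $-\beta+(-\gamma)\geq 2$, and then observe that the resulting Fourier transform agrees with $-\widehat{\psi}$ after relabeling the domain $\{0,1\}^2$ by XOR with $(1,1)$; but verifying that windability and the strictly-terraced property are invariant under such a relabeling is no shorter than the direct computation above, so I would present the direct one.
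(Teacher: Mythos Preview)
Your proposal is correct and follows essentially the same approach as the paper's own proof: compute $-\widehat{\psi}$ explicitly, verify nonnegativity of all entries under the hypotheses, check the strictly-terraced condition at the unique possible zero $(0,0)$ when $\beta+\gamma=-2$, and invoke \Cref{lem:smallwindable} for windability. The only difference is cosmetic---you spell out a bit more detail and mention the alternative route via \Cref{lem:lowerright}, which the paper omits.
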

\begin{proof}
Since $\begin{bmatrix}
-\widehat{\psi}(0,0) & -\widehat{\psi}(0,1)\\
-\widehat{\psi}(1,0) & -\widehat{\psi}(1,1)
\end{bmatrix}=\dfrac{1}{4}\begin{bmatrix}
-2-\beta-\gamma & \gamma-\beta\\
\gamma-\beta & 2-\beta-\gamma
\end{bmatrix}$, we have $-\widehat{\psi}(x)\geq 0$ for all $x\in\{0,1\}^{2}$, and the only possibility of $-\widehat{\psi}(x)=0$ is when $\beta+\gamma=-2$ and $x=(0,0)$. In that case, we have $-\widehat{\psi}(1,0)=\widehat{\psi}(0,1)=\frac{\gamma-\beta}{4}$. It follows that $-\widehat{\psi}$ is strictly terraced. 

The windablity of $-\widehat{\psi}$ follows from \Cref{lem:smallwindable}.
\end{proof}
Now we are ready to prove \Cref{thm:FPRAS}.
\begin{proof}[Proof of \Cref{thm:FPRAS}]
The range of parameters can be divided into 4 regions:

Case 1: $\beta>\gamma$ and $\beta+\gamma\geq 2$. Combining \Cref{lem:lowerright} and \Cref{cor:windableterraced+}, there is an FPRAS for $\mathsf{Holant}\left({\widehat{\psi}}\cup\{\mathbf{Even}_{k}:k\geq 1\}\right)$, where $\psi:\{0,1\}^{2}\rightarrow\mathbb{Q}$ defined by $\begin{bmatrix}
\psi(0,0) & \psi(0,1)\\
\psi(1,0) & \psi(1,1)
\end{bmatrix}=\begin{bmatrix}
\beta & 1\\
1 & \gamma
\end{bmatrix}$. An FPRAS for $\mathsf{\#CSP}(\{\psi\})$ is thus given by \Cref{prop:csptoholant}.

Case 2: $\beta<\gamma$ and $\beta+\gamma\geq 2$. This case follows by symmetry from Case 1, as switching $\beta$ and $\gamma$ preserves $Z_{G}$.

Case 3: $\beta<\gamma$ and $\beta+\gamma \leq -2$. In a similar way to Case 1, this case follows by combining \Cref{lem:upperleft}, \Cref{cor:windableterraced+} and \Cref{prop:csptoholant}.

Case 4: $\beta>\gamma$ and $\beta+\gamma \leq -2$. This case follows by symmetry from Case 3.
\end{proof}

\section{The Recursion Method}\label{sec:five}

This section collects the proofs of \Cref{thm:positivity,thm:diskaround0,thm:notthreshold}. All of three proofs are based on the recursion method, but they recurse with different subsets of the complex plane.  

\label{sec:recursion}
\subsection{Recursion with Real Intervals}
\label{subsec:positivity}

\Cref{prop:dense} shows that when $\gamma<0$ and the parameter point $(\beta,\gamma)$ lies slightly to the left of the line $\beta+\gamma=1$, the realizable ratios are dense in $\mathbb{R}$. The next lemma says that, if $(\beta,\gamma)$ lie to the right of the line $\beta+\gamma=1$, the realizable ratios are bounded by an interval, even if certain external fields are allowed in the system. This is in stark contrast to \Cref{prop:dense}, and indicates that some phase transition happens at the line $\beta+\gamma=1$. As a byproduct, the next lemma also shows that the partition function is always positive, in contrast with \Cref{thm:sharpPhard}. 
\begin{lemma}\label{lem:positivity}
Let $\beta,\gamma$ be real numbers such that $\beta+\gamma\geq 1$ and $\gamma<0$. Then for any graph $G$, any external field $\bflam\in[\frac{\gamma}{\beta},1]^{V(G)}$ and any $v\in V(G)$, 
\begin{enumerate}[label = (\arabic*)]
\setlength\itemsep{0pt}
\item the ratio $R_{G,v}(\bflam)$ is well defined and falls in the interval $[\frac{\gamma}{\beta},1]$;
\item the partition function $Z_{G}(\bflam)$ is positive.
\end{enumerate}
\end{lemma}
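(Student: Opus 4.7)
The plan is to prove (1) and (2) simultaneously by induction on $|V(G)|+|E(G)|$, driven by two stability properties of the interval $I:=[\gamma/\beta,1]$ that I would establish first. First, $I$ is closed under multiplication: the hypotheses force $\beta>1$ and $|\gamma/\beta|<1$, so $I\subset(-1,1]$ and any product of two elements of $I$ stays in $I$. Second, the Möbius map $f(r)=(1+\gamma r)/(\beta+r)$ from \Cref{lem:mobiusmap} sends $I$ into itself: the denominator is positive on $I$ (since $\beta^2+\gamma>0$), $f$ is monotone there, and checking the endpoints reduces $f(\gamma/\beta)\leq 1$ and $f(1)\geq\gamma/\beta$ to exactly the hypotheses $\beta+\gamma\geq 1$ and $\beta\geq\gamma$. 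Part (2) will then be a formal consequence of (1): once $R_{G,v}\in I$ and $[Z_{G,v}]_0>0$ are in place, $Z_G=[Z_{G,v}]_0(1+R_{G,v})>0$ follows from $R_{G,v}\geq\gamma/\beta>-1$.

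The base case $|V|=1$ is a direct computation using only multiplication stability. In the inductive step, I would first dispatch the easy sub-cases by absorption into in-range field modifications. If $v$ has a self-loop $e$, the identity $Z_G(\bflam)=\beta Z_{G\setminus e}(\bflam')$ with $\lambda'_v=(\gamma/\beta)\lambda_v\in I$ (and $R_{G,v}(\bflam)=R_{G\setminus e,v}(\bflam')$) reduces both claims to the inductive hypothesis on $G\setminus e$; if $v$ is isolated in $G$, then $R_{G,v}=\lambda_v\in I$ and $Z_G=(1+\lambda_v)Z_{G\setminus v}$ handle both claims directly. The auxiliary positivity $[Z_{G,v}]_0>0$ needed throughout comes from the identity $[Z_{G,v}]_0=\beta^{s_v+k_v}Z_{G\setminus v}(\bflam^{(0)})$, where $\lambda^{(0)}_u=\lambda_u\beta^{-m(v,u)}$ stays in $I$ because division by a positive power of $\beta>1$ preserves $I$.

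The crux is the remaining case, in which (after self-loop absorption) $v$ is incident to one or more non-self-loop edges. When $v$ has a single such edge $\{v,u\}$, the clean recursion $R_{G,v}(\bflam)=\lambda_v\,f\bigl(R_{G\setminus v,u}(\bflam_{-v})\bigr)$ stays in $I$ by the two stability properties plus the inductive hypothesis on $G\setminus v$. The hard part will be when $v$ has several non-self-loop edges in a graph with cycles: the tree-style factorization $R_{G,v}=\lambda_v\prod_i f(R^{(i)})$ fails, because the ratios at the different neighbors of $v$ are correlated through the remainder of $G$. My plan for this case is to peel one edge at a time, setting $G^{\star}:=G\setminus\{v,u\}$ and using the identity
\[
R_{G,v}(\bflam)=\tilde q_0\cdot\frac{1+\gamma q_1}{\beta+q_0},
\]
where $q_i$ and $\tilde q_i$ denote the conditional ratios at $u$ (respectively $v$) in $G^{\star}$ given $\sigma(v)=i$ (respectively $\sigma(u)=i$). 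Conditioning on spin~$0$ at either endpoint translates into field modifications inside $I$, so the inductive hypothesis applied to $G^{\star}\setminus v$ and $G^{\star}\setminus u$ places both $q_0$ and $\tilde q_0$ in $I$; the delicate ingredient is $q_1$, since pinning to spin~$1$ takes fields outside $I$. I would close this gap by exploiting the commutation identity $\tilde q_0\, q_1=q_0\,\tilde q_1$ (both equal $[Z_{G^{\star},v,u}]_{1,1}/[Z_{G^{\star},v,u}]_{0,0}$) together with a strengthened inductive statement that also tracks the pinned ratios, ensuring the composite expression above is stabilized by the same Möbius-map argument using only in-range quantities. This cyclic multi-edge case is the only real technical hurdle; every other part of the argument is a direct consequence of the two stability properties of $I$.
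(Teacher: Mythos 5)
Your setup matches the paper's: simultaneous induction on $|V(G)|+|E(G)|$, self-loop absorption via field modification, and the reduction to deleting an edge $\{u,v\}$ incident to $v$. Your introductory observations (stability of $I=[\gamma/\beta,1]$ under products and under $f$, and that part (2) is a consequence of part (1) plus $[Z_{G,v}]_0>0$) are correct and are implicitly used in the paper too. You have also correctly identified the real difficulty: when $v$ has several incident non-loop edges in a graph with cycles, the identity $R_{G,v}=\lambda_v\prod_i f(R^{(i)})$ fails, and the naive recursion does not close.

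However, your proposed resolution of that hard case is a genuine gap, not a complete argument. You correctly write $R_{G,v}=\tilde q_0\cdot\frac{1+\gamma q_1}{\beta+q_0}$ (with $q_0=B/A$, $q_1=D/C$, $\tilde q_0=C/A$ in the notation $A=[Z_{G_1,u,v}]_{0,0}$, etc.), and you correctly flag that $q_1$ is not controlled by the inductive hypothesis since pinning a vertex to spin $1$ is not a field modification within $I$. But the proposed fix --- the commutation identity $\tilde q_0 q_1 = q_0\tilde q_1 = D/A$ together with ``a strengthened inductive statement that also tracks the pinned ratios'' --- does not yet yield a proof. Knowing $D/A\in I$ does \emph{not} bound $q_1=(D/A)/\tilde q_0$ inside $I$, because $\tilde q_0$ can be close to $\gamma/\beta$, so $q_1$ can escape $I$; and you have not said what the strengthened invariant would be or how it propagates through the recursion. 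The paper's actual argument is quite different and more intricate: it extracts \emph{six} separate consequences from the inductive hypothesis applied to $G_1=G\setminus\{u,v\}$ and to the contraction $G/\{u,v\}$, with field modifications not just by $0$ (pinning) but also by $\gamma/\beta$, producing the bounds $\frac{C+D}{A+B}\in I$, $\frac{\beta C+\gamma D}{\beta A+\gamma B}\in I$, $\frac{\beta B+\gamma D}{\beta A+\gamma C}\in I$, $\frac{D}{A}\in I$ and positivity of $A$, $A+B$, $\beta A+\gamma B$, $\beta A+\gamma C$. These are then combined through two nontrivial algebraic identities expressing $1-R_{G,v}$ and $R_{G,v}-\gamma/\beta$ as nonnegative combinations of the quantities just bounded. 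Those identities are the true crux and they do not appear in your plan. Because you have only sketched an intention rather than a mechanism for the cyclic multi-edge case, I would not count this as a proof; the case you correctly call ``the only real technical hurdle'' is precisely where it stops short.
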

\begin{proof}
Since $\gamma<0$ and $\beta\geq 1-\gamma>-\gamma$, we have $\frac{\gamma}{\beta}\in(-1,0)$. Observe that a self-loop on a vertex $v$ has the same effect as multiplying the local external field $\lambda_{v}$ by $\frac{\gamma}{\beta}$
and multiplying the partition function by $\beta>0$.
The new local external field would still be in $[\frac{\gamma}{\beta},1]^{V(G)}$ because $\frac{\gamma}{\beta}\in(-1,0)$. So we may assume $E(G)$ doesn't contain self-loops.

We then perform induction on $|V(G)|+|E(G)|$ for the two statements together. The base case is where $V(G)$ is a singleton $\{v\}$ and $E(G)=\emptyset$, in which case $R_{G,v}(\bflam)=\lambda_{v}$ and $Z_{G}(\bflam)=1+\lambda_{v}>0$. Now assume $G$ is a graph such that $|V(G)|+|E(G)|\geq 2$. Assume also that the induction hypotheses (1) and (2) hold for all graphs with a smaller combined number of vertices and edges. Consider a vertex $v\in V(G)$. The induction step is to prove statements (1) and (2) for the pair $(G,v)$.

If no edge is incident to $v$, let $G-v$ be the graph obtained from $G$ by deleting the vertex $v$, and let $\bflamp$ be $\bflam$ restricted on $V(G)\setminus\{v\}$. We have $[Z_{G,v}(\bflam)]_{0}=Z_{G-v}(\bflamp)>0$ from the induction hypothesis (2) applied on $G-v$, so $R_{G,v}$ is well-defined. Clearly $R_{G,v}=\lambda_{v}\in[\frac{\gamma}{\beta},1]$ and $Z_{G}(\bflam)=(1+\lambda_{v})Z_{G-v}(\bflamp)>0$, completing the induction step. In the following, we deal with the harder case where there is an edge $\{u,v\}$ incident to $v$.

Define $G_{1}=G-\{u,v\}$, the graph obtained from $G$ by deleting the edge $\{u,v\}$. The following equations are immediate consequences:
\begin{align*}
[Z_{G,v}(\bflam)]_{0}&=\beta\cdot \left[Z_{G_{1},u,v}(\bflam)\right]_{0,0}+\left[Z_{G,u,v}(\bflam)\right]_{1,0},\\
[Z_{G,v}(\bflam)]_{1}&= 
\left[Z_{G_{1},u,v}(\bflam)\right]_{0,1}+\gamma\cdot\left[Z_{G,u,v}(\bflam)\right]_{1,1}.
\end{align*}
To shorten expressions, let $A=\left[Z_{G_{1},u,v}(\bflam)\right]_{0,0}$, $B=\left[Z_{G_{1},u,v}(\bflam)\right]_{1,0}$, $C=\left[Z_{G_{1},u,v}(\bflam)\right]_{0,1}$, and $D=\left[Z_{G_{1},u,v}(\bflam)\right]_{1,1}$. So the ratio $R_{G,v}(\bflam)=\dfrac{[Z_{G,v}(\bflam)]_{1}}{[Z_{G,v}(\bflam)]_{0}}$ can be written as $\frac{C+\gamma D}{\beta A+B}$. The entire goal of the remaining proof is to use the induction hypothesis to show that
\begin{equation}\label{eq:goal}
\beta A+B> 0\quad\text{and}\quad\frac{\gamma}{\beta}\leq \frac{C+\gamma D}{\beta A+B}\leq 1.
\end{equation}
Clearly proving \eqref{eq:goal} would imply that the first statement holds for the pair $(G,v)$. The second statement would also follow, since we would have $Z_{G}(\bflam)=\beta A+B+C+\gamma D\geq(1+\frac{\gamma}{\beta})(\beta A+B)>0$. 

To prove \eqref{eq:goal}, we make use of the induction hypotheses in 6 different ways:
\begin{itemize}
\setlength\itemsep{0pt}
\item First, since $|V(G_{1})|+|E(G_{1})|=|V(G)|+|E(G)|-1$, the induction hypothesis can be applied to $G_{1}$. Since 
$R_{G_{1},v}(\bflam)=\dfrac{\left[Z_{G_{1},v}(\bflam)\right]_{1}}{\left[Z_{G_{1},v}(\bflam)\right]_{0}}=\dfrac{C+D}{A+B}$,
the induction hypothesis (1) gives
\begin{equation}\label{eq:G1}
\frac{\gamma}{\beta}\leq\frac{C+D}{A+B}\leq 1.
\end{equation}
\item If we define $\bflamp\in[\frac{\gamma}{\beta},1]^{V(G)}$ to be $\lambda'_{u}=\frac{\gamma}{\beta}\lambda_{u}$ and $\lambda'_{x}=\lambda_{x}$ for all $x\neq u$, we would have $R_{G_{1},v}(\bflamp)=\dfrac{\left[Z_{G_{1},u,v}(\bflamp)\right]_{0,1}+\left[Z_{G_{1},u,v}(\bflamp)\right]_{1,1}}{\left[Z_{G_{1},u,v}(\bflamp)\right]_{0,0}+\left[Z_{G_{1},u,v}(\bflamp)\right]_{1,0}}=\dfrac{C+\frac{\gamma}{\beta}D}{A+\frac{\gamma}{\beta}B}=\dfrac{\beta C+\gamma D}{\beta A+\gamma B}$. The induction hypothesis (1) gives
\begin{equation}\label{eq:G2}
\frac{\gamma}{\beta}\leq\frac{\beta C+\gamma D}{\beta A+\gamma B}\leq 1.
\end{equation}

\item Similarly we can define $\bflamp\in[\frac{\gamma}{\beta},1]^{V(G)}$ to be $\lambda'_{v}=\frac{\gamma}{\beta}\lambda_{v}$ and $\lambda'_{x}=\lambda_{x}$ for all $x\neq v$. The induction hypothesis on $R_{G_{1},u}(\bflamp)$ gives
\begin{equation}\label{eq:G3}
\frac{\gamma}{\beta}\leq\frac{\beta B+\gamma D}{\beta A+\gamma C}\leq 1.
\end{equation}
\item Consider $G_{2}=G/\{u,v\}$. This means contracting the edge $\{u,v\}$: create a new vertex $w$, delete $\{u,v\}$ from $E(G)$, and in every other member of $E(G)$, substitute $w$ for any appearance of $u$ and $v$ as endpoints. If we define $\bflamp\in[\frac{\gamma}{\beta},1]^{V(G_{2})}$ to be $\lambda'_{w}=\lambda_{u}\lambda_{v}$ and $\lambda'_{x}=\lambda_{x}$ for all $x\neq w$, it is clear that $\left[Z_{G_{2},w}(\bflamp)\right]_{0}=A$ and $\left[Z_{G_{2},w}(\bflamp)\right]_{1}=D$. Using the induction hypothesis on $G_{2}$, we get
\begin{equation}\label{eq:G4}
\frac{\gamma}{\beta}\leq \frac{D}{A}\leq 1.
\end{equation}
\item If we define $\bflamp\in[\frac{\gamma}{\beta},1]^{V(G)}$ to be $\lambda'_{u}=0$ and $\lambda'_{x}=\lambda_{x}$ for all $x\neq u$, this has the effect of ``pinning'' the spin on vertex $u$ to 0. We thus have $\left[Z_{G_{1},v}(\bflamp)\right]_{0}=A$ and $\left[Z_{G_{1},v}(\bflamp)\right]_{1}=C$. Using the induction hypothesis (1) on $G_{1}$, we have $\frac{\gamma}{\beta}\leq \frac{C}{A}\leq 1$. Similarly, by pinning the spin of $v$ to 0, we get $\frac{\gamma}{\beta}\leq \frac{B}{A}\leq 1$.

\item 
If we define $\bflamp$ to be $\lambda'_{u}=\lambda'_{v}=0$ and $\lambda'_{x}=\lambda_{x}$ for all $x\neq u,v$, we can pin the spins of both $u$ and $v$ to 0. Then $A=Z_{G_{1}}(\bflamp)>0$, by the induction hypothesis (2) on $G_{1}$. Therefore, using $\frac{\gamma}{\beta}\leq \frac{B}{A}\leq 1$, we get $A+B\geq (1+\frac{\gamma}{\beta})A>0$ and $\beta A+\gamma B\geq(\beta+\gamma)A>0$. Similarly we also have $\beta A+\gamma C>0$. In conclusion,
\begin{equation}\label{eq:samesign}
A+B,\quad \beta A+\gamma B,\quad \beta A+\gamma C \text{ and } A \text{ are all positive.}
\end{equation}
\end{itemize}

We claim that given the condition \eqref{eq:samesign}, the inequalities \eqref{eq:G1}, \eqref{eq:G2}, \eqref{eq:G3} and \eqref{eq:G4} together imply \eqref{eq:goal}. This is, in fact, purely elementary algebra. Recall that $\beta\geq 1-\gamma>1$. So the first half of \eqref{eq:goal} is obvious: $\beta A+B=(\beta-1)A+(A+B)$ is also positive. Now, using the four inequalities \eqref{eq:G1} through \eqref{eq:G4}, we have
\begin{equation}\label{eq:right}
1-\frac{C+\gamma D}{\beta A+B}=\frac{A+B}{\beta A+B}\left(1-\frac{C+D}{A+B}\right)+\frac{\beta A}{\beta A+B}\left(\frac{D}{A}-\frac{\gamma}{\beta}\right)+\frac{(\beta+\gamma-1)A}{\beta A+B}\left(1-\frac{D}{A}\right)\geq 0,
\end{equation}
and
\begin{equation}\label{eq:left}
\begin{split}
\frac{C+\gamma D}{\beta A+B}-\frac{\gamma}{\beta}=\,&\frac{(\beta^{2}+\gamma\beta+\gamma^{2})}{(\beta+\gamma)(\beta^{2}+\gamma^{2})}\cdot\frac{\beta A+\gamma B}{\beta A+B}\left(\frac{\beta C+\gamma D}{\beta A+\gamma B}-\frac{\gamma}{\beta}\right)+\\
&\frac{-\gamma\beta}{(\beta+\gamma)(\beta^{2}+\gamma^{2})}\cdot\frac{\beta A+\gamma C}{\beta A+B}\left(\frac{\beta B+\gamma D}{\beta A+\gamma C}-\frac{\gamma}{\beta}\right)+\\
&\frac{-\gamma(\beta+\gamma-1)}{(\beta+\gamma)}\cdot\frac{A}{\beta A+B}\left(1-\frac{D}{A}\right)\geq 0.
\end{split}
\end{equation}
This completes the proof of the second half of \eqref{eq:goal}. Note that both \eqref{eq:right} and \eqref{eq:left} crucially rely on the condition $\beta+\gamma\geq 1$. 
\end{proof}
\Cref{thm:positivity} then follows almost immediately: 
\begin{proof}[Proof of \Cref{thm:positivity}]
If both $\beta$ and $\gamma$ is nonnegative, it is clear that $Z_{G}>0$. So by symmetry we can assume $\gamma$ is negative, and then the theorem follows from the statement (2) of \Cref{lem:positivity}, since $Z_{G}=Z_{G}(\mathbf{1})$.
\end{proof}

\subsection{Recursion with Circular Regions}
\label{subsec:circular}

The one major range of parameters where approximation complexity is unsettled is where $1\leq \beta+\gamma< 2$ and (without loss of generality) $\gamma<0$. As with the proof of \Cref{thm:FPTAS}, zero-freeness is still one of the most natural approaches to try, in terms of proving approximation efficiency. 

However, in this range, the contraction method doesn't seem to apply easily. Instead, we will try to attack this range using another important method: induction on the number of vertices, whose power is already well demonstrated in \Cref{subsec:positivity}, where we proved \Cref{thm:positivity}. In fact, \Cref{thm:positivity} itself can be viewed as a ``zero-freeness'' result: it implies that the partition function $Z_{G}(x)$ is zero-free on the real interval $[\frac{\gamma}{\beta},1]$. The only weakness is, in order to make use of \Cref{lem:zerofreeness} we must prove a zero-free neighborhood of $[0,1]$ on \textit{the complex plane}. 

Unfortunately, as mentioned in \Cref{subsec:results}, we are unable to do this for the whole range $\{(\beta,\gamma):\gamma<0\text{ and }1\leq \beta+\gamma\leq 2\}$ . In general, it appears challenging to prove optimal zero-free regions in the complex plane (c.f. \cite{guo2020zeros,bencs2023complex}). In this section, we will prove \Cref{thm:diskaround0}, which gives an optimal \textit{circular} zero-free region.

\begin{lemma}\label{lem:diskaround0}
Let $\beta,\gamma$ be real numbers such that $\gamma<0$ and $1\leq \beta+\gamma\leq 2$. Let $r=\frac{\beta-1}{1-\gamma}\in (0,1]$ and let $K$ denote the region $\{z\in\mathbb{C}:|z|\leq r\}\setminus\{-1\}$ (remark: when $\beta+\gamma<2$, the exclusion of $\{-1\}$ is redundant since $r<1$). Then for any external field $\bflam\in K^{V(G)}$ and any $v\in V(G)$,
\begin{enumerate}[label = (\arabic*)]
\setlength\itemsep{0pt}
\item the ratio $R_{G,v}(\bflam)$ is well defined and falls in $K$;
\item the partition function $Z_{G}(\bflam)$ is nonzero.
\end{enumerate}
\end{lemma}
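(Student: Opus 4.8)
The plan is to mimic the structure of the proof of \Cref{lem:positivity}, performing induction on $|V(G)|+|E(G)|$ for statements (1) and (2) simultaneously, but now working with the closed disk region $K$ instead of a real interval. As in the previous proof, I would first reduce to the case where $G$ has no self-loops: a self-loop at $v$ multiplies the local field $\lambda_v$ by $\gamma/\beta$ and the partition function by $\beta>0$, and since $|\gamma/\beta| < r \le 1$ (this needs checking: $|\gamma/\beta| = -\gamma/\beta$ and $r = (\beta-1)/(1-\gamma)$, so $-\gamma/\beta \le r$ is equivalent to $-\gamma(1-\gamma) \le \beta(\beta-1)$, i.e. $\beta^2 - \beta + \gamma - \gamma^2 \ge 0$, i.e. $(\beta-\gamma)(\beta+\gamma-1) \ge (\beta-\gamma) - \ldots$ — I would verify this holds when $\beta+\gamma\ge 1$ and $\gamma<0$), the modified field stays in $K$, and in particular stays away from $-1$. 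The base case is a single isolated vertex, where $R_{G,v}(\bflam) = \lambda_v \in K$ and $Z_G(\bflam) = 1 + \lambda_v \ne 0$ precisely because $\lambda_v \ne -1$ (this is exactly why $-1$ must be excluded from $K$).

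For the induction step, I would split as before: if $v$ has no incident edge, delete it and use the induction hypothesis directly. If there is an edge $\{u,v\}$, set $G_1 = G - \{u,v\}$ and write $A,B,C,D$ for the four entries of $[Z_{G_1,u,v}(\bflam)]$, so that $R_{G,v}(\bflam) = (C+\gamma D)/(\beta A + B)$. The goal is to show $\beta A + B \ne 0$ and $(C+\gamma D)/(\beta A+B) \in K$. I would extract the same collection of consequences of the induction hypothesis that were used in \Cref{lem:positivity}: applying statement (1) to $R_{G_1,v}(\bflam)$ gives $(C+D)/(A+B) \in K$; rescaling $\lambda_u$ by $\gamma/\beta$ gives $(\beta C + \gamma D)/(\beta A + \gamma B) \in K$; rescaling $\lambda_v$ gives $(\beta B + \gamma D)/(\beta A + \gamma C) \in K$; contracting the edge (with $\lambda_w = \lambda_u\lambda_v$, which still lies in $K$ since $K$ is closed under multiplication — another point to check, using $r\le 1$) gives $D/A \in K$; pinning spins to $0$ gives $B/A, C/A \in K$; and pinning both to $0$ gives $A = Z_{G_1}(\bflamp) \ne 0$ by statement (2). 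The denominators $A+B$, $\beta A + \gamma B$, $\beta A + \gamma C$, $\beta A + B$ then all need to be shown nonzero, which should follow from $A\ne 0$ together with the ratio bounds (e.g. $\beta A + B = A(\beta + B/A)$ and $|B/A| \le r < \beta$ when $\beta > 1$, so $\beta + B/A \ne 0$; I'd handle the boundary case $r=1$, $\beta+\gamma=2$ separately using the exclusion of $-1$).

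The genuinely new difficulty, and the step I expect to be the main obstacle, is replacing the convex-combination identities \eqref{eq:right} and \eqref{eq:left} — which proved a point lies in a real \emph{interval} by writing $1 - (\text{ratio})$ and $(\text{ratio}) - \gamma/\beta$ as nonnegative combinations — with an argument that the ratio $(C+\gamma D)/(\beta A + B)$ lies in the \emph{disk} $|z| \le r$. Disk membership is not a linear/convex condition, so the slick algebraic identities don't transfer directly. The natural route is to think of $R_{G,v}$ as obtained from $R_{G_1,\cdot}$-type quantities by a Möbius transformation $f(z) = (1+\gamma z)/(\beta + z)$ composed with multiplication, and to show that $f$ maps $K$ into $K$ and that $K$ is preserved under the relevant products; the precise choice $r = (\beta-1)/(1-\gamma)$ should be exactly what makes $f(\partial D(0,r))$ tangent to or inside $\partial D(0,r)$. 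Concretely I would either (i) verify $f(K)\subseteq K$ and that the ``two-variable'' constraint $\gamma z_1 z_2 + z_1 + z_2 + \beta \ne 0$ holds for $z_1,z_2$ ranging over an appropriate region, perhaps combining with \Cref{cor:contractunitcircle}-style reasoning, or (ii) push through a direct estimate: assuming all six induction-hypothesis ratios lie in $\overline{D(0,r)}$ (minus $-1$ where relevant), bound $|C + \gamma D|$ and $|\beta A + B|$ and show $|C+\gamma D| \le r\,|\beta A + B|$. Option (ii) reduces to a concrete inequality among $A,B,C,D$ constrained by several disk conditions, which is where the real calculation lives; I expect the worst case to be on the boundary circle and the inequality to be tight there, confirming the optimality claim in \Cref{thm:diskaround0} (optimality itself would be shown separately by exhibiting a graph — likely a path or star — whose $Z_G(x)$ has a root of modulus exactly $r$).
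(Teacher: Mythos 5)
Your overall plan matches the paper's proof: induct on $|V(G)|+|E(G)|$, reduce to the loop-free case, extract disk membership for various ratios from the inductive hypothesis, and close the induction by a direct modulus estimate showing $|C+\gamma D|\le r\,|\beta A+B|$ (your ``Option~(ii)''), with the excluded point $-1$ only mattering at the boundary $r=1$. That is indeed what the paper does. You have also correctly identified where the hard work lives. But there is a concrete gap in the choice of inductive-hypothesis facts that makes your version of the estimate fail.

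You propose to extract the same six ratio bounds as in the proof of \Cref{lem:positivity}, and in particular you handle the contracted graph $G_{2}=G/\{u,v\}$ with $\lambda'_{w}=\lambda_{u}\lambda_{v}$, which yields $D/A\in K$, i.e.\ $|D/A|\le r$. That bound is too weak. With only $|B/A|,|C/A|,|D/A|\le r$, the best you can get is
\[
|C+\gamma D|\le (r+(-\gamma)r)|A|=r(1-\gamma)|A|
\qquad\text{and}\qquad
|\beta A+B|\ge(\beta-r)|A|,
\]
so $|C+\gamma D|\le r|\beta A+B|$ would require $1-\gamma\le\beta-r$, i.e.\ $r\le\beta+\gamma-1$. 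Plugging in $r=\frac{\beta-1}{1-\gamma}$, this simplifies to $0\le\gamma(2-\beta-\gamma)$, which (since $\gamma<0$) forces $\beta+\gamma\ge2$, i.e.\ the argument closes only at the extreme boundary. The fix is to contract with $\lambda'_{w}=\lambda_{u}$ instead of $\lambda_{u}\lambda_{v}$. Then $D=\lambda_{v}\cdot[Z_{G_{2},w}(\bflamp)]_{1}$ and the inductive hypothesis gives $D/A\in K\cdot K$, i.e.\ the sharper bound $|D/A|\le r^{2}$. With this, $|C+\gamma D|\le (r+(-\gamma)r^{2})|A|=r(1-\gamma r)|A|$, and the identity $\beta-r=\frac{1-\beta\gamma}{1-\gamma}=1-\gamma r$ (which is exactly what the definition of $r$ buys) makes the two sides match: $|C+\gamma D|\le r(\beta-r)|A|\le r|\beta A+B|$. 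This is the one nontrivial algebraic step, and it is where your proposal, as written, does not close. The paper only needs the three ratios $B/A$, $C/A$, $D/A$; the other three facts imported from \Cref{lem:positivity} are unnecessary here.

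A minor further note: for the boundary case $r=1$ you still must rule out $R_{G,v}(\bflam)=-1$. The chain of inequalities above is tight only if $|B/A|=r$ and $B$, $A$ have opposite sign; that forces $B/A=-1$, which is excluded from $K$, so the inductive hypothesis already prevents it. Your instinct that $-1$ is excluded ``precisely for the base case'' is only half the story: the exclusion propagates through the induction via this tightness argument.
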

\begin{remark}
The proof below follows the general structure of the proof of \Cref{thm:positivity}. The main differences are in the ways we use the induction hypotheses. Although the first half of the proofs are mostly identical, there are occasionally minor differences. So we still present the complete proof.
\end{remark}
\begin{proof}[Proof of \Cref{lem:diskaround0}]
 Recall that a self-loop on a vertex $v$ has the same effect as multiplying the local external field $\lambda_{v}$ by $\frac{\gamma}{\beta}$, which doesn't change the fact that $\bflam\in K^{V(G)}$ because $\frac{\gamma}{\beta}\in(-1,0)$. So we may assume $E(G)$ doesn't contain self-loops.

We then perform induction on $|V(G)|+|E(G)|$ for the two statements together. The base case is where $V(G)$ is a singleton $\{v\}$ and $E(G)=\emptyset$, in which case $R_{G,v}(\bflam)=\lambda_{v}$ and $Z_{G}(\bflam)=1+\lambda_{v}\neq 0$ (this is where we need the exclusion of $-1$). Now assume $G$ is a graph such that $|V(G)|+|E(G)|\geq 2$. Assume also that the induction hypotheses (1) and (2) hold for all graphs with a smaller combined number of vertices and edges. Consider a vertex $v\in V(G)$. The induction step is to prove statements (1) and (2) for the pair $(G,v)$.

If no edge is incident to $v$, let $G-v$ be the graph obtained from $G$ by deleting the vertex $v$, and let $\bflamp$ be $\bflam$ restricted on $V(G)\setminus\{v\}$. We have $[Z_{G,v}(\bflam)]_{0}=Z_{G-v}(\bflamp)\neq 0$ from the induction hypothesis (2) applied on $G-v$, so $R_{G,v}$ is well-defined. Clearly $R_{G,v}=\lambda_{v}\in K$ and $Z_{G}(\bflam)=(1+\lambda_{v})Z_{G-v}(\bflamp)\neq 0$, completing the induction step. In the following, we deal with the harder case where there is an edge $\{u,v\}$ incident to $v$.

Define $G_{1}=G-\{u,v\}$, the graph obtained from $G$ by deleting the edge $\{u,v\}$. We have:
\begin{align*}
[Z_{G,v}(\bflam)]_{0}&=\beta\cdot \left[Z_{G_{1},u,v}(\bflam)\right]_{0,0}+\left[Z_{G,u,v}(\bflam)\right]_{1,0},\\
[Z_{G,v}(\bflam)]_{1}&= 
\left[Z_{G_{1},u,v}(\bflam)\right]_{0,1}+\gamma\cdot\left[Z_{G,u,v}(\bflam)\right]_{1,1}.
\end{align*}
To shorten expressions, let $A=\left[Z_{G_{1},u,v}(\bflam)\right]_{0,0}$, $B=\left[Z_{G_{1},u,v}(\bflam)\right]_{1,0}$, $C=\left[Z_{G_{1},u,v}(\bflam)\right]_{0,1}$, and $D=\left[Z_{G_{1},u,v}(\bflam)\right]_{1,1}$. So the ratio $R_{G,v}(\bflam)=\dfrac{[Z_{G,v}(\bflam)]_{1}}{[Z_{G,v}(\bflam)]_{0}}$ can be written as $\frac{C+\gamma D}{\beta A+B}$. 

We make use of the induction hypotheses in the following 3 ways:
\begin{itemize}
\setlength\itemsep{0pt}
\item If we define $\bflamp\in K^{V(G)}$ to be $\lambda'_{v}=0$ and $\lambda'_{x}=\lambda_{x}$ for all $x\neq v$, we would have
$R_{G_{1},u}(\bflamp)=\dfrac{\left[Z_{G_{1},u}(\bflamp)\right]_{1}}{\left[Z_{G_{1},u}(\bflamp)\right]_{0}}=\dfrac{B}{A}$. The induction hypothesis (1) gives $\frac{B}{A}\in K$.
\item If we define $\bflamp\in K^{V(G)}$ to be $\lambda'_{u}=0$ and $\lambda'_{x}=\lambda_{x}$ for all $x\neq u$, we would have
$R_{G_{1},v}(\bflamp)=\dfrac{\left[Z_{G_{1},v}(\bflamp)\right]_{1}}{\left[Z_{G_{1},v}(\bflamp)\right]_{0}}=\dfrac{C}{A}$. The induction hypothesis (1) gives $\frac{C}{A}\in K$.
\item Consider $G_{2}=G/\{u,v\}$. This means contracting the edge $\{u,v\}$: create a new vertex $w$, delete $\{u,v\}$ from $E(G)$, and in every other member of $E(G)$, substitute $w$ for any appearance of $u$ and $v$ as endpoints. If we define $\bflamp\in K^{V(G_{2})}$ to be $\lambda'_{w}=\lambda_{u}$ and $\lambda'_{x}=\lambda_{x}$ for all $x\neq w$, using the induction hypothesis on $G_{2}$, we get $\dfrac{\left[Z_{G_{2},w}(\bflamp)\right]_{1}}{\left[Z_{G_{2},w}(\bflamp)\right]_{0}}\in K$. Now, we still have $\left[Z_{G_{2},w}(\bflamp)\right]_{0}=A$, but $D=\lambda_{v}\cdot \left[Z_{G_{2},w}(\bflamp)\right]_{1}$. So $\frac{D}{A}=\lambda_{v}\dfrac{\left[Z_{G_{2},w}(\bflamp)\right]_{1}}{\left[Z_{G_{2},w}(\bflamp)\right]_{0}}\in K\cdot K$, where $K\cdot K$ stands for the Minkowski product of $K$ with itself.
\end{itemize}

Now we are ready to complete the induction step. Recall that $\beta\geq 1-\gamma>1$. It follows from $\frac{B}{A}\in K$ and $-\beta\not\in K$ that $\beta A+B\neq 0$. So $R_{G,v}(\bflam)=\frac{C+\gamma D}{\beta A+B}$ is well defined. What's more,
\begin{align}
\label{eq:ABopposite}
r\cdot |\beta A+B| &\geq r\beta |A|-r|B|\\
\label{eq:Blarge}
&\geq r\beta |A|-r^{2}|A| &(\text{since }\tfrac{B}{A}\in K) \\
& = r(1-\gamma r)|A| &(\text{since }r=\tfrac{\beta-1}{1-\gamma})\nonumber\\
& \geq |C|+(-\gamma)|D| &(\text{since }\tfrac{C}{A}\in K\text{ and }\tfrac{D}{A}\in K\cdot K)\\
& \geq |C+\gamma D|,
\end{align}
so $R_{G,v}(\bflam)=\frac{C+\gamma D}{\beta A+B}\in \{z\in \mathbb{C}:|z|\leq r\}$. The only possibility of $R_{G,v}(\bflam)=-1$ is when $r=1$ and all the inequalities above hold with equality. But then the equalities in \eqref{eq:ABopposite} and \eqref{eq:Blarge} together imply that $B=-rA=-A$, violating the induction hypothesis $\frac{B}{A}\in K$. So we conclude that $R_{G,v}(\bflam)\in \{z\in \mathbb{C}:|z|\leq r\}\setminus\{-1\}=K$, proving statement (1) for the pair $(G,v)$. Finally, from $R_{G,v}(\bflam)=\frac{C+\gamma D}{\beta A+B}\neq -1$ it immediately follows that $Z_{G}(\bflam)=\beta A+B+C+\gamma D=\left(1+\frac{C+\gamma D}{\beta A+B}\right)(\beta A+B)\neq 0$, proving statement (2) for the pair $(G,v)$.
\end{proof}

We also give a complementary result showing that the radius $r=\frac{\beta-1}{1-\gamma}$ in \Cref{lem:diskaround0} is optimal:

\begin{lemma}\label{lem:optimalradius}
Let $\beta,\gamma$ be real numbers such that $\gamma <0$ and $1\leq \beta+\gamma\leq 2$. For any $r>\frac{\beta-1}{1-\gamma}$, there exists a graph $G$ such that the polynomial $Z_{G}(x)$ has a root in the disk $\{z\in\mathbb{C}:|z|<r\}$.
\end{lemma}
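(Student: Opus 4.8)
Throughout write $r^{*}=\tfrac{\beta-1}{1-\gamma}$. The plan is: first dispose of the easy range, then reduce the lemma to showing that zeros of partition functions approach the circle $\{|z|=r^{*}\}$ from outside. If $\beta+\gamma=2$ then $r^{*}=1$ and a single vertex already has $Z_{G}(x)=1+x$ with its root at $-1$, of modulus $1<r$; so assume $\beta+\gamma<2$, whence $r^{*}\in(0,1)$. For $r$ bounded away from $r^{*}$ --- in particular for every $r\ge 1$ --- a long path works: by \Cref{lem:diskaround0} all zeros of $Z_{P_{n}}(x)=(1,1)\bigl(\begin{smallmatrix}\beta&1\\ x&\gamma x\end{smallmatrix}\bigr)^{n}(1,x)^{\mathsf T}$ lie outside $\overline{D(0,r^{*})}$, and a Beraha--Kahane--Weiss analysis shows they accumulate on the real segment where the two eigenvalues of $\bigl(\begin{smallmatrix}\beta&1\\ x&\gamma x\end{smallmatrix}\bigr)$ have equal modulus, whose endpoint nearest the origin is $-\tfrac{(\sqrt{1-\beta\gamma}-1)^{2}}{\gamma^{2}}$, a point one checks has modulus $<1$. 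Since by \Cref{lem:diskaround0} no $Z_{G}$ vanishes on $\overline{D(0,r^{*})}$, it remains to treat $r$ close to $r^{*}$, and for this it suffices to produce, for every $\epsilon>0$, a graph $G$ with a zero of modulus in $(r^{*},r^{*}+\epsilon)$ --- i.e. to show the zero-free radius $r^{*}$ of \Cref{lem:diskaround0} is attained in the limit.

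I would build the required graphs by pushing the recursion of \Cref{lem:diskaround0} to its extremal case. Inspecting the chain \eqref{eq:ABopposite}--\eqref{eq:Blarge}, equality there at radius $r^{*}$ forces the $2\times 2$ matrix $[Z_{G_{1},u,v}(x)]=\bigl(\begin{smallmatrix}A&B\\ C&D\end{smallmatrix}\bigr)$, with $G_{1}=G-\{u,v\}$, to satisfy $B/A\to-r^{*}$, $C/A\to r^{*}$, $D/A\to-r^{*2}$ (with $C$ and $\gamma D$ aligned), the last sitting on the boundary of the Minkowski square $K\cdot K=\overline{D(0,r^{*2})}$; the identity $1-\gamma r^{*}=\beta-r^{*}$ is what makes the inequalities tight. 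Since $D/A=\lambda_{v}R_{G_{2},w}$ for the contracted graph $G_{2}=G/\{u,v\}$, with uniform field it is precisely the contraction step that supplies the square $r^{*}\cdot r^{*}$ --- a ``decorated path'' cannot reach this configuration, because with field $x$ of modulus $\approx r^{*}$ all realizable ratios stay of modulus $\approx r^{*}$, whereas the path's limit-zeros require a ratio of modulus comparable to $\tfrac{(\sqrt{1-\beta\gamma}-1)^{2}}{\gamma^{2}}$. Accordingly, the plan is to construct a self-similar family of two-terminal gadgets $\Gamma_{n}$ (uniform field $x$), with $\Gamma_{n+1}$ assembled from $O(1)$ copies of $\Gamma_{n}$ using only the operations of \Cref{lem:diskaround0} --- disjoint union/identification, one edge addition, one edge contraction --- so that at $x=-r^{*}$ the normalized transfer data of $\Gamma_{n}$ converges to the extremal triple $(-r^{*},r^{*},-r^{*2})$, and so that the induced map on this data fixes that triple with a single unstable direction (active only once $|x|>r^{*}$) along which the output ratio grows in modulus.

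Granting the family, the endgame is perturbative. The instability for $|x|>r^{*}$ is exactly the failure of $\beta-r\ge 1-\gamma r$ for $r>r^{*}$; differentiating the recursion at $x=-(r^{*}+\epsilon')$ shows the escaping ratio of $\Gamma_{n}$ grows with $n$ at rate $\sim(1+c\epsilon')^{n}$, so for each $\epsilon'$ there is an $n=n(\epsilon')$ (with $n(\epsilon')\to\infty$ as $\epsilon'\to 0$) at which $R_{\Gamma_{n},u_{n}}(x)$ has been pushed past $-\tfrac{\beta+x}{1+\gamma x}$; attaching one further edge at $u_{n}$, which applies $R\mapsto x\tfrac{1+\gamma R}{\beta+R}$, then drives the total ratio to $-1$, i.e. $Z_{G_{n}}(x)=0$, for some $x$ with $|x|=r^{*}+O(\epsilon')$. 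One pins down the exact zero by letting $x$ traverse the circle $\{|z|=r^{*}+\epsilon'\}$ and applying the argument principle, using that $Z_{G_{n}}$ has no zero in $\overline{D(0,r^{*})}$; this gives a zero of modulus in $(r^{*},r^{*}+O(\epsilon'))\subseteq(r^{*},r^{*}+\epsilon)$ once $\epsilon'$ is small. The main obstacle is the second paragraph: writing down an explicit recursive family whose normalized transfer data provably converges to the extremal configuration of \Cref{lem:diskaround0}, and linearizing the recursion there sharply enough to certify a genuine zero whose modulus tends to $r^{*}$, rather than merely a zero somewhere strictly outside the disk.
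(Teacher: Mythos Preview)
Your proposal is incomplete: you never construct the family $\Gamma_{n}$, and you explicitly flag this as ``the main obstacle''. The plan of building self-similar two-terminal gadgets whose normalized transfer data converges to the extremal configuration of \Cref{lem:diskaround0}, then linearizing and invoking the argument principle, is speculative --- nothing in the proposal certifies that such a family exists, let alone that the induced dynamics has the claimed unstable eigendirection. The Beraha--Kahane--Weiss argument for paths handles only $r$ bounded away from $r^{*}$, so the heart of the lemma is left undone.

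The paper's proof is dramatically simpler and bypasses all of this machinery. Take the star $G_{n}=K_{1,n}$ with uniform field $x$: one computes directly
\[
Z_{G_{n}}(x)=(\beta+x)^{n}+x(1+\gamma x)^{n}.
\]
For any $r$ with $\tfrac{\beta-1}{1-\gamma}<r<\beta$, the inequality $r>\tfrac{\beta-1}{1-\gamma}$ is \emph{equivalent} to $\tfrac{1-\gamma r}{\beta-r}>1$, so
\[
Z_{G_{n}}(-r)=(\beta-r)^{n}\Bigl(1-r\bigl(\tfrac{1-\gamma r}{\beta-r}\bigr)^{n}\Bigr)<0
\]
for all sufficiently large $n$, while $Z_{G_{n}}(0)=\beta^{n}>0$; the intermediate value theorem then gives a real root in $(-r,0)$. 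No recursion analysis, no limit of transfer data, no argument principle --- just an explicit one-parameter family and a sign change. The key observation you missed is that the defining equation of $r^{*}$ is precisely the crossover $\beta-r=1-\gamma r$, which says that the contribution of the ``spin-$1$ center'' beats the ``spin-$0$ center'' term at $x=-r$ once $r>r^{*}$; the star amplifies this imbalance multiplicatively.
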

\begin{proof}
Consider the graph $G_{n}$ with $V(G_{n})=\{v_{0},v_{1},\dots,v_{n}\}$ and 
$$E(G_{n})=\{\{v_{0},v_{1}\},\{v_{0},v_{2}\},\dots,\{v_{0},v_{n}\}\}$$ (a star graph). It's easy to compute that $Z_{G_{n}}(x)=(\beta +x)^{n}+x(1+\gamma x)^{n}$. For any $\frac{\beta-1}{1-\gamma}<r<\beta$, we have $\frac{1-\gamma r}{\beta -r}>1$. So for sufficiently large $n$,  $Z_{G_{n}}(-r)=(\beta-r)^{n}(1-r(\frac{1-\gamma r}{\beta -r})^{n})<0$. But $Z_{G_{n}}(0)=\beta^{n}>0$, so it follows from the intermediate value theorem that the polynomial $Z_{G_{n}}(x)$ has a root in the real interval $(-r,0)$.
\end{proof}
\begin{proof}[Proof of \Cref{thm:diskaround0}]
The zero-freeness follows from statement (2) of \Cref{lem:diskaround0}, and the optimality of the radius follows from \Cref{lem:optimalradius}.
\end{proof}

\subsection{Recursion with Uncentered Circular Regions}

In this section, we present a proof of \Cref{thm:notthreshold}. The idea is similar to the proof of \Cref{thm:diskaround0} in \Cref{subsec:circular}, but this time we recurse with a circular region that's \textit{not} centered at 0. The main new ingredient is that we treat isolated vertices and non-isolated vertices of the graph separately through casework.
\label{subsec:uncentered}
\begin{lemma}\label{lem:uncenteredrecur}
Let $g:(1,+\infty)\rightarrow (0,1)$ be the following function:
\begin{equation}\label{eq:defofg}
g(\beta)=\max\left\{\frac{\beta-2}{\beta^{2}-1},\frac{(\beta-1)^{2}}{\beta^{3}+\beta^{2}-\beta}\right\}.
\end{equation}
For fixed real parameters $\beta,\gamma$ with $\gamma<0$ and $\beta+\gamma>2-g(\beta)$, there exists an open neighborhood $U$ of $[\frac{\gamma}{\beta},1]$ on the complex plane and a closed disk $K\subset\mathbb{C}$ such that for any graph $G$ and any external field $\bflam\in U^{V(G)}$,  
\begin{enumerate}[label = (\arabic*)]
\setlength\itemsep{0pt}
\item  the ratio $R_{G,v}(\bflam)$ is well defined and falls in $K$ for any non-isolated vertex $v\in V(G)$;
\item the partition function $Z_{G}(\bflam)$ is nonzero.
\end{enumerate}
\end{lemma}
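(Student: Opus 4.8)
The plan is to prove \Cref{lem:uncenteredrecur} by strong induction on $|V(G)|+|E(G)|$, mirroring the proof of \Cref{lem:diskaround0} but with $K$ a closed disk centered at a \emph{positive real point} rather than at the origin. First I would fix the geometric data: choose a center $t>0$ and a radius $\rho>0$ with $t-\rho<\gamma/\beta$ and $t+\rho>1$, set $K=\{z\in\mathbb{C}:|z-t|\le\rho\}$, and take $U$ to be a small open neighborhood of $[\gamma/\beta,1]$ with $U\subseteq K$ and $(\gamma/\beta)\,U\subseteq U$ (for instance $U=\bigcup_{k\ge 0}(\gamma/\beta)^kV$ for a tiny disk $V$ around $[\gamma/\beta,1]$; this is still a neighborhood of $[\gamma/\beta,1]$ since the sets $(\gamma/\beta)^kV$ shrink to $0\in[\gamma/\beta,1]$). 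Since $\beta+\gamma>2-g(\beta)>1$ and $\gamma<0$ we have $\beta>1$ and $|\gamma|<\beta$, hence $\gamma/\beta\in(-1,0)$, and the eventual choice of $t,\rho$ will keep $-\beta\notin K$ and $-1\notin K$. As in \Cref{lem:diskaround0}, a self-loop at a vertex $v$ amounts to replacing $\lambda_v$ by $(\gamma/\beta)\lambda_v$ and multiplying $Z_G$ by $\beta\ne 0$; since $U$ is $(\gamma/\beta)$-invariant and $U\subseteq K$, this keeps us inside the hypotheses, so I may assume $G$ is loopless.

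The induction hypothesis is that both conclusions hold for every graph with strictly smaller $|V|+|E|$ and every field in $U$ at its vertices (for an isolated vertex the ratio is simply $\lambda_v\in U\subseteq K$, so the isolated/non-isolated distinction only matters for bookkeeping). For the induction step at a non-isolated vertex $v$ I pick an incident edge $\{u,v\}$, set $G_1=G-\{u,v\}$, and abbreviate $A=[Z_{G_1,u,v}(\bflam)]_{0,0}$, $B=[Z_{G_1,u,v}(\bflam)]_{1,0}$, $C=[Z_{G_1,u,v}(\bflam)]_{0,1}$, $D=[Z_{G_1,u,v}(\bflam)]_{1,1}$, so that exactly as in \Cref{lem:diskaround0} one has $R_{G,v}(\bflam)=\frac{C+\gamma D}{\beta A+B}$; the induction hypothesis (2) applied to $G_1$ with $u$ and $v$ pinned to spin $0$ gives $A\ne 0$, so I may divide through and work with $b=B/A$, $c=C/A$, $d=D/A$ and the map $f(r)=\frac{1+\gamma r}{\beta+r}$. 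The new ingredient, relative to \Cref{lem:diskaround0}, is to split on $\deg_G v$:
\begin{itemize}
\item If $\deg_G v=1$ then $v$ is isolated in $G_1$, which forces $c=\lambda_v$ and $d=\lambda_v\,b$, whence $R_{G,v}(\bflam)=\lambda_v\,f(b)$; the induction hypothesis applied to $G-v$ gives $b\in K$ (it equals $\lambda_u\in U\subseteq K$ when $\deg_G u=1$, and lies in $K$ otherwise).
\item If $\deg_G v\ge 2$ then pinning $u$, resp. $v$, to spin $0$ and invoking the induction hypothesis gives $c\in K$ and $b\in K$, while contracting $\{u,v\}$ to a vertex $w$ gives $d=\lambda_v\,\rho$ with $\rho\in K$; indeed $w$ inherits the (at least one) other edge of $v$, so either $w$ is non-isolated and the induction hypothesis applies, or $w$ carries only self-loops and $\rho=(\gamma/\beta)^{\ell}\lambda_u\in U\subseteq K$.
\end{itemize}
Thus the whole step reduces to two closure properties of the pair $(U,K)$:
\begin{enumerate}
\item[(I)] $\lambda\,f(z)\in K$ for all $\lambda\in U$ and all $z\in K$;
\item[(II)] $\dfrac{c+\gamma d}{\beta+b}\in K$ for all $b,c\in K$ and all $d$ in the Minkowski product $U\cdot K=\{xy:x\in U,\,y\in K\}$.
\end{enumerate}
Granting (I) and (II) we get $R_{G,v}(\bflam)\in K$ (and $\beta+b\ne 0$ since $-\beta\notin K$, so the ratio is well defined), and then $Z_G(\bflam)=(1+R_{G,v}(\bflam))(\beta A+B)\ne 0$ because $-1\notin K$, closing the induction.

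It then remains to choose $t,\rho$ (and the size of $V$) so that (I) and (II) hold, and this is where the threshold $\beta+\gamma>2-g(\beta)$ and the two branches of $g$ enter. By conformality a Möbius image of a disk centered on $\mathbb{R}\cup\{\infty\}$ is again such a disk, and one checks that the extremal configurations in (I) and (II) are attained with $b,c,d,\lambda$ real and on the boundaries of $U$ and $K$; reducing to those cases turns (I) and (II), together with the covering constraints $t-\rho\le\gamma/\beta$ and $t+\rho\ge 1$, into a finite system of polynomial inequalities in $t$ and $\rho$. Solving this system, one finds it is feasible precisely when $\beta+\gamma>2-g(\beta)$, with the two expressions $\frac{\beta-2}{\beta^2-1}$ and $\frac{(\beta-1)^2}{\beta^3+\beta^2-\beta}$ defining $g$ emerging as the two constraints that can become binding. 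I expect this optimization — pinning down the worst cases in (I) and (II) and extracting the optimal $(t,\rho)$ — to be the bulk of the work and the main obstacle; by contrast the inductive scaffolding, the self-loop reduction, and the degree casework are routine and parallel \Cref{lem:diskaround0} and \Cref{lem:positivity}.
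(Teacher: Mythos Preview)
Your inductive scaffolding is right, and the reduction to closure properties (I) and (II) is a clean way to organize the argument. The gap is in the claim that (I), (II) and the containment $U\subseteq K$ can be satisfied simultaneously throughout the range $\beta+\gamma>2-g(\beta)$; in fact they cannot, and this is exactly why the paper keeps the four-way split on $(\deg_G u,\deg_G v)$ and does \emph{not} demand $U\subseteq K$.

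Here is a concrete obstruction. Take $\beta=3$, $\gamma=-1$ (so $\beta+\gamma=2>2-g(3)$). Write $a=t-\rho$, $b=t+\rho$ for the real endpoints of $K$; your covering condition $t+\rho>1$ says $b>1$. In (II) choose $b'=c'=a$ and $d'=b$ (legitimate since $1\in U$ and $b\in K$, so $b=1\cdot b\in U\cdot K$). Then
\[
\frac{c'+\gamma d'}{\beta+b'}=\frac{a-b}{3+a},
\]
which is real and negative, and membership in $K$ forces $\dfrac{a-b}{3+a}\ge a$, i.e.\ $b\le -a(2+a)$. But $-a(2+a)=2x-x^{2}$ with $x=-a\in(0,1)$, hence $-a(2+a)<1$; together with $b>1$ this is impossible. (The paper's feasible disk here has $a=-1/3$ and $b\approx 1/2$, so $1\notin K$ and $U\not\subseteq K$.)

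The reason the paper escapes this is that the non-isolated hypothesis in statement~(1) is not bookkeeping: it lets one take $K$ strictly smaller than any set containing $U$ (indeed with right endpoint $\le 1$), and then handle the sub-cases $\deg_G u=1$ separately, where the ratio factors as $f(\lambda_u)\cdot(\text{something})$ and one only needs $f(U)\cdot U\subset K$ or $f(U)\cdot K\subset K$ rather than the full (II). Folding those sub-cases into (II) by forcing $\lambda_u\in U\subseteq K$ is precisely what breaks feasibility. If you restore the split on $\deg_G u$ (so four cases instead of two) and drop $U\subseteq K$, your outline becomes essentially the paper's proof.
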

Before proving the lemma, we first need to specify the regions $U$ and $K$. Let $U=\{z\in \mathbb{C}:\exists x\in [\frac{\gamma}{\beta},1]\text{ s.t. }|z-x|< \varepsilon\}$, where $\varepsilon>0$ is a sufficiently small constant. Let $K$ be the closed disk with the real interval $[a,b]$ as its diameter, where $a\in(-1,0)$ and $b\in(0,+\infty)$ are constants to be determined later. In fact, during the proof of the lemma, we will impose several requirements on $a$ and $b$, and in the end we will show that these requirements are jointly satisfiable in the parameter range $\{(\beta,\gamma):\gamma<0\text{ and }\beta+\gamma>2-g(\beta)\}$. 

\begin{proof}[Proof of \Cref{lem:uncenteredrecur}]
Since $\gamma<0$ and $\beta\geq 2-\gamma-g(\beta)>-\gamma$, we have $\frac{\gamma}{\beta}\in (-1,0)$. Given the choice of the region $U$, it is clear that multiplying any component of $\bflam$ by $\frac{\gamma}{\beta}$ doesn't change the fact that $\bflam\in U^{V(G)}$. So we may assume $E(G)$ doesn't contain self-loops.

We then perform induction on $|E(G)|$ for the two statements together. The base case is where $E(G)=\emptyset$. In this case all vertices are isolated, and statement (1) holds vacuously. As for statement (2), we have $Z_{G}(\bflam)=\prod_{v\in V(G)}(1+\lambda_{v})\neq 0$, since $-1\not\in U$ provided that $\varepsilon$ is sufficiently small. Now assume $G$ is a graph with $|E(G)|\geq 1$. Assume also that the induction hypotheses (1) and (2) hold for all graphs with a smaller number of edges. Consider a non-isolated vertex $v\in V(G)$. The induction step is to prove statements (1) and (2) for the pair $(G,v)$.

Let vertex $u$ be a neighbor of $v$. We first note that statement (1) implies (2): granted (1), we have $Z_{G}(\bflam)=[Z_{G,v}(\bflam)]_{0}\cdot(1+R_{G,v})\neq 0$, since $-1\not\in K$ due to the requirement $a\in(-1,0)$. In the following, we prove statement (1) by dividing into 4 cases.

\vspace{6pt}
Case 1: $\deg_{G} v=1$ and $\deg_{G} u=1$. In this case, the edge $\{u,v\}$ is itself a connected component of $G$. Let $H$ be the graph obtained from $G$ by deleting the vertices $u$ and $v$. Let $\bflamp$ be the restriction of $\bflam$ to $V\setminus\{u,v\}$. We have 
$[Z_{G,v}(\bflam)]_{0}=(\beta+\lambda_{u})Z_{H}(\bflamp)\neq 0$,
by the induction hypothesis on $H$ and that $-\beta\not\in U$. This means $R_{G,v}$ is well defined. We then have $R_{G,v}=\dfrac{1+\gamma\lambda_{u}}{\beta+\lambda_{u}}\lambda_{v}\in f(U)\cdot U$, where $f$ denotes the M{\"o}bius transformation $r\mapsto \frac{1+\gamma r}{\beta +r}$ and $f(U)\cdot U$ denotes the Minkowski product of $f(U)$ with $U$. In order to ensure statement (1), we want $f(U)\cdot U\subset K$. Since $U$ is taken to be an arbitrarily small neighborhood of the real interval $J:=[\frac{\gamma}{\beta},1]$, the requirement on $K$ is simply $f(J)\cdot J\subset \mathrm{int}(K)$ (the interior of $K$). Using the parameter relations $\beta+\gamma>1$ and $\gamma<0$, it's easy to verify that $f(J)\cdot J$ is contained in the interval $\left(\frac{\gamma}{\beta},\frac{\beta+\gamma^{2}}{\beta^{2}+\gamma}\right]$. So $f(J)\cdot J\subset \mathrm{int}(K)$ holds if $a$ and $b$ satisfy the following two requirements: 
\begin{align}
\label{eq:req1}
a&\leq \frac{\gamma}{\beta},\\
\label{eq:req2}
b&>\frac{\beta+\gamma^{2}}{\beta^{2}+\gamma}.
\end{align}

\vspace{6pt}
Case 2: $\deg_{G}v\geq 2$ and $\deg_{G}u=1$. Let $H$ be the graph obtained from $G$ by deleting the vertex $u$ and the edge $\{u,v\}$. Let $\bflamp$ be the restriction of $\bflam$ to $V\setminus\{u\}$. Since $v$ is not an isolated vertex in $H$, we may use the induction hypothesis (1) on the pair $(H,v)$ and get $R_{H,v}\in K$. Now $[Z_{G,v}]_{0}=(\beta+\lambda_{u})[Z_{H,v}]_{0}\neq 0$, by the induction hypothesis on $(H,v)$ and that $-\beta\not\in U$. This means $R_{G,v}$ is well-defined. We then have
$$R_{G,v}=\frac{[Z_{G,v}]_{1}}{[Z_{G,v}]_{0}}=\frac{[Z_{H,v}]_{0}+\lambda_{u}\gamma[Z_{H,v}]_{0}}{\beta[Z_{H,u}]_{0}+\lambda_{u}[Z_{H,u}]_{0}}=f(\lambda_{u})\cdot R_{H,v}\in f(U)\cdot K.$$
In order to ensure statement (1), we want $f(U)\cdot K\subset K$. We impose the requirement 
\begin{equation}\label{eq:req3}
-a<b\leq 1.
\end{equation}
The requirements \eqref{eq:req1} and \eqref{eq:req3} together imply that 
$$\left\{z\in\mathbb{C}:|z|\leq \left|\frac{\gamma}{\beta}\right|\right\}\subseteq K\subseteq \{z\in\mathbb{C}:|z|\leq 1\}.$$
In particular, we have $\frac{\gamma}{\beta}\cdot K\subset K$ and by convexity of $K$, also $J\cdot K\subset K$ (recall that $J:=[\frac{\gamma}{\beta},1]$). Now, since $U$ is an arbitrarily small neighborhood of $J$, we have $f(U)\subset f(J)+W$ (the Minkowski sum of sets), where $W$ is an arbitrarily small neighborhood of 0. Using the parameter relations $\beta+\gamma>1$ and $\gamma<0$, it's easy to verify there exists a constant $0<c<1$ such that $f(J)\subset c\cdot J$. We thus have 
$$f(U)\cdot K\subset (f(J)+W)\cdot K\subset c\cdot J\cdot K+W\cdot K\subset c\cdot K+W\cdot K\subset K.$$
For the last inclusion, note that since $K$ is convex, $c\cdot K+(1-c)\cdot K=K$, so it suffices to let $W$ be sufficiently small such that $W\cdot K\subset (1-c)\cdot K$. 

\vspace{6pt}
Case 3: $\deg_{G}v=1$ and $\deg_{G}u\geq 2$. Let $H$ be the graph obtained from $G$ by deleting the vertex $v$ and the edge $\{u,v\}$. Let $\bflamp$ be the restriction of $\bflam$ to $V\setminus\{v\}$. Since $u$ is not an isolated vertex in $H$, we may use the induction hypothesis (1) on the pair $(H,u)$ and get $R_{H,u}\in K$. Now $[Z_{G,v}]_{0}=\beta[Z_{H,u}]_{0}+[Z_{H,u}]_{1}=[Z_{H,u}]_{0}\cdot (\beta+R_{H,u})\neq 0$, by the induction hypothesis on $(H,u)$ and that $-\beta\not\in K$. This means that $R_{G,v}$ is well defined. We then have
$$R_{G,v}=\frac{[Z_{G,v}]_{1}}{[Z_{G,v}]_{0}}=\frac{\lambda_{v}[Z_{H,u}]_{0}+\lambda_{v}\gamma[Z_{H,u}]_{1}}{\beta[Z_{H,u}]_{0}+[Z_{H,u}]_{1}}=f(R_{H,u})\cdot \lambda_{v}\in f(K)\cdot U.$$
In order to ensure statement (1), we want $f(K)\cdot U\subset K$. By the conformity of M\"{o}bius transformations, the map $f$ preserves orthogonality with the real line. $f$ is also decreasing as a real function on $(-\beta,+\infty)$. It follows that $f(K)$ is the disk with the real interval $[f(b),f(a)]$ as its diameter. So if we impose the requirements
\begin{equation}\label{eq:req4}
a<f(b)\text{ and }f(a)<b,
\end{equation}
then $f(K)\subset\mathrm{int}(K)$. Let $0<c<1$ be a constant so that $f(K)\subset c\cdot K$ and let $W=\{z\in\mathbb{C}:|z|\leq \varepsilon\}$ so that $U=J+W$. We thus have
$$f(K)\cdot U\subset c\cdot K\cdot (J+W)\subset c\cdot K\cdot J+c\cdot W\cdot K\subset c\cdot K+c\cdot W\cdot K\subset K.$$
For the last inclusion, since $K$ is convex, $c\cdot K+(1-c)\cdot K=K$, so it suffices to let $W$ be sufficiently small such that $c\cdot W\cdot K\subset (1-c)\cdot K$. 

\vspace{6pt}
Case 4: $\deg_{G}v\geq 2$ and $\deg_{G}u\geq 2$. Let $G_{1}$ be the graph obtained from $G$ by deleting the edge $\{u,v\}$. To shorten expressions, let $A=\left[Z_{G_{1},u,v}(\bflam)\right]_{0,0}$, $B=\left[Z_{G_{1},u,v}(\bflam)\right]_{1,0}$, $C=\left[Z_{G_{1},u,v}(\bflam)\right]_{0,1}$, and $D=\left[Z_{G_{1},u,v}(\bflam)\right]_{1,1}$. Since neither $u$ nor $v$ is an isolated vertex in $G_{1}$, we may use the induction hypothesis in similar ways as we did in \Cref{lem:diskaround0}:
\begin{itemize}
\setlength\itemsep{0pt}
\item If we define $\bflamp\in K^{V(G)}$ to be $\lambda'_{v}=0$ and $\lambda'_{x}=\lambda_{x}$ for all $x\neq v$, we would have
$R_{G_{1},u}(\bflamp)=\dfrac{\left[Z_{G_{1},u}(\bflamp)\right]_{1}}{\left[Z_{G_{1},u}(\bflamp)\right]_{0}}=\dfrac{B}{A}$. The induction hypothesis (1) gives $\frac{B}{A}\in K$.
\item If we define $\bflamp\in K^{V(G)}$ to be $\lambda'_{u}=0$ and $\lambda'_{x}=\lambda_{x}$ for all $x\neq u$, we would have
$R_{G_{1},v}(\bflamp)=\dfrac{\left[Z_{G_{1},v}(\bflamp)\right]_{1}}{\left[Z_{G_{1},v}(\bflamp)\right]_{0}}=\dfrac{C}{A}$. The induction hypothesis (1) gives $\frac{C}{A}\in K$.
\item Consider $G_{2}=G/\{u,v\}$. This means contracting the edge $\{u,v\}$: create a new vertex $w$, delete $\{u,v\}$ from $E(G)$, and in every other member of $E(G)$, substitute $w$ for any appearance of $u$ and $v$ as endpoints. If we define $\bflamp\in K^{V(G_{2})}$ to be $\lambda'_{w}=\lambda_{u}$ and $\lambda'_{x}=\lambda_{x}$ for all $x\neq w$, using the induction hypothesis on $G_{2}$, we get $\dfrac{\left[Z_{G_{2},w}(\bflamp)\right]_{1}}{\left[Z_{G_{2},w}(\bflamp)\right]_{0}}\in K$. Now, we still have $\left[Z_{G_{2},w}(\bflamp)\right]_{0}=A$, but $D=\lambda_{v}\cdot \left[Z_{G_{2},w}(\bflamp)\right]_{1}$. So $\frac{D}{A}=\lambda_{v}\dfrac{\left[Z_{G_{2},w}(\bflamp)\right]_{1}}{\left[Z_{G_{2},w}(\bflamp)\right]_{0}}\in U\cdot K$. 
\end{itemize}
To analyze the set $U\cdot K$, let $W=\{z\in \mathbb{C}:|z|\leq \varepsilon\}$, and recall that $J$ denotes the interval $[\frac{\gamma}{\beta},1]$. We have the estimation
$$
U\cdot K =(J+W)\cdot K 
\subset J\cdot K+W\cdot K 
=K+W\cdot K 
\subset K+W.
$$
The last inclusion is due to the requirement \eqref{eq:req3}.
Now, $\frac{C+\gamma D}{A}\in K+\gamma U\cdot K\subset K+\gamma K+\gamma W$. Note that the Minkowski sum of two disks is again a disk, and in particular, $K+\gamma K$ is the closed disk with the real interval $[a+\gamma b,b+\gamma a]$ as its diameter. So $$\left|\frac{C+\gamma D}{A}\right|\leq \max\{|a+\gamma b|,|b+\gamma a|\}+|\gamma|\cdot\varepsilon.$$

It follows from $\frac{B}{A}\in K$ and $-\beta\not\in K$ that $\beta A+B\neq 0$, and hence $R_{G,v}=\frac{C+\gamma D}{\beta A+B}$ is well-defined. What's more, $\left|\frac{\beta A+B}{A}\right|\geq \min_{z\in K}|\beta+z|=\beta+a$. Finally, we arrive at the estimation
$$\left|\frac{C+\gamma D}{\beta A+B}\right|=\left|\frac{C+\gamma D}{A}\right|/\left|\frac{\beta A+B}{A}\right|\leq \frac{\max\{|a+\gamma b|,|b+\gamma a|\}+|\gamma|\cdot\varepsilon}{\beta +a}.$$
We impose the final and the most crucial requirement:
\begin{equation}\label{eq:req5}
\max\{|a+\gamma b|,|b+\gamma a|\}< |a|\cdot (\beta+a).
\end{equation}
This says that when $\varepsilon$ is sufficiently small, $|R_{G,v}|=\left|\frac{C+\gamma D}{\beta A+B}\right|<|a|$. Since the disk $\{z\in\mathbb{C}:|z|\leq |a|\}$ is contained in $K$ (by requirement \eqref{eq:req3}), it follows that $R_{G,v}\in K$, concluding statement (1) and the entire induction step for the pair $(G,v)$.

\vspace{6pt}
What remains is to show that all the requirements we imposed on the constants $a$ and $b$ during the course of the proof are jointly satisfiable in the parameter range $\{(\beta,\gamma):\gamma<0\text{ and }\beta+\gamma>2-g(\beta)\}$. We defer this work to the next lemma. 
\end{proof}
\begin{lemma}
Let $g:(1,+\infty)\rightarrow(0,1)$ be the function defined in \eqref{eq:defofg}. For fixed real parameters $\beta,\gamma$ with $\gamma<0$ and $\beta+\gamma>2-g(\beta)$, there are constants $a\in(-1,0)$ and $b\in(0,\infty)$ that satisfy the requirements \eqref{eq:req1}, \eqref{eq:req2}, \eqref{eq:req3}, \eqref{eq:req4} and \eqref{eq:req5}.
\end{lemma}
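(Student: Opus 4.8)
The plan is to regard the five conditions \eqref{eq:req1}--\eqref{eq:req5} as a system for the two unknowns $a$ and $b$ and to produce an explicit solution under the hypothesis. I would first record the cheap structural facts. Since $\beta+\gamma>2-g(\beta)>1$ and $\gamma<0$ we have $\beta>1$, so $\tfrac{\gamma}{\beta}\in(-1,0)$ and all the denominators that arise ($\beta^{2}+\gamma$, $\beta+a$ for $a>-1$, $1-\gamma$, etc.) are positive; requirements \eqref{eq:req1} and \eqref{eq:req3} confine $a$ to $(-1,\tfrac{\gamma}{\beta}]$ and then $b$ to $(-a,1]$. Next I would simplify \eqref{eq:req4}: with $f(r)=\tfrac{1+\gamma r}{\beta+r}$ real and strictly decreasing on $(-\beta,\infty)$, from $b\le 1$ we get $f(b)\ge f(1)=\tfrac{1+\gamma}{\beta+1}$, and the inequality $\tfrac{1+\gamma}{\beta+1}>\tfrac{\gamma}{\beta}$ is exactly $\beta>\gamma$, so $f(b)\ge f(1)>\tfrac{\gamma}{\beta}\ge a$ automatically and \eqref{eq:req4} collapses to the single lower bound $b>f(a)$. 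Finally I would unpack \eqref{eq:req5}: for $a<0<b$ and $\gamma<0$ the sign of $a+\gamma b$ is negative and that of $b+\gamma a$ is positive, so (using $|a|=-a$) \eqref{eq:req5} becomes the pair of upper bounds $b<\tfrac{-a(\beta+a-1)}{-\gamma}$ and $b<-a(\beta+a+\gamma)$, the first of which makes sense only when the side condition $\beta+a>1$ holds. After these reductions, for each admissible $a$ the valid $b$ form the interval
\[
\Big(\max\big\{-a,\ \tfrac{\beta+\gamma^{2}}{\beta^{2}+\gamma},\ f(a)\big\},\ \ \min\big\{1,\ \tfrac{-a(\beta+a-1)}{-\gamma},\ -a(\beta+a+\gamma)\big\}\Big],
\]
and the lemma amounts to choosing an admissible $a$ making this interval nonempty.

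Now observe that, since $g(\beta)$ is a maximum of two quantities, the hypothesis $\beta+\gamma>2-g(\beta)$ is equivalent to the disjunction ``$\beta+\gamma>2-\tfrac{\beta-2}{\beta^{2}-1}$ \emph{or} $\beta+\gamma>2-\tfrac{(\beta-1)^{2}}{\beta^{3}+\beta^{2}-\beta}$''; clearing denominators, the first disjunct is $\gamma>\tfrac{\beta^{2}(2-\beta)}{\beta^{2}-1}$ and the second $\gamma>\tfrac{-\beta^{4}+\beta^{3}+2\beta^{2}-1}{\beta(\beta^{2}+\beta-1)}$. I would give one construction valid under the first disjunct and a second valid under the second. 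Under the first disjunct take $a=\tfrac{\gamma}{\beta}$, so that \eqref{eq:req1} is tight and $f(a)=\tfrac{\beta+\gamma^{2}}{\beta^{2}+\gamma}$ coincides with the \eqref{eq:req2}-bound; the relevant (smallest) upper bound simplifies to $\tfrac{\beta^{2}-\beta+\gamma}{\beta^{2}}$, and nonemptiness of the $b$-interval reduces to $\tfrac{\beta+\gamma^{2}}{\beta^{2}+\gamma}<\tfrac{\beta^{2}-\beta+\gamma}{\beta^{2}}$; clearing denominators yields a quadratic in $\gamma$ whose discriminant is the perfect square $\beta^{2}(2\beta^{2}-2\beta-1)^{2}$, and it factors to show this is precisely the first disjunct. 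Under the second disjunct take instead a more negative $a$ (an explicit value strictly between $-1$ and $\tfrac{\gamma}{\beta}$, chosen to balance the lower bound $f(a)$ against the upper bound $-a(\beta+a+\gamma)$); here the binding comparison is $f(a)<-a(\beta+a+\gamma)$, equivalently $p(a)<0$ for the cubic $p(a)=a^{3}+(2\beta+\gamma)a^{2}+(\beta^{2}+\beta\gamma+\gamma)a+1$, and a similar factorization makes this hold under the second disjunct, while the remaining (non-tight) comparisons follow using only $\beta+\gamma>1$ and $\gamma<0$.

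The main obstacle is the second construction: one must verify that, at the chosen $a$, every one of the three lower bounds lies strictly below every one of the three upper bounds, and that the single genuinely tight comparison is governed exactly by the second disjunct. This is elementary but intricate algebra; as in the companion computation behind \Cref{lem:diskaround0}, the thing that rescues it is that the discriminants produced when one eliminates $a$ (and $\gamma$) come out as perfect squares, so the cubic and quadratic that appear factor cleanly and the feasibility threshold is pinned down exactly rather than merely estimated. A secondary nuisance, to be handled by a short separate check, is keeping the side conditions $a>-1$ and $\beta+a>1$ compatible with $a\le\tfrac{\gamma}{\beta}$ in the awkward corner where $\beta$ is close to its smallest admissible value (which lies between $1$ and $2$) or $\gamma$ is strongly negative; there one invokes the hypothesis $\beta+\gamma>2-g(\beta)$ a final time.
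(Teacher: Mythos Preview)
Your preliminary reductions are correct and match what the paper uses implicitly: \eqref{eq:req4} does collapse to $b>f(a)$ once $a\le\gamma/\beta$ and $b\le 1$, and \eqref{eq:req5} does unpack into the two linear upper bounds $b<\tfrac{-a(\beta+a-1)}{-\gamma}$ and $b<-a(\beta+a+\gamma)$. The two constructions you have in mind are also the right ones --- the paper takes $a=\gamma/\beta$ in one case and, explicitly, $a=-1/\beta$ in the other --- but your case split is on the wrong criterion, and this leaves a genuine gap.

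You split on which disjunct of $g(\beta)=\max\{g_1,g_2\}$ holds. The paper instead splits on whether $\gamma\le-1$ or $\gamma>-1$, and this is the right split because it governs two things simultaneously. First, it decides which of your two upper bounds is binding: since $|a+\gamma b|-|b+\gamma a|=-(a+b)(1+\gamma)$ and $a+b>0$ by \eqref{eq:req3}, the first bound binds exactly when $\gamma\le-1$ and the second when $\gamma>-1$. Second, via the identity
\[
3-\beta-g_1(\beta)=g_3(\beta)\bigl(3-\beta-g_2(\beta)\bigr),\qquad g_3(\beta)=\frac{\beta^3+\beta^2-\beta}{\beta^3+\beta^2-\beta-1}>1,
\]
the sign of $1+\gamma$ also determines which disjunct the hypothesis $\beta+\gamma>2-g(\beta)$ reduces to. Your split delivers the second piece of information but not the first: the first disjunct can hold with $\gamma>-1$ (e.g.\ $\beta=3$, $\gamma=-\tfrac12$), and there your claimed ``relevant (smallest) upper bound'' $\tfrac{\beta^2-\beta+\gamma}{\beta^2}$ is \emph{not} the smallest --- the other bound $-a(\beta+a+\gamma)$ is smaller --- so the single comparison you check does not settle nonemptiness. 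The same mismatch afflicts your second construction when the second disjunct holds with $\gamma\le-1$. The fix is exactly the paper's: split on $\gamma\lessgtr-1$; then in each case there is one binding upper bound, one effective disjunct, and one explicit value of $a$ (namely $\gamma/\beta$ when $\gamma\le-1$, and $-1/\beta$ when $\gamma>-1$, the latter being admissible precisely because $-1/\beta\le\gamma/\beta$ there) for which the binding comparison factors cleanly and reduces to that disjunct.
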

\begin{proof}
Let functions $g_{1},g_{2},g_{3}:(1,+\infty)\rightarrow \mathbb{R}$ be defined by 
$$g_{1}(\beta):=\frac{\beta-2}{\beta^{2}-1},\quad g_{2}(\beta):=\frac{(\beta-1)^{2}}{\beta^{3}+\beta^{2}-\beta}\text{ and }g_{3}(\beta):=\frac{\beta^{3}+\beta^{2}-\beta}{\beta^{3}+\beta^{2}-\beta-1}.$$
So $g(\beta)=\max\{g_{1}(\beta),g_{2}(\beta)\}$. Also observe that $g_{3}(\beta)>1$ for all $\beta\in(1,+\infty)$. What's more, by direct computation, we have the relation
\begin{equation}\label{eq:g1g2g3}
3-\beta-g_{1}(\beta)=g_{3}(\beta)\cdot\left(3-\beta-g_{2}(\beta)\right).
\end{equation}

We divide the proof of the lemma into two cases.

Case 1: $\gamma\leq -1$. In this case, for $\beta$ ranging in $(1,+\infty)$, by \eqref{eq:g1g2g3} and $g_{3}(\beta)>1$,
$$
3-\beta-g_{2}(\beta)<1+\gamma\quad\Rightarrow\quad  3-\beta-g_{1}(\beta)<1+\gamma,
$$
Therefore
$$\beta+\gamma>2-\max\{g_{1}(\beta),g_{2}(\beta)\}\quad\Leftrightarrow\quad \beta+\gamma>2-g_{1}(\beta).$$
We choose $a=\frac{\gamma}{\beta}$ and $b=\max\left\{\frac{\beta+\gamma^{2}}{\beta^{2}+\gamma},-a\right\}+\varepsilon=\max\{f(a),-a\}+\varepsilon$, where $\varepsilon>0$ is a sufficiently small constant. The requirements \eqref{eq:req1}, \eqref{eq:req2}, \eqref{eq:req3} and \eqref{eq:req4} are easy to verify. Since $b>-a$ and $\gamma\leq -1$, the left hand side of \eqref{eq:req5} simplifies to $-a-\gamma b$, and the right hand side simplifies to $-a(\beta+a)$. Since the $\varepsilon$ in the formula for $b$ is arbitrarily small, \eqref{eq:req5} reduces to
$$
-\frac{\gamma}{\beta}-\gamma \max\left\{f\left(\frac{\gamma}{\beta}\right),-\frac{\gamma}{\beta}\right\}<-\frac{\gamma}{\beta} \left(\beta +\frac{\gamma}{\beta}\right),
$$
which simplifies to two inequalities:
$$\frac{\gamma}{\beta^{2}(\beta^{2}+\gamma)}(\beta-\gamma)(\beta^{2}-1)\left(\beta+\gamma-2+\frac{\beta-2}{\beta^{2}-1}\right)<0,$$
and
$$\frac{\gamma}{\beta}\left(\beta+\gamma-1+\frac{\gamma}{\beta}\right)<0.$$
The first of the two inequalities is clearly satisfied since we know that $\beta+\gamma>2-g_{1}(\beta)$. The second is satisfied because 
\begin{align*}
\beta+\gamma-1+\frac{\gamma}{\beta}&> \beta+\gamma-2+\frac{2-g_{1}(\beta)}{\beta} &(\text{since }\beta+\gamma>2-g_{1}(\beta))\\
&> \beta+\gamma-2+\frac{1}{\beta} &(\text{since }g_{1}(\beta)<1)\\
&> \beta+\gamma-2+\frac{\beta-2}{\beta^{2}-1}\\
&>0 &(\text{since }\beta+\gamma>2-g_{1}(\beta)).
\end{align*}

Case 2: $\gamma>-1$. In this case, for $\beta$ ranging in $(1,+\infty)$, by \eqref{eq:g1g2g3} and $g_{3}(\beta)>1$,
$$
3-\beta-g_{1}(\beta)<1+\gamma\quad\Rightarrow\quad  3-\beta-g_{2}(\beta)<1+\gamma,
$$
Therefore
$$\beta+\gamma>2-\max\{g_{1}(\beta),g_{2}(\beta)\}\quad\Leftrightarrow\quad \beta+\gamma>2-g_{2}(\beta).$$
We choose $a=-\frac{1}{\beta}$ and $b=f(a)+\varepsilon$, where $\varepsilon>0$ is a sufficiently small constant. \eqref{eq:req1}, \eqref{eq:req2}, \eqref{eq:req3} and \eqref{eq:req4} are easy to verify. Since $b>-a$ and $\gamma>-1$, the left hand side of \eqref{eq:req5} simplifies to $b+\gamma a$, and the right hand side simplifies to $-a (\beta+a)$. Since the $\varepsilon$ in the formula for $b$ is arbitrarily small, \eqref{eq:req5} reduces to
$$
f\left(-\frac{1}{\beta}\right)-\gamma\cdot\frac{1}{\beta}<-\frac{\gamma}{\beta} \left(\beta -\frac{1}{\beta}\right),
$$
which simplifies to $$\frac{\beta^{3}+\beta^{2}-\beta}{\beta^{2}(\beta^{2}-1)}\left(\beta+\gamma-2+\frac{(\beta-1)^{2}}{\beta^{3}+\beta^{2}-\beta}\right)>0.$$
It is satisfied since $\beta+\gamma>2-g_{2}(\beta)$.
\end{proof}
Now we are ready to prove \Cref{thm:notthreshold}. 
\begin{proof}[Proof of \Cref{thm:notthreshold}]
By symmetry between $\beta$ and $\gamma$, we may assume $\gamma<0$ and thus $\beta+\gamma>2-g(\beta)$. The theorem now follows by combining \Cref{lem:zerofreeness} with the statement (2) in  \Cref{lem:uncenteredrecur}.
\end{proof}

\section{Concluding Remarks}\label{sec:concluding}

The obvious problem left open by this work is to fully classify the complexity of approximating $Z_{G}$ in the parameter range $1\leq \beta+\gamma<2$ and (without loss of generality) $\gamma<0$. Observe that there  is an NP-hard region in this range: when $(\beta,\gamma)$ is sufficiently close to $(1,0)$, by a 2-thickening (i.e. replacing every edge by 2 parallel edges) we get a reduction from the same problem at $A=\begin{bmatrix}
\beta^{2} & 1\\ 1 & \gamma^{2}
\end{bmatrix}$, which lies in the region of ``non-uniqueness'' and is known to be NP-hard by \cite{sly2012computational}. However, this only gives us a small bounded region of NP-hardness, since the region of non-uniqueness is bounded (for a rough image, see \Cref{fig:results}). 

\Cref{thm:notthreshold} shows that in the other direction, there also exists some tractable region in the range $\{(\beta,\gamma):\gamma<0\text{ and }1\leq \beta+\gamma\leq 2\}$. Although the region where tractability is proved extends to infinity, it is rather thin (having width $g(\beta)\approx 0.1$ for small $\beta$) and its width tends to zero as $\beta\rightarrow +\infty$ (we have $g(\beta)=O(1/\beta)$). Is it possible to prove larger tractable regions?

\begin{problem}
Does there exist some $\varepsilon>0$ such that approximating $Z_{G}$ is tractable whenever $\min\{\beta,\gamma\}<0$ and $\beta+\gamma>2-\varepsilon$? 
\end{problem}

Possibly the best hope for a complete classification of approximation complexity in the range $\{(\beta,\gamma):\gamma<0\text{ and }1\leq \beta+\gamma\leq 2\}$ is to extend the uniqueness line in the positive quadrant to the negative regime. 

\begin{problem}
Is there a natural extension of the uniqueness/non-uniqueness phase transition to the case where $\min\{\beta,\gamma\}<0$?
\end{problem}

Note that our method for proving \Cref{thm:FPRAS} is to transform the problem to another problem with exclusively nonnegative parameters and use the techniques developed specifically for nonnegative problems. Interestingly, \Cref{thm:positivity} shows that the partition function is always positive in the range $\beta+\gamma\geq 1$. This points to another direction: can we reduce the problem to an ``intrinsically positive'' one?

\begin{problem}
Is it possible to transform the problem of computing $Z_{G}$ in the range $\beta+\gamma\geq 1$ to a problem with only nonnegative parameters, like the way we did in \Cref{subsec:prelimFPRAS}?
\end{problem}

\section*{Acknowledgements}
We thank Mingji Xia for many very helpful conversations about this work.

\bibliographystyle{alpha}
\bibliography{reference}
\end{document}